\documentclass[%
 aip,
 amsmath,amssymb,
 reprint,%
 onecolumn,
 nofootinbib
]{revtex4-2}

\usepackage{graphicx}
\usepackage{dcolumn}
\usepackage{bm}

\usepackage[utf8]{inputenc}
\usepackage[T1]{fontenc}
\usepackage{mathptmx}
\usepackage{etoolbox}

\usepackage{tikz}

\makeatletter
\def\@email#1#2{%
 \endgroup
 \patchcmd{\titleblock@produce}
  {\frontmatter@RRAPformat}
  {\frontmatter@RRAPformat{\produce@RRAP{*#1\href{mailto:#2}{#2}}}\frontmatter@RRAPformat}
  {}{}
}%
\makeatother


\newcommand{\N}{\mathbb{N}}
\newcommand{\Z}{\mathbb{Z}}
\newcommand{\R}{\mathbb{R}}
\newcommand{\C}{\mathbb{C}}
\newcommand{\T}{\mathbb{T}}
\newcommand{\Hi}{\mathcal{H}}
\newcommand{\set}[1]{\left\{#1\right\}}
\newcommand{\eu}{\mathrm{e}}
\newcommand{\iu}{\mathrm{i}}
\newcommand{\di}{\mathrm{d}} 
\newcommand{\sub}[1]{_{\mathrm{#1}}}
\newcommand{\Id}{\mathbf{1}}

\newcommand{\ket}[1]{\left| #1 \right\rangle}
\newcommand{\bra}[1]{\left\langle #1 \right|}

\newcommand{\bk}{\mathbf{k}}
\newcommand{\bx}{\mathbf{x}}
\newcommand{\bR}{\mathbf{R}}
\newcommand{\bG}{\mathbf{G}}

\DeclareMathOperator{\Tr}{Tr}

\DeclareMathOperator{\Ran}{Ran}
\DeclareMathOperator{\odeg}{1-deg}
\DeclareMathOperator{\tdeg}{3-deg}

\usepackage{amsthm}

\theoremstyle{plain}
\newtheorem{theorem}{Theorem}[section]
\newtheorem{lemma}[theorem]{Lemma}
\newtheorem{corollary}[theorem]{Corollary}
\newtheorem{proposition}[theorem]{Proposition}

\theoremstyle{definition}

\newtheorem{remark}[theorem]{Remark}

\newtheorem{assumption}[theorem]{Assumption}

\usepackage{enumerate}
\usepackage{enumitem}

\begin{document}

\title[Topology vs localization in synthetic dimensions]{Topology vs localization in synthetic dimensions}
\author{Domenico Monaco}%
 \homepage{Email: \href{mailto:domenico.monaco@uniroma1.it}{domenico.monaco@uniroma1.it}}
 \affiliation{Dipartimento di Matematica, ``Sapienza'' Universit\`{a} di Roma\\ Piazzale Aldo Moro 5, 00185 Rome, Italy
}%
\author{Thaddeus Roussign\'e}
 \homepage{Email: \href{mailto:thaddeus.roussigne@ens-paris-saclay.fr}{thaddeus.roussigne@ens-paris-saclay.fr}}
 \affiliation{\'Ecole Normale Sup\'erieure Paris--Saclay\\ 4, Avenue des Sciences, 91190 Gif-sur-Yvette, France}

\date{\today, revised manuscript [first submitted on 10 October 2022]}

\begin{abstract}
Motivated by recent developments in quantum simulation of synthetic dimensions, e.g.\ in optical lattices of ultracold atoms, we discuss here $d$-dimensional periodic, gapped quantum systems for $d \le 4$, with focus on the topology of the occupied energy states. We perform this analysis by asking whether the spectral subspace below the gap can be spanned by smooth and periodic Bloch functions, corresponding to localized Wannier functions in position space. By constructing these Bloch functions inductively in the dimension, we show that if they are required to be orthonormal then in general their existence is obstructed by the first two Chern classes of the underlying Bloch bundle, with the second Chern class characterizing in particular the 4-dimensional situation. If the orthonormality constraint is relaxed, we show how $m$ occupied energy bands can be spanned by a Parseval frame comprising at most $m+2$ Bloch functions.
\end{abstract}

\maketitle

\section{Introduction}

Engineering quantum simulation devices is a very active field of research in experimental physics, which promises to shed light on complex condensed matter phenomena using table-top experimental setups. In particular, several proposals have been brought forward to emulate so-called \emph{synthetic dimensions} in quantum simulators, which allow to probe features of the system as if it could ``move'' along more spatial dimensions than it actually possesses. This is achieved by carefully designing and coupling extra degrees of freedom for the system, which can be modelled mathematically as these extra dimensions and moreover offer a very flexible tunability. Remarkably, these efforts have allowed to experimentally study 4-dimensional (4D) quantum and condensed-matter systems, in particular for what pertains their topological properties, which were previously only envisioned theoretically \cite{zhang2001four, qi2008topological}. Such 4D systems (with three spatial dimensions and one synthetic dimension, or two and two respectively) have been realized in optical lattices of ultracold atoms \cite{price2015four}, in topological charge pumps \cite{lohse2018exploring}, in photonic waveguides \cite{zilberberg2018photonic}, as well as with acoustic waves \cite{chen2021creating} and with twisted bilayer phononic lattices \cite{rosa2021topological}; we refer the reader to Ref.\  \onlinecite{ozawa2019topological} for a recent review regarding the thriving research on topological quantum matter in synthetic dimensions.

The prototypical example of a topological phenomenon in condensed matter physics is arguably the integer quantum Hall effect (IQHE) \cite{klitzing1980new}, where the quantization of the transverse (Hall) conductivity of a 2-dimensional (2D) electron gas, subject to a perpendicular magnetic field and driven out of equilibrium by an in-plane electric field, is explained by relating the integer value it assumes (in appropriate units) to the \emph{first Chern number} of the underlying \emph{Bloch bundle} of occupied energy states \cite{thouless1982quantized, graf2007aspects}. Arguing by analogy, a similar non-linear response effect has been proposed to occur in 4D (time-reversal invariant) topological insulators \cite{zhang2001four, qi2008topological}, where the underlying topological integer is instead the \emph{second Chern number}. In this paper, we intend to present these topological numbers from a different perspective, namely as \emph{obstructions} to the existence of an orthonormal basis of smooth and periodic Bloch functions which span the fibers of the Bloch bundle. We also discuss the situation in which the orthonormality constraint is removed for this generating set of Bloch functions: there, we will see that a \emph{Parseval frame} of smooth and periodic Bloch functions always exists, irrespective of the vanishing or non-vanishing of the first two Chern classes, and we will also characterize the minimal number of required generators which constitute the frame in the general, topologically non-trivial situation.

As the expert reader will notice, our results are more or less well known in differential geometry, once the Bloch functions are understood as (orthonormal or spanning) sections of a Hermitian vector bundle --- the Bloch bundle --- over the $d$-dimensional torus --- the Brillouin torus \cite{husemoller}. In our presentation, however, we will present \emph{algorithmic, constructive proofs} of the existence of the required smooth and periodic Bloch functions, aiming at concreteness and in view of possible applications in numerical condensed matter physics, where these tools are extensively used (see Ref.\ \onlinecite{marzari2012maximally} and references therein).

\subsection{Setting}

In condensed matter physics, crystalline solids in $d$-dimensions have a configuration space which is invariant by a Bravais lattice of translations $\Gamma \simeq \Z^d$: the latter identification occurs after having determined an appropriate basis which generates the lattice directions. This translation invariance corresponds to a conserved ``quantum number'', the crystal (or Bloch) momentum $\bk$: this is itself determined up to traslations in the dual Bravais lattice $\Gamma^* \simeq 2\pi\Z^d$ whose elements $\bG$ are determined by the condition $\bG \cdot \bR \in 2\pi \Z$ for all $\bR \in \Gamma$. Effectively, this constrains the crystal momentum on a $d$-dimensional torus $\T^d := \R^d / \Gamma^*$, called the Brillouin torus. Moreover, these lattice translations are required to be unitarily represented on the Hilbert space of the quantum particle, and the Hamiltonian of the system is required to commute with these translation operators. This implies that the Hamiltonian itself preserves the crystal momentum $\bk$, and therefore it makes sense to discuss how it acts on wavefunctions which have a well-defined momentum and depend only on the degrees of freedom in the fundamental (Wigner--Seitz) cell of the lattice $\Gamma$. This action of the Hamiltonian is denoted by $H(\bk) = H(\bk + \bG)$, $\bG \in \Gamma^*$. Mathematically, it is obtained from the original Hamiltonian in position space (say, a Schr\"{o}dinger-type operator, or a tight-binding, discrete approximation thereof) by the Bloch--Floquet transform \cite{kuchment2016overview}. The theory can accomodate also magnetic translations \cite{zak1964magnetic}, provided the magnetic flux per unit cell is commensurate with respect to the quantum of magnetic flux. The use of a modified Bloch--Floquet transform, also called Bloch--Floquet--Zak transform, is at times more mathematically convienent, but leads to operators which are only unitarily equivalent and not equal when the crystal momentum is shifted by a dual-lattice translation. At any rate, even in this situation, the dual lattice representation can be modified in order to restore exact periodicity in the crystal momentum. For a discussion on this and related topics, we refer the reader to Ref.s\ \onlinecite{monaco2015symmetry, monaco2018optimal} and references therein.

The type of systems we will be then interested in are topological insulators in class A, according to the Altland--Zirnbauer--Cartan label in Kitaev's ``periodic table'' of topological quantum matter \cite{heinzner2005symmetry, kitaev2009periodic, ryu2010topological}. The IQHE, as well as its 4D analogue mentioned above, enter in this classification in $d=2$ and $d=4$, respectively. These systems are described by the following Assumption, which is verified in many sensible models \cite{monaco2018optimal}.

\begin{assumption}[Class-A topological insulator] \label{assum:top_ins}
The operators $H(\bk)$ are self-adjoint operators on some Hilbert space $\Hi$ (typically the $L^2$-space over the Wigner--Seitz cell of the lattice $\Gamma$), uniformly bounded from below. The resolvent map
\[ \T^d \ni \bk \mapsto \left[ H(\bk) - \iu \, \Id \right]^{-1} \in \mathcal{B}(\Hi) \]
is assumed to be $C^\infty$-smooth, and to take values in compact operators on $\Hi$. Moreover, we assume that there exist $\mu \in \R$ and $g > 0$ such that, for all $\bk \in \T^d$, the interval $[\mu - g, \mu + g]$ does not intersect the spectrum of $H(\bk)$. This interval is then called the \emph{spectral gap} of the Hamiltonian.
\end{assumption}

The spectral gap assumption allows to define the \emph{spectral projection} $P(\bk) = P(\bk)^2 = P(\bk)^* \in \mathcal{B}(\Hi)$ of the Hamiltonian $H(\bk)$ onto the energy levels below the gap, e.g.\ by the Riesz formula
\[ P(\bk) := \frac{\iu}{2\pi} \oint_C \di z \, \left[ H(\bk) - z \, \Id \right]^{-1}\,. \]
In the above, $C$ is a contour in the complex energy plane which intersects the real energy axis at $\mu$ and below the bottom of the spectrum of $H(\bk)$, with the latter choice being performed uniformly in $\bk$. It can be argued \cite{panati2013bloch} that, under our Assumption \ref{assum:top_ins}, the spectral projections $P(\bk)$ also depend smoothly and periodically on $\bk$. Its rank --- the dimension of its range --- is then constant in $\bk$, and will be denoted by $m \in \N$: it counts the number of occupied energy levels, below the spectral gap. The range of the spectral projection $P(\bk)$ consists of the span of the corresponding eigenfunction of the Hamiltonian $H(\bk)$, namely (the periodic parts of) the \emph{Bloch functions}. With an abuse of terminology and for lack of a better name, we will refer to any vector in the range of $P(\bk)$ as a Bloch function%
\footnote{Sometimes these vectors are termed quasi-Bloch functions, but we couldn't find a sufficiently common and accepted terminology.}.

Informally, the questions we will address are the following:
\begin{enumerate}
 \item Is it possible to span the range of the projections $P(\bk)$ with \emph{smooth and periodic} Bloch functions $\phi_a(\bk) \in \Hi$?
 \item Can the vectors $\phi_a$ be chosen to be \emph{orthonormal}?
 \item If not, what is the \emph{minimal number} of smooth and periodic vectors which is needed to span the range of the projection $P(\bk)$?
\end{enumerate}
The interest in these questions stems from the importance of the position-space counterparts (Bloch--Floquet anti-transforms) of the vectors $\phi_a$, which are called (\emph{composite}) \emph{Wannier functions} $w_a$. Specifically, \emph{localized} Wannier functions, corresponding to \emph{smooth} periodic Bloch functions, are a valuable tool in (numerical) condensed matter physics, with application ranging from the justification of the tight-binding approximation to the interpolation of numerical data in electronic structure \cite{marzari2012maximally}. In general, smoothness and orthonormality of periodic Bloch functions (that is, localization and orthonormality of Wannier functions) compete with one another, and typically one can only enforce one of the conditions at the expense of the other. The origin of this competition lies in the non-trivial topology of the \emph{Bloch bundle}, a vector bundle on the Brillouin torus $\T^d$ which can be naturally associated to the family of projections $P(\bk)$ for ranging $\bk \in \T^d$. For example, in the language of differential geometry, Bloch functions correspond to \emph{sections} of the Bloch bundle, and the existence of an \emph{orthonormal basis} of \emph{smooth} and periodic Bloch functions is equivalent to the (\emph{topological}) \emph{triviality} of the bundle \cite{panati2007triviality, monaco2015symmetry}. 

With this ``dictionary'' at hand, one can realize what are the topological conditions required to span the spectral projections with (non-)orthonormal smooth and periodic Bloch functions \cite{husemoller}. These conditions are often formulated in terms of so-called characteristic classes of the Bloch bundle, which are cohomology classes on the torus associated with the bundle. The relevant characteristic classes in the present setting are \emph{Chern classes}. These are defined by means of the \emph{Berry curvature} of the Bloch bundle, which is the following operator-valued 2-form on the torus $\T^d$:
\begin{equation} \label{eqn:Berry_curv} 
F = F(P) := \frac{1}{2\pi\iu} \, P \, \di P \wedge \di P \, P \equiv \frac{1}{2} \sum_{1 \le \mu < \nu \le d} F_{\mu\nu} \, \di k_\mu \wedge \di k_\nu, \quad \text{with} \quad F_{\mu\nu}(\bk) := \frac{1}{2\pi\iu} P(\bk) \, \big[ \partial_{k_\mu} P(\bk), \partial_{k_\nu} P(\bk) \big] P(\bk). 
\end{equation}
The \emph{$n$-th Chern form} of the Bloch bundle is then the differential form of degree $2n$ on $\T^d$ which is determined inductively by the following identity \cite{milnor1974characteristic, nash2011topology}:
\[ c_0(P) := 1, \quad c_n(P) := \frac{1}{n} \sum_{i=1}^{n} (-1)^{i-1} c_{n-i}(P) \wedge \Tr\big(F^{\wedge i}\big), \quad n \ge 1, \]
with $F^{\wedge i} := F \wedge \stackrel{\text{($i$ times)}}{\cdots} \wedge\, F$. In particular
\begin{equation} \label{eqn:c12}
c_1(P) =  \Tr(F), \quad c_2(P) = \frac{1}{2} \left[ \Tr(F) \wedge \Tr(F) - \Tr\big(F \wedge F\big) \right]\,.
\end{equation}
These differential forms can be shown to be closed ($\di c_n(P) = 0$), and therefore $c_n(P)$ gives rise to a cohomology class in the de Rham cohomology of the torus: this is the \emph{$n$-th Chern class} $[c_n(P)] \in H^{2n}(\T^d)$. Notice in particular that $[c_{n}(P)]=0$ if $2n$ exceeds the dimension $d$ of the Brillouin torus, so there are at most $\lceil d/2 \rceil$ non-trivial Chern classes (including the $0$-th one). The topological triviality of the Bloch bundle can then be tested by asking whether these classes vanish in cohomology (i.e.\ whether the bundle is \emph{Chern-trivial}): in turn, on the torus, the cohomological triviality of $[c_n(P)]$ can be reformulated by the easier condition that all the \emph{$n$-th Chern numbers} vanish. The latter are defined as follows: Choose $I = \{ i_1 < i_2 < \ldots < i_n \} \subset \set{1, \ldots, d}$ an ordered collection of $n$ labels among the $d$ coordinates of the torus $\T^d$, and define
\begin{equation} \label{eqn:ChernNumbers}
c_n^{I}(P) := \int_{\T^n_I} c_n(P)
\end{equation}
where $\T^n_I$ is the $n$-dimensional sub-torus of $\T^d$ obtained by freezing the value of the $(n-d)$ coordinates different from $k_{i_1}, \ldots, k_{i_n}$. These numbers can be argued to be \emph{integers}:
\[ c_n^{I}(P) \in \Z. \]
While it is always true that a trivial bundle is also Chern-trivial, the converse implication is in general false; the two notions turn out to be equivalent only if the rank of the projections $m$ equals $1$, or if $m>1$ and the dimension of the Brillouin torus $d$ is sufficiently small. We will see also from our main result that, in the latter case, $d \le 4$ suffices.

\subsection{Results}

After this brief \textit{excursus} in vector-bundle theory, we are ready to collect all the previous considerations into our main result. Motivated by the discussion about quantum simulation devices with synthetic dimensions presented at the beginning of the Introduction, we address class-A topological insulators in dimension $d \le 4$. We have then

\begin{theorem} \label{thm:main}
Let $P(\bk)$, $\bk \in \T^d$, be the family of spectral projections of a $d$-dimensional class-A topological insulator as in Assumption \ref{assum:top_ins}. Assume that $d \le 4$, and denote by $m$ the (constant) rank of the spectral projections. Then the following hold.

\begin{enumerate}
 \item \label{item:orthonormal} There exists an \emph{orthonormal basis} of smooth and periodic Bloch functions $\set{\phi_1(\bk), \ldots, \phi_m(\bk)}$ spanning $P(\bk)$ if and only if the Bloch bundle is Chern-trivial, that is, 
  \begin{equation} \label{eqn:Chern_trivial}
  c_1^I(P) = 0 \quad \text{for all} \quad I = \set{i < j} \subset \set{1,\ldots, 4} \qquad \text{\emph{and}} \qquad c_2^{\set{1,2,3,4}}(P) = 0.
  \end{equation}
 \item \label{item:Parseval} There always exists a \emph{Parseval frame} of smooth and periodic Bloch functions $\set{\phi_1(\bk), \ldots, \phi_M(\bk)}$ spanning $P(\bk)$, with
 \begin{itemize}
  \item $M \le m+1$ if $d \le 3$;
  \item $M \le m+2$ if $d=4$.
 \end{itemize}
\end{enumerate}
\end{theorem}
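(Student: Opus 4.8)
\emph{Proof strategy.} The natural setting is the Bloch bundle $E \to \T^d$, the smooth Hermitian subbundle of the trivial bundle $\T^d \times \Hi$ with fibres $E_\bk = \Ran P(\bk)$. In this language a smooth periodic orthonormal basis of Bloch functions spanning $P(\bk)$ is precisely a global smooth orthonormal frame of $E$, equivalently a unitary isomorphism $E \cong \T^d \times \C^m$ of Hermitian bundles; and a Parseval frame $\set{\phi_1(\bk),\ldots,\phi_M(\bk)}$ is precisely a smooth surjective bundle map $W \colon \T^d \times \C^M \to E$ with $W_\bk W_\bk^* = \Id_{E_\bk}$, which --- splitting off the rank-$(M-m)$ subbundle $E' := \ker W$ and noting that $W$ restricts to a unitary $(E')^{\perp} \xrightarrow{\sim} E$ --- is the same datum as a unitary isomorphism $E \oplus E' \cong \T^d \times \C^M$. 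Thus part~\ref{item:orthonormal} asks when $E$ is trivialisable, while part~\ref{item:Parseval} asks for the smallest $M$ for which $E$ admits a trivialising complement of rank $M-m$. The plan for both is to induct on $d$, adjoining one synthetic dimension at a time and tracking the obstruction created at each step explicitly; the base case $d \le 1$ is the classical triviality of complex vector bundles over the circle (the holonomy lies in the connected group $U(m)$ and may be gauged away).

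For part~\ref{item:orthonormal}, the ``only if'' direction is immediate: a global smooth orthonormal frame trivialises the Berry connection, so every Chern form $c_n(P)$ is exact and every Chern number \eqref{eqn:ChernNumbers} vanishes. For the converse I would run the inductive step as follows: given a smooth periodic orthonormal frame of $P$ over $\T^{d-1}$, parallel-transport it along the last circle $k_d \in [0,2\pi]$ using the Berry connection (i.e.\ solving $\partial_{k_d}\psi_a = [\partial_{k_d}P,\,P]\psi_a$); by smoothness of $P$ this is a smooth orthonormal frame over $\T^{d-1} \times [0,2\pi]$, and the mismatch of its two ends is a smooth transition map $U \colon \T^{d-1} \to U(m)$. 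The frame descends to a smooth periodic frame over $\T^d$ if and only if $U$ is null-homotopic, in which case $U$ is absorbed by a smooth family of unitaries interpolating $\Id$ and $U$ along $k_d$. For $d-1 \le 3$ the homotopy invariants of $U \colon \T^{d-1} \to U(m)$ are the winding numbers around the $1$-cells, in $\pi_1(U(m)) = \Z$ (obstructions in $\pi_0(U(m)) = 0$ below), together with, when $d-1 = 3$ and $m \ge 2$, the degree on the top $3$-cell, in $\pi_3(U(m)) = \Z$ (the intermediate obstruction living in $\pi_2(U(m)) = 0$, and for $m=1$ the top group also vanishing). The heart of the proof is then a transgression (Chern--Simons) computation identifying these winding numbers, up to sign, with the first Chern numbers $c_1^{\set{i,d}}(P)$ and --- for $d=4$ --- identifying the top degree $\frac{1}{24\pi^2}\int_{\T^3}\Tr\big((U^{-1}\,\di U)^{\wedge 3}\big)$ with $\pm\,c_2^{\set{1,2,3,4}}(P)$. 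Together with the routine observation that the correction used to kill a lower-dimensional obstruction may be localised near the corresponding subtorus (so it does not resurrect a previously-killed one), this shows condition \eqref{eqn:Chern_trivial} is exactly what the induction requires.

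For part~\ref{item:Parseval} I would reformulate a Parseval $M$-frame of $E$ as a global smooth section of the fibre bundle over $\T^d$ whose fibre at $\bk$ is the set of Parseval $M$-frames of $E_\bk$; that fibre is (non-canonically) the complex Stiefel manifold $V_m(\C^M) = U(M)/U(M-m)$, which is $2(M-m)$-connected, and the structure group $U(m)$ acts through a connected group so the relevant coefficient system is untwisted. Hence the first --- and, by the dimension of $\T^d$, only possible --- obstruction to such a section lies in $H^{2(M-m)+2}(\T^d;\Z)$, which vanishes as soon as $2(M-m)+2 > d$, i.e.\ $M-m \ge 1$ when $d \le 3$ and $M-m \ge 2$ when $d=4$; the section can then be built skeleton by skeleton. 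An equivalent route, which meshes with part~\ref{item:orthonormal}, is to enlarge $E$ by an explicitly constructed bundle killing its Chern classes: for $d \le 3$ a line bundle $L$ with $c_1(L) = -c_1(P)$ (a tensor product of pullbacks of the standard unit-flux line bundles on the $2$-subtori) makes $E \oplus L$ Chern-trivial of rank $m+1$; for $d=4$ a single line bundle leaves the class $c_2(P) - c_1(P)^2 \in H^4(\T^4)$, so one instead builds a rank-$2$ bundle $F$ with $c_1(F) = -c_1(P)$ and $c_2(F) = c_1(P)^2 - c_2(P)$, making $E \oplus F$ Chern-trivial of rank $m+2$. Since the construction of part~\ref{item:orthonormal} applies verbatim to any smooth Hermitian bundle over $\T^d$, $d \le 4$, a unitary trivialisation of the enlarged bundle then yields the Parseval frame through the dictionary of the first paragraph.

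The step I expect to be the main obstacle is the explicit Chern--Weil/transgression bookkeeping that matches the homotopy class of the transition map $U$ with the Chern numbers of $P$ --- most delicately in $d=4$, where one must carefully track the boundary terms produced by the parallel transport and the normalisations in \eqref{eqn:c12} to see that the $\pi_3$-degree of $U$ is exactly the second Chern number. A secondary difficulty, specific to $d=4$ and the ``stabilisation'' form of part~\ref{item:Parseval}, is the concrete construction of the rank-$2$ bundle $F$ realising a prescribed pair $(c_1,c_2) \in H^2(\T^4;\Z) \times H^4(\T^4;\Z)$: this requires gluing an abelian flux bundle to an instanton-type $SU(2)$-bundle supported in a coordinate ball and checking smoothness of the patching.
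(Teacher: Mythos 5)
Your strategy for part~\ref{item:orthonormal} follows the paper's proof essentially step by step: inducting on the dimension by parallel transport along a new circle, measuring the failure of periodicity by a transition map $U$ over the lower-dimensional torus, and converting the null-homotopy condition on $U$ into the vanishing of the winding numbers (living in $\pi_1(U(m)) \cong \Z$, with $\pi_0 = \pi_2 = 0$ giving no intermediate obstructions) and, in $d=4$, of the $3$-degree (living in $\pi_3(U(m)) \cong \Z$ for $m \ge 2$, vanishing for $m=1$), and finally matching these with the first and second Chern numbers via a Chern--Simons transgression. The paper spends much of its effort on the careful bookkeeping needed when some lower-dimensional invariant is nonzero --- it introduces a notion of pseudo-periodicity of the matching matrices and proves a pseudo-periodic version of Hopf's degree theorem --- whereas you compress this to the remark that a correction can be ``localised near the corresponding subtorus''; that is the right intuition, but it is precisely where the technical weight of the argument lies.

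For part~\ref{item:Parseval} you offer two genuinely different routes. The first, viewing a Parseval $M$-frame as a section of a fibre bundle with fibre the complex Stiefel manifold $V_m(\C^M) \cong U(M)/U(M-m)$ and reading off the bounds from its $2(M-m)$-connectivity versus $\dim \T^d$, is a standard obstruction-theoretic shortcut that the paper does not use: it is shorter, but deliberately avoided because the paper aims for a constructive recipe usable in numerics. Your second route --- stabilising $E$ by an explicit complementary bundle that kills the Chern classes --- is exactly the paper's ``space-doubling'' argument (squeeze all the topology into a rank-$1$ (resp.\ rank-$2$) subprojection, direct-sum with a projection of inverse total Chern class, trivialise, and project back). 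Here you also correctly record a detail the paper's displayed equation \eqref{eqn:Q_2} elides: by the Whitney sum formula the complementing rank-$2$ bundle must have $c_2(F) = c_1(P)^2 - c_2(P)$, not $-c_2(P)$, in order for $c_2(E \oplus F)$ to vanish. The construction in the paper's Remark~\ref{rmk:Q_2} (taking $Q$ as a Whitney complement of a unitarily equivalent $\widetilde{P}_2$ inside a trivial bundle, so that $c(Q) = c(P_2)^{-1}$) automatically produces this correct value, so the paper's argument is sound despite the imprecision; you should be aware of this when filling in the details of your own rank-$2$ construction of $F$.
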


Let us recall that a \emph{Parseval frame} for a projection $P$ on an Hilbert space is a collection of vectors $\set{\phi_i}_{i \in \mathcal{I}}\subset \Ran P$ such that
\[ P \phi = \phi \quad \Longrightarrow \quad \phi = \sum_{i \in \mathcal{I}} \left\langle \phi_i, \phi \right\rangle \, \phi_i, \]
that is, the vectors span the range of the projection and the Parseval identity holds, but the vectors themselves are not required to be orthonormal. 

\begin{remark} \label{rmk:comments}
We make a few comments on the statement of Theorem \ref{thm:main}.
\begin{enumerate}
 \item The parts of the result that concern $d$-dimensional class-A topological insulators for $d \le 3$ have already appeared elsewhere \cite{cornean2019parseval}; our new contribution here concerns the 4D case. Since the proof is ``inductive'' in the dimension, below we will review and elaborate on the construction in $d \le 3$ to pave the way for the presentation of the case $d=4$. A similar existence result, also covering dimensions $d \le 4$ and concerned with the analytic rather than smooth setting (therefore with exponentially rather than polynomially localized Wannier functions), can be found in Ref.\ \onlinecite{auckly2018parseval} (see also Ref.s\ \onlinecite{denittis2011exponentially, denittis2020erratum}).
 \item We stress once again that, while results of this type concerning spanning sections of a vector bundle are known in the differential geometry community \cite{husemoller, avis1979quantum}, we will provide a \emph{construction} of the required smooth and periodic Bloch functions. In this construction, the topological obstruction will appear naturally and will be manifested through (possibly non-vanishing) Chern numbers. This constructive proof goes through some ``deformation arguments'' (namely through homotopy theory), which we try to make as explicitly as possible.
 \item The number of conditions listed in \eqref{eqn:Chern_trivial}, depending on the dimension, is as follows:
 \begin{itemize}
  \item[$d=1$] --- no condition: both the first and the second Chern forms vanish automatically because they are 2- and 4-forms on the 1-dimensional torus, respectively;
  \item[$d=2$] --- one first Chern number $c_1^{\set{1,2}}(P) \in \Z$ needs to vanish; the second Chern form vanishes identically again for dimensional reasons;
  \item[$d=3$] --- three first Chern numbers $c_1^{\set{1,2}}(P),\, c_1^{\set{2,3}}(P),\, c_1^{\set{1,3}}(P) \in \Z$ need to vanish; the second Chern form vanishes identically again for dimensional reasons;
  \item[$d=4$] --- six first Chern numbers \emph{and} one second Chern number need to vanish. Moreover, if $m=1$ then the second Chern class always vanishes, thus reducing the number of conditions to be checked: this will also become apparent in our proof.
 \end{itemize}
 \item In 2D, the only first Chern number that arises is the integer responsible for the quantization of the Hall conductivity in the IQHE \cite{graf2007aspects}. In 4D, a similar role is played by the second Chern number \cite{zhang2001four}. Moreover, in the jargon of topological quantum matter \cite{kennedy2015homotopy}, the first Chern number is a \emph{strong invariant} in $d=2$, as well as the second Chern number in $d=4$; in $d=3$ and $d=4$, the first Chern numbers are instead \emph{weak invariants}. The terminology for strong invariants reflects the fact that the corresponding Chern class is of top degree in the appropriate dimension, while the first Chern class is a lower-dimensional object in $d \ge 3$, leading to the term ``weak invariant''.
 \item In the second part of the statement, the bounds on the number of vectors in the Parseval frame for the spectral projection are optimal: that is, in dimension $d \le 3$ one needs $M=m+1$ vectors in the frame as soon as one of the first Chern numbers is non-zero, and in dimension $4$ one needs $M=m+2$ vectors in the frame as soon as the second Chern number is non-zero. In 4D, there could be situations in which the second Chern number vanishes while one of the first Chern numbers does not (for example by virtue of some extra symmetry of the quantum system); in this case, one could have a Parseval frame of only $m+1$ vectors. This will be apparent in any case from the proof. In view of the first part of the statement, if all first and second Chern numbers vanish one can then make the Parseval frame an orthonormal basis, that is, one can even choose $M=m$.
\end{enumerate}
\end{remark}

\section{1D case and parallel transport}

We start the proof of the main Theorem \ref{thm:main} by considering the 1-dimensional (1D) case. As emphasized in Remark \ref{rmk:comments}, in this case the construction of an orthonormal basis of smooth and periodic Bloch functions for the spectral projections $P(k)$ is topologically unobstructed. We can see this as follows. Define the \emph{parallel transport} unitaries $T(k) \in \mathcal{U}(\Hi)$ as the solution to the following operator-valued Cauchy problem:
\begin{equation} \label{eqn:parallel}
\begin{cases}
\iu \, \partial_k T(k) = K(k) \, T(k), & K(k) := \iu \big[ \partial_k P(k), P(k) \big], \\
T(0) = \Id. 
\end{cases}
\end{equation}
The unitaries $T(k)$ depend $C^\infty$-smoothly on $k \in \R$, and they intertwine the spectral projections, that is,
\[ P(k) = T(k) \, P(0) \, T(k)^*. \]
The parallel trasport unitaries, however, lack in general periodicity: they rather satisfy the telescopic relation $T(k+2\pi) = T(k)\, T(2\pi)$, and in general $T(2\pi) \ne T(0) = \Id$ --- one says that, after a loop around the Brillouin torus $\T^1 \simeq \R / 2\pi \Z$, parallel-transported vectors pick up a \emph{holonomy} $T(2\pi)$. However, one can always write $T(2\pi) = \eu^{2 \pi \iu X}$ for some bounded and self-adjoint operator $X \in \mathcal{B}(\Hi)$ which commutes with $P(0) = P(2\pi)$ in view of the above intertwining property. See e.g.\ Ref.\ \onlinecite{cornean2016construction} for a proof of all these properties of the parallel transport unitaries.

An orthonormal basis for $P(k)$ is then obtained by setting
\[ \phi_a(k) := T(k) \, \eu^{-\iu k X} \, \phi_a(0), \quad 1 \le a \le m, \]
for any choice of an orthonormal basis $\set{\phi_1(0), \ldots, \phi_m(0)}$ for the range of $P(0)$. By construction, the vectors $\phi_a(k)$ lie in the range of $P(k)$, are smooth and periodic, and therefore this proves Theorem \ref{thm:main} in the 1D case.

\section{2D case and the first Chern number}

Consider now a 2D family of projections $P(\bk) = P(k_1, k_2)$, $\bk \in \T^2 \simeq \R^2 / 2\pi \Z^2$. By the previous Section, we can assume that we have constructed an orthonormal basis of smooth and $(2\pi\Z)$-periodic Bloch functions $\set{\phi_1(0,k_2), \ldots, \phi_m(0,k_2)}$ for the 1D restriction $P(0,k_2)$, $k_2 \in \T^1$. Our goal is to extend this basis along the $k_1$-direction: once again, we can proceed using the parallel transport unitaries $T(\bk) \equiv T_{k_2}(k_1)$, which are defined again through \eqref{eqn:parallel} by fixing $k_2$ parametrically and letting $k_1$ vary. Notice that the parallel transport unitaries depend periodically on $k_2$, since so does $P(\bk)$ and therefore the generator $K(\bk)$ which drives the $k_1$-dependence in \eqref{eqn:parallel}. We set
\begin{equation} \label{eqn:2D_psi}
\psi_a(k_1,k_2) := T_{k_2}(k_1) \, \phi_a(0,k_2), \quad 1 \le a \le m.
\end{equation}
This time we encounter therefore a $k_2$-dependent holonomy $T_{k_2}(2\pi)$, which by the intertwining property maps the range of $P(0,k_2)$ to the one of $P(2\pi,k_2) = P(0,k_2)$. Define now 
\begin{equation} \label{eqn:matching2D}
\alpha_{ab}(k_2) := \left\langle \phi_b(0,k_2), \, \psi_a(2\pi,k_2) \right\rangle = \left\langle \phi_b(0,k_2), \, T_{k_2}(2\pi) \, \phi_a(0,k_2) \right\rangle, \quad 1 \le a, b \le m.
\end{equation}
We will say that $\alpha(k_2) = \big[ \alpha_{ab}(k_2) \big]_{1 \le a,b \le m}$ defines the family of \emph{matching matrices} associated to the orthonormal basis $\set{\psi_a(k_1,k_2)}_{1 \le a \le m}$: these are the matrix representations of the linear maps $P(2\pi,k_2) \, T_{k_2}(2\pi) \, P(0,k_2)$ from the $m$-dimensional range of $P(0,k_2)$ to the one of $P(2\pi,k_2)=P(0,k_2)$, if both linear spaces are spanned by the orthonormal basis $\set{\phi_a(0,k_2)}_{1 \le a \le m}$. It is therefore easily realized that $\alpha(k_2)$ is a unitary $m \times m$ matrix which depends smoothly and $(2\pi\Z)$-periodically on $k_2$, and we may then view $k_2 \mapsto \alpha(k_2)$ as a map $\alpha \colon \T^1 \simeq \R / 2 \pi\Z \to U(m)$. Roughly speaking, $\alpha(k_2)$ measures the lack of periodicity in the $k_1$-direction for the basis, since
\begin{equation} \label{eqn:Matching}
\psi_b(k_1+2\pi,k_2) = \sum_{a=1}^{m} \psi_a(k_1,k_2) \, \alpha_{ab}(k_2), \quad 1 \le b \le m.
\end{equation}

The next Theorem establishes a link between our original problem of the construction of smooth and periodic Bloch functions and the \emph{homotopy properties} of the matching matrices viewed as unitary-matrix-valued maps on $\T^1$; in turn, these properties are linked with the topology of the Bloch bundle through its Chern number. As the topology of the space of unitary matrices becomes relevant in view of this observation, we begin by setting up some notation: for $\alpha \in U(m)$, we write%
\footnote{It may at times be convenient to notice that $\sigma(\alpha)$ is obtained from $\alpha$ simply by dividing all the entries in the first row by $\det \alpha$.}%
\begin{equation} \label{eqn:U(1)xSU(m)}
\alpha = \delta \, \sigma, \quad \text{where} \quad \delta \equiv \delta(\alpha) := \begin{pmatrix} \det \alpha & 0 \\ 0 & \Id_{m-1} \end{pmatrix} \quad \text{and} \quad \sigma \equiv \sigma(\alpha) := \delta(\alpha)^{-1} \, \alpha \in SU(m).
\end{equation}
The association $\alpha \in U(m) \mapsto (\det \alpha, \sigma) \in U(1) \times SU(m)$ establishes a topological isomorphism $U(m) \simeq U(1) \times SU(m)$. Therefore, in the following we will refer to $\delta(\alpha)$ as the \emph{$U(1)$-part} and to $\sigma(\alpha)$ as the \emph{$SU(m)$-part} of $\alpha \in U(m)$.

\begin{remark}
In the following, we will often freely swap between continuous and smooth deformations (or rather, homotopies). This is without loss of generality, since any continuous homotopy of smooth maps between manifolds is arbitrarily close, in an appropriate topology, to a smooth one, see Ref.\ \onlinecite[Lemma 2.6.3]{mukherjee2015differential}. Such smooth deformations can oftentimes be obtained from continuous ones e.g.\ by convolution with some appropriate kernel, and preserve any symmetry property (like periodicity or ``quasi-periodicity'', see below): smoothing arguments of this sort, for Bloch functions or unitary-matrix-valued functions, can be found in Ref.s\ \onlinecite{cornean2016construction, fiorenza2016construction, fiorenza2016Z2, cornean2017wannier}.
\end{remark}

\begin{theorem} \label{thm:2D}
\begin{enumerate}
 \item \label{item:2D_beta} If the orthonormal basis $\set{\psi_a(k_1,k_2)}_{1 \le a \le m}$ has matching matrices $\alpha(k_2)$, and if $\beta(k_1,k_2)$ is a family of $m \times m$ unitary matrices with $\beta(0,k_2) \equiv \Id$ which is $(2\pi\Z)$-periodic in $k_2$, then the orthonormal basis
\[ \widetilde{\psi}_a(\bk) := \sum_{b=1}^{m} \psi_b(\bk) \, \beta_{ba}(\bk), \quad \bk = (k_1,k_2), \]
has matching matrices
\begin{equation} \label{eqn:beta2D}
\widetilde{\alpha}(k_2) := \beta(k_1,k_2) \, \alpha(k_2) \, \beta(k_1+2\pi,k_2)^{-1}.
\end{equation}
 \item \label{item:2D_alphabeta} Given two maps $\alpha, \widetilde{\alpha} \colon \T^1 \to U(m)$, the following are equivalent.
\begin{enumerate}
 \item \label{item:2D_alphabeta_1} There exist $\beta(\bk) \in U(m)$ as in the previous point such that \eqref{eqn:beta2D} holds.
 \item \label{item:2D_alphabeta_2} The maps $\alpha, \widetilde{\alpha} \colon \T^1 \to U(m)$ are \emph{homotopically equivalent}, that is, they can be continuously deformed one into the other. 
 \item \label{item:2D_alphabeta_3} The \emph{$1$-degrees} of the maps $\alpha, \widetilde{\alpha} \colon \T^1 \to U(m)$ agree:
 \[ \odeg(\alpha) = \odeg(\widetilde{\alpha}) \in \Z, \quad \text{where} \quad \odeg(\alpha) := \frac{1}{2\pi\iu} \int_{\T^1} \Tr_{\C^m} \left( \alpha^{-1} \, \di \alpha \right)\, . \]
\end{enumerate}
 \item \label{item:2D_alphadelta} The family of matrices
 \begin{equation} \label{eqn:2D_alphatilde}
 \alpha(k_2) \quad \text{and} \quad \delta(k_2) := \delta(\alpha(k_2)) = 
 \begin{pmatrix}
 \det \alpha(k_2) & 0 \\ 0 & \Id_{m-1}
 \end{pmatrix}
 \end{equation}
 define homotopically equivalent maps $\alpha, \delta \colon \T^1 \to U(m)$.
 \item \label{item:2Ddeg=c1} It holds that
\begin{equation} \label{eqn:2D_deg_c1} 
\odeg(\alpha) = c_1^{\set{1,2}}(P) \in \Z.
\end{equation}
\end{enumerate}
\end{theorem}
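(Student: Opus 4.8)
The plan is to prove the four items in the order stated: items \ref{item:2D_beta}--\ref{item:2D_alphadelta} are essentially formal, resting on the bookkeeping of the matching relation \eqref{eqn:Matching} together with the standard homotopy theory of $U(m)$, whereas item \ref{item:2Ddeg=c1} carries the actual geometric content and I expect it to be the main obstacle. For item \ref{item:2D_beta}, I would substitute the change of frame $\widetilde\psi_a=\sum_b\psi_b\,\beta_{ba}$ into \eqref{eqn:Matching}, expand $\psi_b(k_1+2\pi,k_2)$ once more via \eqref{eqn:Matching}, and re-express the resulting $\psi_c(k_1,k_2)$ in the tilded frame using the inverse change of frame; collecting indices yields \eqref{eqn:beta2D}. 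Since $\widetilde\psi_a$ is again an orthonormal frame spanning $P$ it has a well-defined matching matrix, necessarily independent of $k_1$, which forces the right-hand side of \eqref{eqn:beta2D} to be $k_1$-independent as well (this may also be checked directly from the quasiperiodicity of $\beta$ implicit in \eqref{eqn:beta2D}).

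For item \ref{item:2D_alphabeta} I would run the cycle \ref{item:2D_alphabeta_1}$\Rightarrow$\ref{item:2D_alphabeta_2}$\Rightarrow$\ref{item:2D_alphabeta_3}$\Rightarrow$\ref{item:2D_alphabeta_1}. Evaluating \eqref{eqn:beta2D} at $k_1=0$ and using $\beta(0,\cdot)\equiv\Id$ exhibits $\widetilde\alpha^{-1}\alpha$ (up to inverses/transposes, depending on the precise index conventions) as the value at $k_1=2\pi$ of the path $k_1\in[0,2\pi]\mapsto\beta(k_1,\cdot)\in C^\infty(\T^1,U(m))$ starting at the constant map $\Id$; composing with $\widetilde\alpha$ gives a homotopy from $\widetilde\alpha$ to $\alpha$, hence \ref{item:2D_alphabeta_1}$\Rightarrow$\ref{item:2D_alphabeta_2}. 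The implication \ref{item:2D_alphabeta_2}$\Rightarrow$\ref{item:2D_alphabeta_3} is homotopy invariance of the integral of the closed form $\Tr(\alpha^{-1}\di\alpha)$. For \ref{item:2D_alphabeta_3}$\Rightarrow$\ref{item:2D_alphabeta_1} I would invoke that, via the splitting \eqref{eqn:U(1)xSU(m)} and $\pi_1(SU(m))=0$, the group $\pi_1(U(m))\simeq\Z$ is detected precisely by $\odeg$ (which is additive under pointwise products, since $\Tr(\gamma^{-1}\di\gamma)=\di\log\det\gamma$ by Jacobi's formula); hence $\odeg(\alpha)=\odeg(\widetilde\alpha)$ forces $\widetilde\alpha^{-1}\alpha$ to be null-homotopic. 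I would then pick a smooth null-homotopy $[0,2\pi]\times\T^1\to U(m)$ from $\Id$ to $\widetilde\alpha^{-1}\alpha$, made $k_1$-constant near the endpoints, and propagate it to all $k_1\in\R$ by the recursion dictated by the (required) $k_1$-independence of \eqref{eqn:beta2D}; the resulting $\beta$ is smooth and satisfies \eqref{eqn:beta2D}, so $\widetilde\psi_a=\sum_b\psi_b\beta_{ba}$ is an admissible orthonormal frame.

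Item \ref{item:2D_alphadelta} is then immediate: by Jacobi's formula $\Tr(\alpha^{-1}\di\alpha)=\di\log\det\alpha=\di\log\det\delta(\alpha)=\Tr(\delta(\alpha)^{-1}\di\delta(\alpha))$ (since $\det\delta(\alpha)=\det\alpha$), so $\odeg(\alpha)=\odeg(\delta(\alpha))$, and part \ref{item:2D_alphabeta} upgrades this to homotopy equivalence of $\alpha$ and $\delta$. For item \ref{item:2Ddeg=c1}, the idea is to integrate $\Tr(F)$ over the fundamental domain $C:=[0,2\pi]_{k_1}\times\T^1_{k_2}$ of $\T^2$, on which the frame $\set{\psi_a}$ is smooth although only quasiperiodic in $k_1$. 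On $C$ one has $\Tr(F)=\frac{1}{2\pi\iu}\,\di\Tr(\mathcal A)$ with Berry connection $\mathcal A_{ab}:=\langle\psi_a,\di\psi_b\rangle$, using $\Tr(\mathcal A\wedge\mathcal A)=0$ for matrix-valued $1$-forms together with the standard identity $\Tr(F)=\frac{1}{2\pi\iu}\Tr(\di\mathcal A+\mathcal A\wedge\mathcal A)$. Stokes' theorem reduces $c_1^{\set{1,2}}(P)=\int_{\T^2}\Tr(F)$ to the boundary circles $\set{k_1=2\pi}$ and $\set{k_1=0}$, and differentiating the matching relation \eqref{eqn:Matching} in $k_2$ gives $\mathcal A(2\pi,\cdot)=\alpha^{-1}\mathcal A(0,\cdot)\,\alpha+\alpha^{-1}\di\alpha$, whence $\Tr\mathcal A|_{k_1=2\pi}-\Tr\mathcal A|_{k_1=0}=\Tr(\alpha^{-1}\di\alpha)$ by cyclicity of the trace; therefore $c_1^{\set{1,2}}(P)=\frac{1}{2\pi\iu}\int_{\T^1}\Tr(\alpha^{-1}\di\alpha)=\odeg(\alpha)$. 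The hard part will be keeping all normalizations and orientations consistent --- the factor $1/(2\pi\iu)$ in \eqref{eqn:Berry_curv}, the choice $\mathcal A_{ab}=\langle\psi_a,\di\psi_b\rangle$ versus its adjoint, and the orientation of $\partial C$ --- so that the sign comes out exactly $+1$; an alternative that circumvents the matrix bookkeeping is to pass to the determinant line bundle $\det P$ (whose first Chern form is still $\Tr(F)$ and whose matching function in the induced frame is $\det\alpha$), reduce to the rank-one case, and use $\odeg(\alpha)=\frac{1}{2\pi\iu}\int_{\T^1}\di\log\det\alpha$.
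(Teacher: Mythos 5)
Your proposal is correct, and for items \ref{item:2D_beta}, \ref{item:2D_alphadelta}, and \ref{item:2Ddeg=c1} it runs along essentially the same lines as the paper: item \ref{item:2D_beta} is the same direct substitution, item \ref{item:2D_alphadelta} follows from item \ref{item:2D_alphabeta} via $\det\alpha=\det\delta(\alpha)$, and item \ref{item:2Ddeg=c1} is the identical Stokes argument on the fundamental cell $[0,2\pi]^2$, using $c_1(P)=\di\Tr(A)$ (locally) and the gauge transformation of the Berry connection across the seam $\set{k_1=0}\sim\set{k_1=2\pi}$ to produce $\Tr(\alpha^{-1}\di\alpha)$. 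Your determinant-line-bundle alternative for item \ref{item:2Ddeg=c1} is a legitimate shortcut that the paper does not take, and it does sidestep the matrix bookkeeping you were worried about.

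The one genuinely different step is the implication \ref{item:2D_alphabeta_3}$\Rightarrow$\ref{item:2D_alphabeta_1} in item \ref{item:2D_alphabeta}. You invoke the abstract fact $\pi_1\big(U(m)\big)\simeq\Z$, detected by $\odeg$, to conclude that $\widetilde\alpha^{-1}\alpha$ is null-homotopic, and then propagate a chosen null-homotopy to build $\beta$ by the recursion forced by \eqref{eqn:beta2D}. The paper instead proves \ref{item:2D_alphabeta_3}$\Rightarrow$\ref{item:2D_alphabeta_2} \emph{constructively}: it deforms $\alpha$ through $U(m)$ to a diagonal matrix with a single nontrivial $U(1)$-entry using the column interpolation method (successive applications of Sard's lemma and stereographic contraction to each column), and then explicitly unwinds the remaining $U(1)$-phase by building a continuous, periodic argument function $\theta$ through a partition of $\T^1$ into small arcs. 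Your shortcut is shorter and of course correct, but the explicit construction is one of the stated aims of the paper (in view of numerical applications, and because the same column-interpolation machinery is reused and adapted in the 3D and 4D arguments, e.g.\ in Proposition \ref{prop:mu-nu-periodic} and Theorem \ref{thm:SU(m)->SU(2)}). If you want your proof to plug into the rest of the paper's inductive scheme, you would need to unpack the null-homotopy granted by $\pi_1(U(m))\simeq\Z$ into something explicit anyway. Also worth noting: your one-sided homotopy $t\mapsto\widetilde\alpha\,\beta(2\pi t,\cdot)$ for \ref{item:2D_alphabeta_1}$\Rightarrow$\ref{item:2D_alphabeta_2} is a slight simplification of the paper's symmetric choice $\alpha_t(k_2)=\beta(-\pi t,k_2)\,\alpha(k_2)\,\beta(\pi t,k_2)^{-1}$; both are fine.
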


\begin{remark}[$1$-degree and winding number of the determinant] \label{rmk:winding}
It is not difficult to argue (see e.g.\ Ref.\ \onlinecite[Lemma 2.1]{cornean2017wannier}) that, if $\alpha \colon \T^1 \to U(m)$ and $a(k_2) := \det \alpha(k_2) \in U(1)$, then
\[ \Tr_{\C^m}\big(\alpha^{-1} \, \di \alpha \big) = a^{-1} \di a \quad \text{and hence} \quad \odeg(\alpha) = \frac{1}{2\pi\iu} \int_{\T^1} a^{-1} \, \di a. \]
The latter formula computes the \emph{winding number} (or \emph{topological degree}) of the map $a = \det \alpha \colon \T^1 \to U(1)$; in particular, the $1$-degree is integer-valued and is an additive function of its argument $\alpha$, that is, $\odeg(\alpha_1\,\alpha_2) = \odeg(\alpha_1) + \odeg(\alpha_2)$ (see e.g.\ Ref.\ \onlinecite[Section 6.5]{mukherjee2015differential}). The topology of a $U(m)$-valued periodic map is therefore all contained in its $U(1)$-part, which has the determinant as the only non-trivial diagonal entry.
\end{remark} 

Even though Theorem \ref{thm:2D} is proved in Ref.\ \onlinecite[Propositions~5.1, 5.3 and~6.3]{cornean2019parseval}, we will provide a sketch of the proof with some considerations which will be also valuable for the treatment of the 4D situation later.

\begin{proof}[Proof of Theorem \ref{thm:2D}.\ref{item:2D_beta}]
The statement follows from a direct computation, which we leave to the reader.
\end{proof}

\begin{proof}[Proof of Theorem \ref{thm:2D}.\ref{item:2D_alphabeta}]
Let us show first that \eqref{item:2D_alphabeta_1} is equivalent to  \eqref{item:2D_alphabeta_2}. If \eqref{eqn:beta2D} holds, then
\[ \alpha_t(k_2) := \beta(-\pi\, t,k_2) \, \alpha(k_2) \, \beta(\pi \, t,k_2)^{-1}, \quad t \in [0,1], \quad k_2 \in \T^1, \]
defines the desired periodic homotopy between $\alpha$ and $\widetilde{\alpha}$. Conversely, assume that the two family of matrices are continuously deformed one into the other through $\alpha_t$, $t \in [0,1]$, so that in particular $\alpha_{t=0} = \alpha$ and $\alpha_{t=1} = \widetilde{\alpha}$. For $k_1 \in [0, 2\pi]$ and $k_2 \in \T^1$, define
\[ \beta(k_1, k_2) := \alpha_{k_1/2\pi}(k_2)^{-1} \, \alpha(k_2), \]
and extend this definition on other intervals of lenght $2\pi$ in $k_1$ by imposing that
\[ \beta(k_1+2\pi,k_2) := \widetilde{\alpha}(k_2)^{-1} \, \beta(k_1,k_2) \,\alpha(k_2) \]
for positive $k_1>0$ and
\[ \beta(k_1,k_2) := \widetilde{\alpha}(k_2) \, \beta(k_1+2\pi,k_2) \, \alpha(k_2)^{-1} \]
for negative $k_1 < 0$.
Notice first that the above defines a family of unitary matrices which is $(2\pi\Z)$-periodic in $k_2$. It remains to show that this definition yields also a continuous function of $k_1$. We have $\beta(0^+,k_2)=\Id$ and $\beta(2\pi^-,k_2)=\widetilde{\alpha}(k_2)^{-1} \, \alpha(k_2)$ by definition. Let $\varepsilon > 0$. If $k_1=-\varepsilon$ is negative but close to zero, we have due to the definition
\[ \beta(-\varepsilon,k_2) = \widetilde{\alpha}(k_2) \, \beta(2\pi-\varepsilon,k_2) \, \alpha(k_2)^{-1} \longrightarrow \widetilde{\alpha}(k_2) \, \beta(2\pi^-,k_2) \, \alpha(k_2)^{-1} = \Id \quad \text{as } \varepsilon \to 0. \]
Hence $\beta$ is continuous at $k_1=0$. At $k_1=2\pi$ we have instead
\[ \beta(2\pi+\varepsilon,k_2) = \widetilde{\alpha}(k_2)^{-1} \, \beta(\varepsilon,k_2) \, \alpha(k_2) \to \widetilde{\alpha}(k_2)^{-1} \, \beta(0^+,k_2) \, \alpha(k_2) = \widetilde{\alpha}(k_2)^{-1} \, \alpha(k_2) \quad \text{as } \varepsilon \to 0 \]
and $\beta$ is also continuous there. A similar argument shows continuity of $k_1 \mapsto \beta(k_1,k_2)$ at every other value of $k_1$ which is an integer multiple of $2\pi$, and therefore on the whole $\R$. This leads to the desired $\beta$ as in the statement.

Next we show that the $1$-degree of a $U(m)$-valued periodic map is an homotopy invariant, and therefore that \eqref{item:2D_alphabeta_2} implies \eqref{item:2D_alphabeta_3}. Without loss of generality, as was mentioned in the previous Remark, we can work with \emph{smooth} homotopies. Therefore, let us pick a deformation $\alpha_s \colon \T^1 \to U(m)$ of some map $\alpha_0$ to some other map $\alpha_1$ which depends smoothly on $s \in [0,1]$. Compute, using the cyclicity of the trace and commuting derivatives,
\begin{align*}
\partial_s \Tr_{\C^m}\big(\alpha_s^{-1} \, \di \alpha_s \big) & = - \Tr_{\C^m}\big(\alpha_s^{-1} \, (\partial_s \alpha_s) \, \alpha_s^{-1} \, \di \alpha_s \big) + \Tr_{\C^m}\big(\alpha_s^{-1} \, \partial_s (\di \alpha_s) \big) = \Tr_{\C^m}\big((\partial_s \alpha_s) \, \di (\alpha_s^{-1} ) \big) + \Tr_{\C^m}\big(\di (\partial_s \alpha_s) \, \alpha_s^{-1}\big) \\
& = \di \, \Tr_{\C^m}\big(\alpha_s^{-1} \, \partial_s \alpha_s \big).
\end{align*}
Consequently
\[ \partial_s \int_{\T^1} \Tr_{\C^m}\big(\alpha_s^{-1} \, \di \alpha_s \big) = \int_{\T^1} \di \, \Tr_{\C^m}\big(\alpha_s^{-1} \, \partial_s \alpha_s \big) = 0 \]
and we can conclude that $\odeg(\alpha_0) = \odeg(\alpha_1)$, as claimed.

Finally we need to show that \eqref{item:2D_alphabeta_3} implies \eqref{item:2D_alphabeta_2}. The general construction of a homotopy between two $U(m)$-valued smooth periodic maps $\alpha$ and $\widetilde{\alpha}$ with the same $1$-degree is presented in Ref.s\ \onlinecite{cornean2017wannier, cornean2019parseval} and uses a \emph{multi-step-logarithm} construction. It is worth pointing out however that the homotopy between a periodic family of unitary matrices and its $U(1)$-part can be explicitly produced also by a different method, described in Ref.\ \onlinecite{gontier2019numerical} as the \emph{column interpolation method}. The basic observation is that each column of the matrix $\alpha(k_2)\in U(m)$ defines a vector in the unit sphere $S^{2m-1} \subset \C^m$. Let us focus on the last column for concreteness. Provided $2m-1 > 1 = \dim \T^1$, that is, $m > 1$, this smooth periodic family of column vectors cannot cover the whole sphere by Sard's lemma, and can therefore be deformed to a constant loop for example by contracting its stereographic projection from a generic point (in the measure-theoretic sense). All the other columns of $\alpha(k_2)$ can be deformed along by retaining the unitarity constraint, which imposes orthogonality with respect to the last column, by using a parallel-trasport argument: this allows to deform the whole matrix $\alpha(k_2)$ \emph{through unitary matrices} to one which has a constant last column, say fixed equal to the last vector in the standard basis of $\C^m$. The resulting periodic family of unitary matrices are therefore in block-diagonal form: an iteration of this deformation argument applied successively to the ``last'' column vectors of the upper-left blocks allows to produce the $(m-1) \times (m-1)$ identity block claimed in the statement and therefore continuously deform $\alpha$ to a diagonal matrix $\alpha_1$ with a single (possibly) non-constant entry $a(k_2) = \det \alpha_1(k_2) \in U(1)$. 
 
In view of Remark \ref{rmk:winding}, we must have that $a=\det \alpha_1 \colon \T^1 \to U(1)$ defines the same homotopy class as $\det \alpha \colon \T^1 \to U(1)$, which in turns allows to continuously deform $\alpha_1$ into $\widetilde{\alpha}$ as in \eqref{eqn:2D_alphatilde}. Explicitly, this can be done as follows: consider the map $f := a^{-1} \, \det \alpha \colon \T^1 \to U(1)$, and observe that this has vanishing winding number due to the additivity of the latter. Observe that, if we can show that $f$ can be continuosly deformed to the constant map $1$ through continuous periodic maps $f_t$, $t \in [0,1]$, then $a$ can be continuously deformed to $\det \alpha$ through $a \cdot f_t$. We will now construct an homotopy $f_t \colon \T^1 \to U(1)$, $t \in [0,1]$.

By uniform continuity of $f \colon \T^1 \simeq [0,2\pi] \to U(1)$, there exists $N \in \N$ such that
\[ |f(k_2) - f(k_2')| < 2 \quad \text{as long as} \quad |k_2-k_2'| \le \frac{2\pi}{N}. \]
The above inequality implies in particular that $f(k_2)\,f(0)^{-1}$ cannot equal $-1$ for all $k_2 \in [0, 2\pi/N]$, and therefore, by choosing a branch cut for the logarithm function on the negative real semi-axis in $\C$, we can write
\[ f(k_2)\,f(0)^{-1} = \eu^{2 \pi \iu \, \theta_0(k_2)} \quad \text{for} \quad k_2 \in \left[0,\frac{2\pi}{N}\right], \quad \text{with} \quad \theta_0(0)=0. \]
Arguing similarly, we can write
\[ f(k_2)\,f\left(\frac{2\pi j}{N}\right)^{-1} = \eu^{2 \pi \iu \, \theta_j(k_2)} \quad \text{for} \quad k_2 \in \left[\frac{2\pi j}{N},\frac{2\pi(j+1)}{N}\right], \quad \text{with} \quad \theta_j\left(\frac{2\pi j}{N}\right)=0, \quad j \in \set{0, \ldots, N-1}. \]
The above normalization for $\theta_j(2\pi j/N)$ is chosen so that the following provides a \emph{continuous} choice of the argument for $f \colon \T^1 \to U(1)$:
\[ f(k_2) = \eu^{2\pi \iu \, \theta(k_2)} \, f(0), \quad \text{where} \quad \theta(k_2) := \sum_{j=0}^{N-1} \left[ \theta_j(k_2) \, \Id\left(\frac{2\pi j}{N} \le k_2 < \frac{2\pi (j+1)}{N}\right) + \sum_{l = 0}^{j-1} \theta_l \left( \frac{2\pi (l+1)}{N} \right) \right]. \]
Notice now that
\begin{equation} \label{eqn:periodicargument}
\theta(2\pi) - \theta(0) = \int_{\T^1} \di \theta = \frac{1}{2\pi\iu} \int_{\T^1} f^{-1} \, \di f = 0,
\end{equation}
as $f$ is supposed to have a vanishing winding number. We conclude that the argument $\theta$ can also be choosen to be periodic. Therefore, $f_t(k_2) := \eu^{2\pi\iu \, (1-t) \, \theta(k_2)} \, f(0)^{1-t}$ defines a continuous deformation $f_t \colon \T^1 \to U(1)$ of $f$ to the map constantly equal to~$1$, as desired.
\end{proof}

\begin{proof}[Proof of Theorem \ref{thm:2D}.\ref{item:2D_alphadelta}]
It follows from Theorem \ref{thm:2D}.\ref{item:2D_alphabeta}, as the two maps $\alpha$ and $\delta$ clearly have the same determinant, and therefore the same $1$-degree.
\end{proof}

In preparation for the proof of Theorem \ref{thm:2D}.\ref{item:2Ddeg=c1}, we make here some general remarks regarding a $d$-dimensional family of projections $P(\bk)$, $\bk \in \T^d$, and the associated Berry curvature $F$ defined in \eqref{eqn:Berry_curv}. For the sake of a self-contained presentation, the proof of these statements is deferred to Appendix \ref{app:CS}, see also the references therein.

Given an orthonormal set $\Phi:=\set{\phi_1(\bk), \ldots, \phi_m(\bk)}$ of smooth Bloch functions, introduce the \emph{Berry connection}, which is the matrix-valued $1$-form $A = (A_{ab})_{1 \le a,b \le m}$ given by
\begin{equation} \label{eqn:BerryConnection}
A_{ab} \equiv A^{(\Phi)}_{ab} = \sum_{1 \le \mu \le d} A_{\mu}(\bk)_{ab} \, \di k_\mu, \quad \text{with} \quad A_{\mu}(\bk)_{ab} := \frac{1}{2\pi\iu} \left\langle \phi_a(\bk) , \, \partial_{k_\mu} \phi_b(\bk) \right\rangle, \quad 1 \le a, b \le m.
\end{equation}
The Berry curvature $F$ is the curvature $2$-form associated to this connection, in the sense that
\[ F = \di A + 2 \pi \iu \, A \wedge A \]
which spells out to
\begin{equation} \label{F=dA-[A,A]} 
F_{\mu\nu}(\bk) = \partial_{\mu} A_{\nu}(\bk) - \partial_{\nu} A_{\mu}(\bk) + 2 \pi \iu \, \big[ A_{\mu}(\bk), A_{\nu}(\bk) \big], \quad 1 \le \mu < \nu \le d\,.
\end{equation} 
In particular, it holds that
\[ c_1(P) = \Tr(F) = \di \, \Tr_{\C^m}(A). \]
The equality above should \emph{not} be interpreted as the first Chern form being exact, as it holds only ``locally'', that is, as long as a smooth orthonormal basis for the range of $P(\bk)$ is defined; indeed, a closed form (such as the first Chern form) is in general only locally exact, by the Poincar\'{e} lemma. As we saw, in general a smooth choice of orthonormal Bloch functions cannot be also made to depend periodically on $\bk$, and therefore $A$ is not a globally-defined $1$-form on the torus $\T^d$. Even more so, the expression provided above for the Berry connection depends non-trivially on the choice of the orthonormal set $\Phi$, or, as one says, it's \emph{gauge-dependent}. Indeed, if $\Psi = \set{\psi_1(\bk), \ldots, \psi_m(\bk)}$ is a different choice of smooth orthonormal Bloch functions, then necessarily
\[ \psi_{a}(\bk) = \sum_{b=1}^{m} \phi_b(\bk) \, \gamma_{ba}(\bk) \]
for some unitary matrix $\gamma(\bk) \in U(m)$, and
\begin{equation} \label{eqn:A_gauge}
A^{(\Psi)} = \gamma^{-1} \, A^{(\Phi)} \, \gamma + \frac{1}{2\pi\iu} \, \gamma^{-1} \, \di \gamma.
\end{equation}
Combining \eqref{F=dA-[A,A]} and \eqref{eqn:A_gauge} one can compute that the Berry curvature is instead gauge-covariant, that is,
\[ F^{(\Psi)} = \gamma^{-1} \, F^{(\Phi)} \, \gamma \]
and therefore traces of its exterior powers, which enter in the definition of the Chern forms, are gauge-independent: this corroborates the fact that the Chern forms themselves only depend on the spaces spanned by the family of projections (and not on the choice of an orthonormal basis for them).

\begin{proof}[Proof of Theorem \ref{thm:2D}.\ref{item:2Ddeg=c1}]
Consider the Berry connection $A^{(\Psi)}$ associated to the Bloch functions defined in \eqref{eqn:2D_psi}. Those are not periodic with respect to $k_1$ but are nonetheless smooth on the whole $\T^2 \simeq [0,2\pi]^2$. Using Stokes' theorem, one can then compute
 \[ c_1^{\set{1,2}}(P) = \int_{\T^2} \Tr(F) = \int_{[0,2\pi]^2} \di \Tr_{\C^m}(A) = \int_{\partial [0,2\pi]^2} \Tr_{\C^m}(A) = \int_{0}^{2\pi} \Tr_{\C^m}\big(A_{2}(2\pi,k_2)\big) \, \di k_2 - \int_{0}^{2\pi} \Tr_{\C^m}\big(A_{2}(0,k_2)\big) \, \di k_2 \]
 (the other two sides of $\partial [0,2\pi]^2$, where $k_2=0$ or $k_2=2\pi$, do not contribute to the last equality, as the integrand $\Tr(A)$ is periodic in $k_2$ and the sides have opposite orientations). Let us now recall that $A_{2}(0,k_2)$, respectively $A_{2}(2\pi,k_2)$, is computed from the vectors $\Psi(0,\cdot) := \set{\psi_a(0,k_2) = \phi_a(0,k_2)}_{1\le a \le m}$, respectively from the vectors $\Psi(2\pi,\cdot) := \set{\psi_a(2\pi,k_2)}_{1 \le a \le m}$ which are related to the vectors $\phi_b(0,k_2)$ by the matching matrices $\alpha(k_2)$. Therefore, using \eqref{eqn:Matching} and \eqref{eqn:A_gauge} we conclude
 \[ c_1^{\set{1,2}}(P) = \int_{\{k_1=2\pi\}} \Tr_{\C^m}\left(A^{(\Psi(2\pi,\cdot))}\right)  - \int_{\{k_1=0\}} \Tr_{\C^m}\left(A^{(\Psi(0,\cdot))}\right) = \frac{1}{2\pi\iu} \int_{0}^{2\pi} \Tr_{\C^m}\left( \alpha(k_2)^{-1} \, \partial_{k_2} \alpha(k_2) \right) \, \di k_2 = \odeg(\alpha) \]
as claimed.
\end{proof}

\begin{remark}[Normal form for the 2D matching matrices] \label{rmk:normalalpha2D}
In view of Theorem \ref{thm:2D}.\ref{item:2D_alphabeta}, the matrix-valued map $\alpha$ defined by the matching matrices \eqref{eqn:matching2D} is homotopic to the map $\alpha\sub{2D} \colon \T^1 \to U(m)$ given by
\begin{equation} \label{eqn:alpha2D}
\alpha\sub{2D}(k_2) := \begin{pmatrix} 
\eu^{\iu \, n_2 \, k_2} & 0 \\ 0 & \Id_{m-1}
\end{pmatrix}, \quad \text{where} \quad n_2 := c_1^{\set{1,2}}(P) \in \Z.
\end{equation}
\end{remark}

We combine now the various statements in Theorem \ref{thm:2D} to recover the results of Ref.\ \onlinecite{cornean2019parseval}, proving Theorem \ref{thm:main} in $d=2$.

\begin{proof}[Proof of Theorem \ref{thm:main}, $d=2$]
As for point \ref{item:orthonormal} in the statement of Theorem \ref{thm:main}, Theorem \ref{thm:2D} and the previous Remark ensure that the smooth Bloch functions from \eqref{eqn:2D_psi} can be transformed, through a unitary matrix $\beta(\bk) \in U(m)$, to Bloch functions $\set{\widetilde{\psi}_a(\bk)}_{1 \le a \le m}$ which are orthonormal, smooth in $\bk$, $(2\pi\Z)$-periodic in $k_2$, and such that
\[ \widetilde{\psi}_1(k_1+2\pi,k_2) = \eu^{\iu \, c_1^{\set{1,2}}(P) \, k_2} \cdot \widetilde{\psi}_1(k_1,k_2) \quad \text{while} \quad \widetilde{\psi}_a(k_1+2\pi,k_2) = \widetilde{\psi}_a(k_1,k_2) \quad \text{for } 2 \le a \le m, \]
that is, the vectors $\set{\widetilde{\psi}_a(\bk)}_{2 \le a \le m}$ are also $(2\pi\Z)$-periodic in $k_1$ while the vector $\widetilde{\psi}_1(\bk)$ picks up a phase $\eu^{\iu \, c_1^{\set{1,2}}(P) \, k_2}$ when going through a loop in the direction $k_1$. If we further assume that $c_1^{\set{1,2}}(P) = 0$, that is, if the Bloch bundle is Chern-trivial, then all the vectors are periodic in $k_1$ as well, and we are done.

Let us now come to point \ref{item:Parseval} in the statement of Theorem \ref{thm:main}. In the general case $c_1^{\set{1,2}}(P) \ne 0$, all the topology has been ``squeezed'' in a rank-1 subprojection $P_1(\bk) = \ket{\widetilde{\psi}_1(\bk)} \bra{\widetilde{\psi}_1(\bk)}$ of $P(\bk)$. Notice that, although $\widetilde{\psi}_1(\bk)$ is not periodic in $k_1$, the projection $P_1(\bk)$ is. Moreover by construction $c_1^{\set{1,2}}(P_1) = c_1^{\set{1,2}}(P)$. A technique dubbed \emph{space-doubling trick} in Ref.\ \onlinecite{cornean2019parseval} allows now to promote the non-periodic vector $\widetilde{\psi}_1(\bk)$ to two smooth and periodic vectors $\Psi_1(\bk)$ and $\Psi_2(\bk)$ which are not orthonormal but still span the same subspace, i.e.\ the 1-dimensional range of $P_1(\bk)$. The procedure requires to complement the projection $P_1$ with another rank-1 projection $Q_1$ of opposite Chern number: this can be done by taking for example \cite{panati2007triviality} $Q_1(\bk) = \overline{P_1(-\bk)}$. The rank-2 projection $P_1 \oplus Q_1$ on $\Hi \oplus \Hi$ is then Chern-trivial and can be therefore spanned by two smooth and periodic ``two-legged'' Bloch functions in $\Hi \oplus \Hi$. The projections of the latter to the first leg in the direct sum give the desired spanning vectors for the range of $P_1$, which together with the orthonormal vectors $\set{\widetilde{\psi}_a(\bk)}_{2 \le a \le m}$ constitute the desired Parseval frame of smooth and periodic Bloch functions for $P(\bk)$. This completes the proof of Theorem \ref{thm:main} in $d=2$.
\end{proof}

\section{3D case and the weak invariants}

We now move to the 3-dimensional (3D) case and consider a family of rank-$m$ projections $P(\bk)$ depending smoothly and periodically on $\bk \in \T^3 \simeq \R^3 / 2 \pi \Z^3$. In view of Theorem \ref{thm:2D}, we can assume that the restriction $P(0, k_2, k_3)$ is spanned by an orthonormal set of smooth Bloch functions $\set{\phi_a(0,k_2,k_3)}_{1 \le a \le m}$ such that
\begin{equation} \label{eqn:3D_phi}
\begin{gathered}
\phi_a(0, k_2, k_3 + 2\pi) = \phi_a(0, k_2, k_3) \quad \text{for all } a \in \set{1, \ldots, m}, \\
\phi_a(0, k_2 + 2\pi, k_3) = \phi_a(0, k_2, k_3) \quad \text{for all } a \in \set{2, \ldots, m}, \quad \text{while} \quad \phi_1(0, k_2 + 2\pi, k_3) = \eu^{\iu \, c_1^{\set{2,3}}(P) \, k_3} \, \phi_1(0, k_2, k_3).
\end{gathered}
\end{equation}
This corresponds to the set of Bloch functions having $\alpha\sub{2D}(k_3)$ in \eqref{eqn:alpha2D} as their matching matrices. In analogy with what was done in \eqref{eqn:2D_psi} in the previous Section, we extend the definition of these Bloch vectors by means of parallel transport in the $k_1$-direction, and set
\begin{equation} \label{eqn:3D_psi}
\psi_a(k_1,k_2,k_3) := T_{(k_2,k_3)}(k_1) \, \phi_a(0,k_2,k_3), \quad 1 \le a \le m,
\end{equation}
where the parallel transport unitaries are defined by the Cauchy problem \eqref{eqn:parallel} and depend parametrically (in a smooth and periodic way) on $(k_2, k_3) \in \T^2$. Once again we can consider the matrix representatives of the holonomy unitaries $T_{(k_2,k_3)}(2\pi)$ in the basis selected above for the range of $P(0,k_2,k_3) = P(2\pi,k_2,k_3)$: this leads to the definition of the matriching matrices
\begin{equation} \label{eqn:3Dmatching}
\alpha_{ab}(k_2,k_3) := \left\langle \phi_b(0,k_2,k_3), \, T_{(k_2,k_3)}(2\pi) \, \phi_a(0,k_2,k_3) \right\rangle, \quad 1 \le a, b \le m.
\end{equation}

The matrices $\alpha(k_2,k_3)$ are unitary, they depend $(2\pi\Z)$-periodically on $k_3$, but due to \eqref{eqn:3D_phi} they satisfy the following condition when one goes along a full loop in the $k_2$-direction%
\footnote{Explicitly, on the right-hand side of the equality below, the first row of the matrix $\alpha(k_2,k_3)$ is multiplied by the phase $\eu^{\iu n_3 k_3}$, while the first column is multiplied by the opposite phase; here, $n_3 = c_1^{\set{2,3}}(P)$.}:
\begin{equation} \label{eqn:alpha2D-periodicity}
\alpha(k_2+2\pi,k_3) = \alpha\sub{2D}(k_3)^{-1} \, \alpha(k_2,k_3) \, \alpha\sub{2D}(k_3)\,.
\end{equation}
The above relation is called \emph{$\alpha\sub{2D}$-periodicity} in the $k_2$-direction in Ref.\ \onlinecite[Definition~6.1]{cornean2019parseval}. Notice how, if $c_1^{\set{2,3}}(P)=0$, then $\alpha\sub{2D} \equiv \Id_m$ and $\alpha\sub{2D}$-periodicity reduces to mere periodicity in the $k_2$-direction.

The following result generalizes Theorem~\ref{thm:2D} to the 3D case; it already appeared as Ref.\ \onlinecite[Proposition~6.3]{cornean2019parseval}.

\begin{theorem} \label{thm:3D}
\begin{enumerate}
 \item \label{item:3D_beta} If the orthonormal basis $\set{\psi_a(k_1,k_2,k_3)}_{1 \le a \le m}$ has matching matrices $\alpha(k_2,k_3)$, and if $\beta(k_1,k_2,k_3)$ is a family of $m \times m$ unitary matrices with $\beta(0,k_2,k_3) \equiv \Id$ which is $\alpha\sub{2D}$-periodic in $k_2$ and $(2\pi\Z)$-periodic in $k_3$, then the orthonormal basis
\[ \phi_a(\bk) := \sum_{b=1}^{m} \psi_b(\bk) \, \beta_{ba}(\bk), \quad \bk = (k_1,k_2,k_3), \]
has matching matrices
\begin{equation} \label{eqn:beta3D}
\widetilde{\alpha}(k_2,k_3) := \beta(k_1,k_2,k_3) \, \alpha(k_2,k_3) \, \beta(k_1+2\pi,k_2,k_3)^{-1}.
\end{equation}
 \item Given two $U(m)$-valued maps $\alpha, \widetilde{\alpha}$ which are $\alpha\sub{2D}$-periodic in $k_2$ and $(2\pi\Z)$-periodic in $k_3$, the following are equivalent.
\begin{enumerate}
 \item \label{item:3D_alphabeta_1} There exist $\beta(\bk) \in U(m)$ as in the previous point such that \eqref{eqn:beta3D} holds.
 \item \label{item:3D_alphabeta_2} The maps $\alpha, \widetilde{\alpha}$ are \emph{$\alpha\sub{2D}$-periodically homotopically equivalent}, that is, they can be continuously deformed one into the other through maps which are $\alpha\sub{2D}$-periodic in $k_2$ and $(2\pi\Z)$-periodic in $k_3$.
 \item \label{item:3D_alphabeta_3} The two \emph{$1$-degrees} of the maps $\alpha, \widetilde{\alpha}$ agree: for $j \in \set{2,3}$
 \[ \odeg_j(\alpha) = \odeg_j(\widetilde{\alpha}) \in \Z  \quad \text{with} \quad \odeg_j(\alpha) := \odeg \left( \alpha \Big|_{\T^1_{\set{j}}} \right) \]
(recall the definition of $\T^n_I \subset \T^d$ given below \eqref{eqn:ChernNumbers}).
\end{enumerate}
 \item \label{item:3D_alphabeta} The family of matrices
 \begin{equation} \label{eqn:3D_alphatilde}
 \alpha(k_2,k_3) \quad \text{and} \quad \delta(k_2,k_3) := \delta(\alpha(k_2,k_3)) = 
 \begin{pmatrix}
 \det \alpha(k_2,k_3) & 0 \\ 0 & \Id_{m-1}
 \end{pmatrix}
 \end{equation}
 define $\alpha\sub{2D}$-periodically homotopically equivalent $U(m)$-valued maps $\alpha, \widetilde{\alpha}$.
 \item \label{item:3Ddeg=c1} For $j \in \set{2,3}$, it holds that
\begin{equation} \label{eqn:3D_deg_c1} 
\odeg_j(\alpha) = c_1^{\set{1,j}}(P) \in \Z.
\end{equation}
\end{enumerate}
\end{theorem}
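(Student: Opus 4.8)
The proof mirrors that of Theorem~\ref{thm:2D}, the main new ingredient being that every deformation must now be carried out within the class of maps that are $\alpha\sub{2D}$-periodic in $k_2$; equivalently, one works equivariantly for the $\Z$-action $\gamma \mapsto \alpha\sub{2D}(k_3)^{-1}\,\gamma\,\alpha\sub{2D}(k_3)$ on matching matrices which implements the shift $k_2 \mapsto k_2 + 2\pi$. Part~\ref{item:3D_beta} is then the same bookkeeping computation as in Theorem~\ref{thm:2D}.\ref{item:2D_beta}: substitute the definitions into \eqref{eqn:3Dmatching}, and note that $\widetilde{\alpha}$ inherits $\alpha\sub{2D}$-periodicity in $k_2$ and $(2\pi\Z)$-periodicity in $k_3$ from $\alpha$ and $\beta$. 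In the three-way equivalence of part~2, the implication \eqref{item:3D_alphabeta_1}$\Leftrightarrow$\eqref{item:3D_alphabeta_2} reuses verbatim the explicit interpolations of the 2D proof ($\alpha_t(k_2,k_3) := \beta(-\pi t,k_2,k_3)\,\alpha(k_2,k_3)\,\beta(\pi t,k_2,k_3)^{-1}$ in one direction; $\beta(k_1,k_2,k_3) := \alpha_{k_1/2\pi}(k_2,k_3)$ on $k_1 \in [0,2\pi]$, extended by $\beta(k_1+2\pi,\cdot) = \widetilde{\alpha}^{-1}\,\beta(k_1,\cdot)\,\alpha$, in the other), the point being that $\alpha\sub{2D}$-periodicity is automatically preserved because $\alpha$, $\widetilde{\alpha}$ and every $\alpha_t$ enjoy it. For \eqref{item:3D_alphabeta_2}$\Rightarrow$\eqref{item:3D_alphabeta_3} the crucial remark is that, although $\alpha$ itself is only $\alpha\sub{2D}$-periodic in $k_2$, the scalar $a := \det\alpha$ is honestly $(2\pi\Z)$-periodic in \emph{both} variables, since conjugating by $\alpha\sub{2D}(k_3)$ does not affect determinants; hence by Remark~\ref{rmk:winding} each $\odeg_j(\alpha)$ is the winding number of the genuine loop obtained by restricting $a$ to the $j$-th circle --- an integer independent of the value of the frozen coordinate --- and the same differentiation-under-the-integral-sign computation as in the proof of Theorem~\ref{thm:2D}.\ref{item:2D_alphabeta} shows it is invariant under $\alpha\sub{2D}$-periodic homotopies.

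The substantial step is \eqref{item:3D_alphabeta_3}$\Rightarrow$\eqref{item:3D_alphabeta_1}, and this is where I expect the main obstacle to lie: one must run the 2D reduction of $\alpha$ to diagonal form \emph{equivariantly} for the above $\Z$-action. The plan is to contract the last column $v(k_2,k_3) \in S^{2m-1} \subset \C^m$ of $\alpha$ to the fixed vector $e_m$, dragging the other columns along by parallel transport through $U(m)$ exactly as in the column-interpolation argument of the 2D proof (the multi-step-logarithm construction of Ref.s~\onlinecite{cornean2017wannier, cornean2019parseval} can be made equivariant as well). The new constraint is that, since $\alpha\sub{2D}(k_3)\,e_m = e_m$ for $m \ge 2$, the column satisfies the twisted boundary condition $v(k_2+2\pi,k_3) = \alpha\sub{2D}(k_3)^{-1}\,v(k_2,k_3)$, so the whole contraction has to be equivariant. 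This can be arranged because the fixed-point set of $w \mapsto \alpha\sub{2D}(k_3)^{-1}\,w$ on $S^{2m-1}$ is the subsphere $\set{w_1 = 0} \cong S^{2m-3}$: since the domain $[0,2\pi]\times\T^1$ is only $2$-dimensional while $2m-3 \ge 1$, after a small (hence harmless) equivariant perturbation of $\alpha$ the column $v$ misses some point $q$ of this subsphere --- automatically for $m \ge 3$, and by a transversality argument for $m=2$ --- and then stereographic projection from $q$ conjugates the action to a linear action on $\R^{2m-1}$, so the straight-line contraction toward the image of $-q$ is equivariant; finally one slides $-q$ equivariantly to $e_m$ along the connected fixed subsphere. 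Iterating on the upper-left $(m-1)\times(m-1)$ block, on which $\alpha\sub{2D}$ restricts to a map of the same form, brings $\alpha$ (and likewise $\widetilde{\alpha}$) into the shape $\mathrm{diag}(a,\Id_{m-1})$ with $a \colon \T^2 \to U(1)$ a \emph{genuine} map, $\alpha\sub{2D}$-periodicity having collapsed to ordinary biperiodicity on the diagonal ($m=1$ is already of this form). Now $a\,\widetilde{a}^{-1} \colon \T^2 \to U(1)$ has both winding numbers zero by \eqref{item:3D_alphabeta_3}, hence is null-homotopic --- write it as $\eu^{2\pi\iu\theta}$ with $\theta \colon \T^2 \to \R$ and contract through $(1-t)\,\theta$ --- and this yields the desired $\alpha\sub{2D}$-periodic homotopy, equivalently (by the part just proved) the matrix $\beta$. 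Part~\ref{item:3D_alphabeta} follows at once: $\alpha$ and $\delta(\alpha)$ have the same determinant, hence the same pair of $1$-degrees, so \eqref{item:3D_alphabeta_3}$\Rightarrow$\eqref{item:3D_alphabeta_1} applies.

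Finally, for part~\ref{item:3Ddeg=c1} I would repeat the Stokes-theorem computation from the proof of Theorem~\ref{thm:2D}.\ref{item:2Ddeg=c1} on a conveniently chosen sub-torus. Since $c_1^{\set{1,j}}(P)$ is a topological invariant it may be evaluated on the slice $\set{k_3 = 0}$ when $j=2$ and on $\set{k_2 = 0}$ when $j=3$; on these slices $\alpha$ restricts to a \emph{genuine} loop (because $\alpha\sub{2D}(0) = \Id_m$, respectively because $\alpha$ is already $(2\pi\Z)$-periodic in $k_3$), so that $\odeg_j(\alpha)$ is literally the degree of that loop. On each slice the Bloch functions \eqref{eqn:3D_psi} are smooth on $[0,2\pi]^2$, $(2\pi\Z)$-periodic in the transverse coordinate --- for $k_3=0$ the phase $\eu^{\iu\,c_1^{\set{2,3}}(P)\,k_3}$ in \eqref{eqn:3D_phi} is trivial, and for $k_2=0$ periodicity in $k_3$ is part of \eqref{eqn:3D_phi} --- and fail to be periodic in $k_1$ only through the matching matrices. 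Writing $\Tr(F) = \di\,\Tr_{\C^m}(A)$ for the Berry connection of these functions and applying Stokes' theorem on $[0,2\pi]^2$, the two transverse sides cancel by periodicity while the two $k_1$-sides combine, via the gauge transformation law \eqref{eqn:A_gauge}, into $\frac{1}{2\pi\iu}\int_{\T^1}\Tr_{\C^m}(\alpha^{-1}\,\di\alpha)$ taken along the transverse circle, which is exactly $\odeg_j(\alpha)$. This establishes \eqref{eqn:3D_deg_c1} and finishes the proof.
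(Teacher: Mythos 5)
Your proposal matches the paper for parts~\ref{item:3D_beta}, the equivalence \eqref{item:3D_alphabeta_1}$\Leftrightarrow$\eqref{item:3D_alphabeta_2}, the implication \eqref{item:3D_alphabeta_2}$\Rightarrow$\eqref{item:3D_alphabeta_3}, part~\ref{item:3D_alphabeta}, and part~\ref{item:3Ddeg=c1}; in particular the observation that $\det\alpha$ is genuinely biperiodic even though $\alpha$ is only $\alpha\sub{2D}$-periodic is exactly the mechanism the paper relies on, and your explicit slice-by-slice restriction in part~\ref{item:3Ddeg=c1} is a spelled-out version of the paper's terse appeal to Theorem~\ref{thm:2D}.\ref{item:2Ddeg=c1}.

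For the substantial step \eqref{item:3D_alphabeta_3}$\Rightarrow$\eqref{item:3D_alphabeta_1}, however, you take a genuinely different route. The paper factors this into Proposition~\ref{prop:mu-nu-periodic} (stated for $(\mu,\nu)$-periodicity in \emph{both} directions, for later reuse in the 4D case) and proceeds in two stages: first it converts $\alpha$ on each edge of $\partial[0,2\pi]^2$ into an honestly periodic loop by conjugating with $\eu^{\iu k_3 M}$ (where $\mu(0)=\eu^{2\pi\iu M}$), diagonalizes there, and then uses a ``thickening'' of the cell so that the interior becomes a genuinely $(2\pi\Z^2)$-periodic map to which the ordinary column-interpolation argument applies. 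Your proposal instead runs the column interpolation \emph{equivariantly} throughout the cell, exploiting the linear $\Z$-action $w\mapsto\alpha\sub{2D}(k_3)^{-1}w$ on the last column, the fact that its fixed subsphere $\set{w_1=0}\cong S^{2m-3}$ is preserved by stereographic projection from any of its points, and the connectedness of that subsphere to slide the final constant to $e_m$. This is conceptually cleaner for the present case (it avoids the thickening), but note that it leans crucially on $\mu\equiv\Id$: with two non-trivial commuting twists, as needed in the 4D analysis, the last column is constrained by \emph{two} different linear actions and the contraction would have to be equivariant for both simultaneously; the paper's boundary-normalization approach sidesteps this.

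One genuine gap to flag in your version: for $m=2$ the fixed subsphere is $S^1$, of dimension $1<2$, so the image of the last column may a priori cover it, and the existence of a puncture $q\in S^{2m-3}\setminus\mathrm{Im}(v)$ requires a transversality argument, which you only gesture at. You would need to show that an \emph{equivariant} smooth perturbation of $\alpha$ (compatible with $\alpha\sub{2D}$-periodicity in $k_2$ and periodicity in $k_3$) can make $0\in\C$ a regular value of the map $(k_2,k_3)\mapsto v_1(k_2,k_3)$, so that $v^{-1}(S^{2m-3})$ is finite and the puncture exists; this is plausible but not automatic. It would also be worth writing out that the parallel-transport step which drags the remaining columns along the contraction respects the $\alpha\sub{2D}$-twist, i.e.\ that the unitaries $U_t$ with $U_t v_0 = v_t$ satisfy $U_t(k_2+2\pi,k_3)=\alpha\sub{2D}(k_3)^{-1}U_t(k_2,k_3)\,\alpha\sub{2D}(k_3)$: this follows from the equivariance of $v_t$ and uniqueness of the parallel transport, but the claim should be stated and checked.
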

\begin{proof}
The statement of point \ref{item:3D_beta} follows once again from a direct computation.

The equivalence between \eqref{item:3D_alphabeta_1} and \eqref{item:3D_alphabeta_2} can be argued exactly as in the proof of Theorem \ref{thm:2D}; this time, one just has to notice that all matrices involved are $\alpha\sub{2D}$-periodic rather than $(2\pi\Z)$-periodic in $k_2$. 

If \eqref{item:3D_alphabeta_2} holds, then notice that to compute the $1$-degrees of the maps $\alpha$ and $\widetilde{\alpha}$ one just needs to compute winding numbers of their determinants, according to Remark \ref{rmk:winding}. Notice also that, even if $\alpha$ and $\widetilde{\alpha}$ are in general \emph{not} periodic in $k_2$ -- rather they satisfy \eqref{eqn:alpha2D-periodicity} -- their determinants define periodic functions of both $k_2$ and $k_3$. Therefore, any ($\alpha\sub{2D}$-periodic) homotopy of the maps $\alpha$ and $\widetilde{\alpha}$ induces a regular homotopy of the determinant of their restrictions to $\T^1_{\set{j}}$, $j \in \set{2,3}$, as periodic maps. In view of Theorem \ref{thm:2D}, this implies the equality between the $1$-degrees of $\alpha$ and $\widetilde{\alpha}$.

To complete the proof of point \ref{item:3D_alphabeta}, one would need to show, conversely, how to construct an homotopy between maps which have  equal $1$-degrees. For later reference, we factor this construction in the proof of Proposition \ref{prop:mu-nu-periodic} below, where we will even impose a ``quasi''-periodicity in both directions (and not just in the direction of $k_2$). This also proves point \ref{item:3D_alphabeta} of the statement, as the matrices $\alpha$ and $\delta$ in \eqref{eqn:3D_alphatilde} clearly share the same determinant and therefore have equal $1$-degrees.

Finally, point \ref{item:3Ddeg=c1} of the statement is a direct consequence of Theorem \ref{thm:2D}.\ref{item:2Ddeg=c1}.
\end{proof}

\begin{proposition} \label{prop:mu-nu-periodic}
Let $\mu, \nu \colon \T^1 \to U(m)$ be smooth periodic maps of the form
\[ \mu(k_2) = \begin{pmatrix} \det \mu(k_2) & 0 \\ 0 & \Id_{m-1} \end{pmatrix}, \quad \nu(k_3) = \begin{pmatrix} \det \nu(k_3) & 0 \\ 0 & \Id_{m-1} \end{pmatrix}. \]
Assume that $\alpha(k_2,k_3) \in U(m)$ is \emph{$(\mu,\nu)$-periodic}, that is,
\begin{equation} \label{eqn:mu-nu-periodic}
\alpha(k_2 + 2\pi, k_3) = \nu(k_3)^{-1} \, \alpha(k_2,k_3) \, \nu(k_3), \quad \alpha(k_2, k_3 + 2\pi) = \mu(k_2)^{-1} \, \alpha(k_2,k_3) \, \mu(k_2).
\end{equation}
(Notice that the two conditions above are compatible because the diagonal matrices $\mu$ and $\nu$ commute.) Then
\begin{equation} \label{eqn:alpha-tildealpha-mu-nu}
\alpha(k_2,k_3) \quad \text{and} \quad \delta(k_2,k_3) = \delta(\alpha(k_2,k_3)) := \begin{pmatrix} \det \alpha(k_2,k_3) & 0 \\ 0 & \Id_{m-1} \end{pmatrix}
\end{equation}
are \emph{$(\mu,\nu)$-periodically homotopically equivalent}, that is, they can be continuously defomed one into the other via $(\mu,\nu)$-periodic $U(m)$-valued maps.
\end{proposition}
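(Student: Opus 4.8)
The strategy will be an induction on the rank $m$ which, in the present $(\mu,\nu)$-periodic setting, mimics the column-interpolation method from the proof of Theorem~\ref{thm:2D}: at each stage I deform $\alpha$ so that its \emph{last} column becomes the constant $m$-th standard basis vector $e_m$, which automatically puts the matrix in block-diagonal form, and then I recurse on the upper-left $(m-1)\times(m-1)$ block. The reason to freeze the last column (and not the first) is that $\mu(k_2)$ and $\nu(k_3)$ act non-trivially only on the first coordinate of $\C^m$; hence each $e_j$ with $j\ge 2$, and more generally any matrix of the shape $\mathrm{diag}(\ast,\Id_{m-1})$, is a fixed point of the conjugations appearing in \eqref{eqn:mu-nu-periodic}. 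Consequently, once $\alpha'$ has last column $e_m$, unitarity forces $\alpha' = \mathrm{diag}(\alpha'',1)$ with $\alpha''$ an $(m-1)\times(m-1)$ unitary that is $(\mu',\nu')$-periodic for $\mu' := \mathrm{diag}(\det\mu,\Id_{m-2})$ and $\nu' := \mathrm{diag}(\det\nu,\Id_{m-2})$; since $\det\mu'=\det\mu$ and $\det\nu'=\det\nu$, these are again of the form required to apply the inductive hypothesis. The base case $m=1$ is immediate, as then $\alpha=\det\alpha=\delta(\alpha)$.

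For the inductive step ($m\ge2$), the last column $v(k_2,k_3)$ of $\alpha$ is a unit vector in $\C^m$ which, because $\nu e_m=\mu e_m=e_m$, obeys $v(k_2+2\pi,k_3)=\nu(k_3)^{-1}v(k_2,k_3)$ and $v(k_2,k_3+2\pi)=\mu(k_2)^{-1}v(k_2,k_3)$; equivalently, $v$ is a section of the unit-sphere bundle $S(E)\to\T^2$ of the twisted rank-$m$ Hermitian bundle $E$ over $\T^2$ whose structure cocycle is built from the block-diagonal $\mu$ and $\nu$, and the constant map $e_m$ is also such a section (precisely because $e_m$ is cocycle-invariant). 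Since $2m-1\ge3$, the fiber $S^{2m-1}$ is $2$-connected, so obstruction theory shows that $v$ and $e_m$ are homotopic through sections of $S(E)$. The ``take the last column'' map $\mathrm{Fr}(E)\to S(E)$ from the unitary frame bundle is a locally trivial fibration with fiber $U(m-1)$, so by the homotopy lifting property this homotopy of $v$ lifts to a homotopy of $\alpha$ through sections of $\mathrm{Fr}(E)$, i.e.\ through $(\mu,\nu)$-periodic $U(m)$-valued maps, ending at an $\alpha'=\mathrm{diag}(\alpha'',1)$. (In closer analogy with Theorem~\ref{thm:2D}, one can instead reduce $v$ to $e_m$ more explicitly by stereographic projection from $-e_m$ followed by a linear contraction --- legitimate since the cocycle fixes $e_m$ and acts linearly --- after a preliminary adjustment, supplied by Sard's lemma, that makes $v$ avoid $-e_m$, the remaining columns being carried along by a parallel-transport argument.) Applying the inductive hypothesis to $\alpha''$ and lifting as $\mathrm{diag}(\,\cdot\,,1)$ --- which preserves $(\mu,\nu)$-periodicity because $\nu=\mathrm{diag}(\nu',1)$ and $\mu=\mathrm{diag}(\mu',1)$ --- deforms $\alpha'$, $(\mu,\nu)$-periodically, to $\mathrm{diag}(\det\alpha'',\Id_{m-1})$.

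It remains to join $\mathrm{diag}(\det\alpha'',\Id_{m-1})$ to $\delta(\alpha)=\mathrm{diag}(\det\alpha,\Id_{m-1})$. Here the key remark is that $\det$ is genuinely $(2\pi\Z)^2$-periodic along every homotopy used so far --- indeed $\det(\nu^{-1}\alpha\nu)=\det\alpha$ --- so $\det\alpha''$ and $\det\alpha$ are homotopic as maps $\T^2\to U(1)$; hence their quotient has vanishing winding number along each generating circle of $\T^2$, and can therefore be contracted to the constant $1$ by lifting its argument to $\R$, exactly as in \eqref{eqn:periodicargument}. Inserting this $U(1)$-valued homotopy into the first diagonal entry keeps the matrix block-diagonal, hence genuinely periodic and a fortiori $(\mu,\nu)$-periodic, and connects the two matrices. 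Concatenating the three deformations proves the proposition. The single delicate point throughout --- and the reason the induction freezes the last rather than the first column --- is to guarantee that every intermediate homotopy is genuinely $(\mu,\nu)$-periodic and not merely periodic; this is exactly what the cocycle-invariance of $e_m$ and of the block-diagonal normal forms secures. The only remaining bookkeeping is to track $\det\alpha$ across the deformations and to check that $\mu',\nu'$ retain the required diagonal shape, both of which are immediate.
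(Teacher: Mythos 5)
Your proof is essentially correct, but it takes a genuinely different route from the paper. The paper reduces the twisted problem to the untwisted one: it first deforms $\alpha$ on the \emph{boundary} of the cell $[0,2\pi]^2$ to diagonal form (using $\mu(0)=\eu^{2\pi\iu M}$ to conjugate away the twist edge-by-edge), then pushes this boundary deformation inward via the ``thickening'' trick, so that the $(2\pi\Z^2)$-periodic extension of the end-point makes sense and Gontier's periodic column-interpolation can be applied, after which a second thickening makes the construction relative to the boundary. You instead induct on the rank $m$ and treat the twist intrinsically, encoding $(\mu,\nu)$-periodicity as a bundle structure on $\T^2$, noting that $e_m$ is cocycle-fixed so the constant section is available as a reference, and then invoking obstruction theory ($S^{2m-1}$ is $2$-connected for $m\ge 2$) plus the homotopy lifting property. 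This avoids the thickening trick entirely and makes the role of the cocycle-invariant vector $e_m$ transparent, at the cost of being non-constructive (the paper explicitly advertises constructive proofs). Both routes meet at the same $U(1)$ end-game, where the determinant's genuine periodicity and vanishing winding numbers of the quotient give the final contraction. One small technical correction: since $\alpha$ is \emph{conjugation}-equivariant, it is a section of the bundle $U(E)\subset\operatorname{End}(E)$ of unitary endomorphisms of $E$, not of the frame bundle $\operatorname{Fr}(E)$ --- sections of the latter are global trivializations and need not exist when $\det\mu$ or $\det\nu$ wind. With $\operatorname{Fr}(E)$ replaced by $U(E)$ throughout, the ``take-the-last-column'' map $U(E)\to S(E)$, $\phi\mapsto\phi(e_m)$, is still a locally trivial fibration with fiber $U(m-1)$ (here one uses again that $e_m$ is a canonical global section of $E$), and the rest of your argument goes through verbatim. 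Your parenthetical sketch of a constructive alternative via stereographic projection is somewhat optimistic: the centre of projection must be a fixed point of the cocycle, i.e.\ lie in $\{0\}\times S^{2m-3}$, and Sard's lemma applied to $v\colon\T^2\to S^{2m-1}$ does not directly give a point of that subsphere outside the image of $v$; a transversality/perturbation argument among sections is needed instead, but this is a side remark and does not affect the main obstruction-theoretic argument.
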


In the terminology established in the above statement, the matching matrices defined in \eqref{eqn:3Dmatching} are $(\Id,\alpha\sub{2D})$-periodic.

\begin{proof}[Proof of Proposition \ref{prop:mu-nu-periodic}]
While the multi-step-logarithm construction presented in Ref.s\ \onlinecite{cornean2017wannier, cornean2017construction} has been already generalized to the $\alpha\sub{2D}$-periodic setting \cite{cornean2019parseval}, the column interpolation method from Ref.\ \onlinecite{gontier2019numerical} briefly described in the proof of Theorem \ref{thm:2D}.\ref{item:2D_alphabeta} applies to \emph{$(2\pi\Z^2)$-periodic} maps $\alpha \colon \T^2 \to U(m)$, which can then be brought to a diagonal form $\delta$ as in \eqref{eqn:3D_alphatilde}, but not to $\alpha\sub{2D}$-periodic maps. We sketch here how to adapt the proof presented in Ref.\ \onlinecite{gontier2019numerical} to cover this more general case.

Let us restrict our attention to $(k_2,k_3) \in [0,2\pi]^2$. The unitary matrices $\alpha(k_2,k_3)$ satisfy
\[ \alpha(2\pi,k_3) = \nu(k_3)^{-1} \, \alpha(0,k_3) \, \nu(k_3), \quad \text{with} \quad \alpha(0,2\pi) = \mu(0)^{-1} \, \alpha(0,0) \, \mu(0). \]
Write $\mu(0) =: \eu^{2 \pi \iu M}$, where $M = M^*$ is a self-adjoint $m \times m$ matrix which is diagonal and has (possibly) only the first diagonal entry different from $0$. Then
\[ \gamma(k_3) := \eu^{\iu \, k_3 \, M} \, \alpha(0,k_3) \, \eu^{-\iu \, k_3 \, M} \]
defines a \emph{$(2\pi\Z)$-periodic} map with values in $U(m)$. As such, it can then be deformed to a diagonal matrix as in \eqref{eqn:2D_alphatilde}: there exists $\gamma_s \colon \T^1 \to U(m)$, $s \in [0,1]$, such that
\[ \gamma_{s=0}(k_3) = \gamma(k_3) \quad \text{and} \quad \gamma_{s=1}(k_3) = \begin{pmatrix} \det \gamma(k_3) & 0 \\ 0 & \Id_{m-1} \end{pmatrix} = \begin{pmatrix} \det \alpha(0,k_3) & 0 \\ 0 & \Id_{m-1} \end{pmatrix}. \] 
Letting now $\alpha_s(0,k_3) := \eu^{-\iu \, k_3 \, M} \, \gamma_s(k_3) \, \eu^{\iu \, k_3 \, M}$, for $k_3 \in [0,2\pi]$ and $s \in [0,1]$, we obtain a smooth deformation which for fixed $s$ satisfies the same quasi-periodicity of $\alpha(0,\cdot)$ and which interpolates
\[ \alpha_{s=0}(0,k_3) = \alpha(0,k_3) \quad \text{and} \quad \alpha_{s=1}(0,k_3) = \widetilde{\alpha}(0,k_3) := \begin{pmatrix} \det \alpha(0,k_3) & 0 \\ 0 & \Id_{m-1} \end{pmatrix}. \]
In view of \eqref{eqn:alpha-tildealpha-mu-nu}, it is clear that 
\[ \alpha_s(2\pi,k_3) := \nu(k_3)^{-1} \, \alpha_s(0,k_3) \, \nu(k_3), \quad k_3 \in [0,2\pi], \; s \in [0,1], \]
defines moreover a smooth map which interpolates between
\[ \alpha_{s=0}(2\pi,k_3) = \alpha(2\pi,k_3) \quad \text{and} \quad \alpha_{s=1}(2\pi,k_3) = \widetilde{\alpha}(2\pi,k_3) := \begin{pmatrix} \det \alpha(2\pi,k_3) & 0 \\ 0 & \Id_{m-1} \end{pmatrix}. \]

Changing the roles of $\mu$ and $\nu$, we can similarly find on the horizontal sides of the boundary of $[0,2\pi]^2$ that $\alpha(k_2,0)$ is smoothly interpolated via $\alpha_s(k_2,0)$ to a diagonal matrix $\widetilde{\alpha}(k_2,0)$ with only one (possibly) non-constant entry. This interpolation can be chosen to coincide, at $k_2=0$, with the one already considered by restricting $\alpha_s(0,k_3)$ to $k_3=0$. Imposing the appropriate quasi-periodicity, we can then find a corresponding interpolation between $\alpha(k_2,2\pi)$ and the diagonal $\widetilde{\alpha}(k_2,2\pi)$. We have therefore constructed a homotopy of the restriction of $\alpha$ to the boundary of $[0,2\pi]^2$ to matrices of diagonal form, which is continuous on the whole boundary, and moreover this homotopy is by construction compatible with $(\mu,\nu)$-periodicity. 

\begin{figure}[htb]
\centering
\begin{tikzpicture}[>=stealth]
\filldraw [lightgray] (-1.5,-1.5) rectangle (1.5,1.5);
\filldraw [white] (-1,-1) rectangle (1,1);
\draw (-1,-1) rectangle (1,1);
\draw (-1.5,-1.5) rectangle (1.5,1.5);
\draw (-1.5,-1.5) -- (-1,-1)
      (-1.5,1.5) -- (-1,1)
      (1.5,-1.5) -- (1,-1)
      (1.5,1.5) -- (1,1);
\foreach \x in {-1,0,1}{
\draw [->] (1.1,{1.1*tan(30*\x)}) -- +({.3*cos(30*\x)},{.3*sin(30*\x)});
}
\foreach \x in {2,3,4}{
\draw [->] ({1.1*cos(30*\x)/sin(30*\x)},1.1) -- +({.3*cos(30*\x)},{.3*sin(30*\x)});
}
\foreach \x in {5,6,7}{
\draw [->] (-1.1,{-1.1*tan(30*\x)}) -- +({.3*cos(30*\x)},{.3*sin(30*\x)});
}
\foreach \x in {8,9,10}{
\draw [->] ({-1.1*cos(30*\x)/sin(30*\x)},-1.1) -- +({.3*cos(30*\x)},{.3*sin(30*\x)});
}\end{tikzpicture}
\caption{``Thickened'' box $[-1,2\pi+1]^2$, containing $[0,2\pi]^2$ (here in white)}
\label{fig:thicc}
\end{figure}
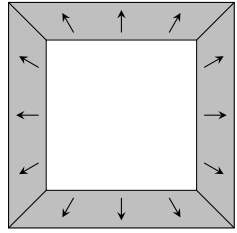

Consider now the ``thickened'' box $[-1,2\pi+1]^2$. Parametrize the radial direction in $[-1,2\pi+1]^2 \setminus [0,2\pi]^2$ (the grey area in Figure \ref{fig:thicc}) via $s \in [0,1]$. Extend now the definition of $\alpha$ from $[0,2\pi]^2$ to the ``thickened'' box as follows: for example, in the left-most quadrant of the grey area, each point is parametrized by $k_3 \in [0,2\pi]$ and the radial coordinate $s \in [0,1]$ as explained previously; at this point, the matrix $\alpha_s(0,k_3)$ should be used to extend the definition of $\alpha$. The different homotopies constructed previously should be used for the different quadrants of the grey area $[-1,2\pi+1]^2 \setminus [0,2\pi]^2$.

We now define unitary matrices $\widehat{\alpha}_t(k_2,k_3)$ as follows: consider the restriction of the above extension of $\alpha$ to the box $[-t,2\pi+t]^2$, and rescale this larger box to $[0,2\pi]^2$. By construction, the restrictions to the boundary of $[0,2\pi]^2$ of these unitary-valued maps are compatible with $(\mu,\nu)$-periodicity. At $t=0$, this family of matrices coincides with the original $\alpha$, and at $t=1$ the matrices $\widehat{\alpha}_{t=1}(k_2,k_3)$ for $(k_2,k_3) \in \partial [0,2\pi]^2$ are diagonal. As such, for $\widehat{\alpha}_{t=1}(k_2,k_3)$ the notion of $(\mu,\nu)$-periodicity coincides with mere $(2\pi\Z^2)$-periodicity, as $\alpha\sub{2D}$ is itself diagonal. We consider now the \emph{$(2\pi\Z^2)$-periodic} extension of $\widehat{\alpha}_{t=1}$ to the whole~$\R^2$, which we regard then as a map $\widehat{\alpha}_{t=1} \colon \T^2 \to U(m)$. The results from Ref.\ \onlinecite{gontier2019numerical} apply to this maps, and it can be deformed further to diagonal matrices with only one non-trivial entry. By combining this deformation with a ``thickening'' argument of the type described above, we can assume that the deformation is ``relative to the boundary'', that is, that the restriction of the unitary-valued map to the boundary of $[0,2\pi]^2$ retains the diagonal form (which is extended to the interior of the cell by means of the column interpolation argument).

Call $\alpha_t$ the family of matrices on $[0,2\pi]^2$ which emerges as the restriction to this cell of the one obtained from the above construction (possibly rescaling $t \in [0,1]$). We impose now \eqref{eqn:mu-nu-periodic} to extend this definition to $(k_2,k_3) \in \R^2$ in a $(\mu,\nu)$-periodic way. We end up with a family of unitary matrices $\alpha_t(k_2,k_3)$ which satisfies the required $(\mu,\nu)$-periodicity conditions, depends smoothly on $t \in [0,1]$, and interpolates bewteen the original $\alpha$ and the diagonal matrix 
\[ \widetilde{\alpha}(k_2,k_3) = \begin{pmatrix} \det \widetilde{\alpha}(k_2,k_3) & 0 \\ 0 & \Id_{m-1} \end{pmatrix}. \]
To conclude the proof, we need to show that $a := \det \alpha$ and $\widetilde{a} := \det \widetilde{\alpha}$ are homotopic one to the other as maps $\T^2 \to U(1)$. As in the proof of Theorem \ref{thm:2D}.\ref{item:2D_alphabeta}, it suffices to show that $f := a^{-1} \, \widetilde{a} \colon \T^2 \to U(1)$ is homotopic to the constant map. We observe first of all that the restrictions of $f$ to the sub-tori $\T^1_{\set{j}} \subset \T^2$, $j \in \set{2,3}$, have vanishing winding numbers: we will now argue that this is the key condition to guarantee that $f$ can be continuously deformed to the map constantly equal to $1$.

Let us consider $f(k_2,k_3)$ as a periodic and continuous function of $k_2$ with values in continuous and periodic functions of $k_3$. Endowing the latter space with the $\sup$-norm, by uniform continuity we see that there exists $N \in \N$ such that
\[ \sup_{k_3 \in \T^1} \left| f(k_2, k_3) - f(k_2', k_3) \right| < 2 \quad \text{as long as} \quad |k_2 - k_2'| \le \frac{2\pi}{N}. \]
But then $f(k_2,k_3) \, f(0,k_3)^{-1}$ never assumes the value $-1$, and one can define a continuous argument
\[ f(k_2,k_3) \, f(0,k_3)^{-1} = \eu^{2 \pi \iu \, \theta_0(k_2,k_3)}, \quad k_2 \in \left[0, \frac{2\pi}{N}\right], \; k_3 \in \T^1. \]
Notice in particular that $k_3 \mapsto f(k_2,k_3)$ and $k_3 \mapsto f(0,k_3)$ have the same winding number (equal to zero, but this is not needed at the moment), since they can be deformed into one another continuously along $k_2$: therefore $k_3 \mapsto \theta_0(k_2,k_3)$ can be choosen to be periodic, compare \eqref{eqn:periodicargument}. Iterating this process as in in the proof of Theorem \ref{thm:2D}.\ref{item:2D_alphabeta}, we conclude that we can write
\[ f(k_2,k_3) = \eu^{2 \pi \iu \, \theta(k_2,k_3)} \, f(0,k_3) \]
where $\theta(k_2,k_3)$ is a continuous function of both arguments and is moreover periodic in $k_3$. By computing the winding number of $f$ along the direction $k_2$ for fixed $k_3$, which vanishes by hypothesis, we can show that $\theta$ is also periodic in $k_2$, compare again \eqref{eqn:periodicargument}. Since now also $f(0,k_3)$ has a vanishing winding number along $k_3$, we conclude by Theorem \ref{thm:2D}.\ref{item:2D_alphabeta} that it is itself of the form
\[ f(0,k_3) = \eu^{2 \pi \iu \, \widetilde{\theta}(k_3)} \, f(0,0) \]
with $\widetilde{\theta}$ continuous and periodic. We can finally conclude that
\[ f_t(k_2, k_3) := \eu^{2 \pi \iu \, (1-t) \, \theta(k_2,k_3)} \,\eu^{2\pi\iu \, (1-t) \, \widetilde{\theta}(k_3)} \, f(0,0)^{1-t}, \quad t \in [0,1], \]
continuously deforms $f$ through periodic maps to the constant map equal to $1$, as claimed.
\end{proof}

\begin{remark}[Normal form for the 3D matching matrices] \label{rmk:normalalpha3D}
Combining the last two statements in Theorem \ref{thm:3D}, we conclude that the family of unitary matrices $\alpha$ in \eqref{eqn:3Dmatching} can be continuously deformed through $\alpha\sub{2D}$-periodic families to the matrices
\begin{equation} \label{eqn:alpha3D}
\alpha\sub{3D}(k_2,k_3) := \begin{pmatrix}
\eu^{\iu\, n_2 \, k_2} \, \eu^{\iu\, n_3 \, k_3} & 0 \\
0 & \Id_{m-1}
\end{pmatrix}, \quad \text{where} \quad n_j := c_1^{\set{1,j}}(P)\in \Z, \; j \in \set{2,3}.
\end{equation}
\end{remark}

\begin{proof}[Proof of Theorem \ref{thm:main}, $d=3$]
With Theorem \ref{thm:3D} and the previous Remark at hand, the proof of our main Theorem \ref{thm:main} proceeds in the 3D case just as in the 2D case. Indeed, once again the topological obstruction, given by the three first Chern numbers, restricts the periodicity in $\bk$ of a single Bloch vector: this follows from the fact that, up to continuous deformations, the matching matrices have only a single (possibly) ``winding'' entry. If the bundle is Chern-trivial, all the Bloch vectors thus constructed are smooth and $(2\pi\Z^3)$-periodic. Otherwise, the same space-doubling trick presented in the proof of Theorem \ref{thm:main} for $d=2$ allows to construct a smooth and periodic Parseval frame of $m+1$ Bloch functions for the 3D family of projections $P(\bk)$, $\bk \in \T^3$.
\end{proof} 

\begin{remark}[Weak invariants]
As was noted in the Introduction, the 3D topological obstructions are encoded in three first Chern numbers $c_1^{\set{1,2}}(P)$, $c_1^{\set{2,3}}(P)$ and $c_1^{\set{1,3}}(P)$, which are inherently 2D objects; as such, they are dubbed ``weak invariants'' in the physics literature, to stress their lower dimensionality with respect to the dimension of the system.
\end{remark}

\section{4D case and the second Chern number}

We finally arrive at the 4D case, so in this Section $P(\bk)$, $\bk \in \T^4$, will denote a $(2\pi\Z^4)$-periodic family of rank-$m$ projections. As always our aim is to construct spanning Bloch functions for these projections, and once again we start from the 3D restriction $P(0,k_2,k_3,k_4)$, $(k_2,k_3,k_4) \in \T^3$. In view of the results of the previous Section, this family can be spanned by smooth orthonormal Bloch functions $\set{\phi_a(0,k_2,k_3,k_4)}_{1 \le a \le m}$ which have $\alpha\sub{3D}(k_3,k_4)$ in \eqref{eqn:alpha3D} as their matching matrices. Explicitly, this means that all these Bloch functions are $(2\pi\Z^3)$-periodic but for the first one, which instead satisfies
\begin{equation} \label{eqn:4D_phi}
\begin{gathered}
\phi_1(0, k_2, k_3, k_4 + 2\pi) = \phi_1(0, k_2, k_3, k_4), \\
\phi_1(0, k_2, k_3 + 2\pi, k_4) = \eu^{\iu \, c_1^{\set{3,4}}(P) \, k_4} \, \phi_1(0, k_2, k_3, k_4), \\
\phi_1(0, k_2 + 2\pi, k_3, k_4) = \eu^{\iu \, c_1^{\set{2,3}}(P) \, k_3} \, \eu^{\iu \, c_1^{\set{2,4}}(P) \, k_4} \, \phi_1(0, k_2, k_3, k_4).
\end{gathered}
\end{equation}

Once more we extend smoothly these Bloch functions in the $k_1$-direction by means of parallel transport, and define
\begin{equation} \label{eqn:4D_psi}
\psi_a(k_1,k_2,k_3,k_4) := T_{(k_2,k_3,k_4)}(k_1) \, \phi_a(0,k_2,k_3,k_4), \quad 1 \le a \le m.
\end{equation}
Upon direct inspection, one checks that

\begin{lemma} \label{lemma:4Dalpha}
The matching matrices for the Bloch vectors \eqref{eqn:4D_psi}, defined as
\begin{equation} \label{eqn:4Dmatch}
\alpha_{ab}(k_2,k_3,k_4) := \left\langle \phi_b(0,k_2,k_3,k_4), \, T_{(k_2,k_3,k_4)}(2\pi) \, \phi_a(0,k_2,k_3,k_4) \right\rangle, \quad 1 \le a, b \le m,
\end{equation}
satisfy the following properties:
\begin{enumerate}
 \item $\alpha(k_2,k_3,k_4)$ is unitary;
 \item $\alpha(k_2,k_3,k_4)$ is $(2\pi\Z)$-periodic in $k_4$, that is,
 \[ \alpha(k_2,k_3,k_4+2\pi) = \alpha(k_2,k_3,k_4); \]
 \item $\alpha(k_2,k_3,k_4)$ is $\alpha\sub{2D}$-periodic in $k_3$, that is,
 \[ \alpha(k_2,k_3+2\pi,k_4) = \alpha\sub{2D}(k_4)^{-1} \, \alpha(k_2,k_3,k_4) \, \alpha\sub{2D}(k_4), \]
 with $\alpha\sub{2D}$ defined as in \eqref{eqn:alpha2D};
 \item $\alpha(k_2,k_3,k_4)$ is $\alpha\sub{3D}$-periodic in $k_2$, that is,
 \[ \alpha(k_2+2\pi,k_3,k_4) = \alpha\sub{3D}(k_3,k_4)^{-1} \, \alpha(k_2,k_3,k_4) \, \alpha\sub{3D}(k_3,k_4), \]
 with $\alpha\sub{3D}$ defined as in \eqref{eqn:alpha3D}.
\end{enumerate}
\end{lemma}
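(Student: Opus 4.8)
The plan is to verify the four claims one at a time directly from the definition \eqref{eqn:4Dmatch}, using two structural inputs already at hand: the intertwining and telescopic properties of the parallel-transport unitaries recalled earlier, now used parametrically in $(k_2,k_3,k_4)$ exactly as in \eqref{eqn:3D_psi}; and the explicit transformation law \eqref{eqn:4D_phi} of the starting Bloch functions $\phi_a(0,\cdot)$, which is the output of the 3D construction of the previous Section after relabelling coordinates.

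Unitarity of $\alpha$ is the quickest point. The holonomy $T := T_{(k_2,k_3,k_4)}(2\pi)$ is a unitary of $\Hi$ (the value at $k_1 = 2\pi$ of the solution of \eqref{eqn:parallel} with self-adjoint generator), and by the intertwining property $T\,P(0,k_2,k_3,k_4)\,T^{*} = P(2\pi,k_2,k_3,k_4)$, which equals $P(0,k_2,k_3,k_4)$ by $(2\pi\Z)$-periodicity of $P$ in $k_1$. So $T$ restricts to a unitary endomorphism of $\Ran P(0,k_2,k_3,k_4)$, and its matrix $\alpha$ in the orthonormal basis $\set{\phi_a(0,\cdot)}$ is a unitary $m \times m$ matrix.

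For the three (quasi-)periodicity statements, the observation to record first is that the holonomy $T_{(k_2,k_3,k_4)}(2\pi)$ is itself $(2\pi\Z^3)$-periodic in $(k_2,k_3,k_4)$: since $P$, hence the generator $K$ of \eqref{eqn:parallel}, is $(2\pi\Z^4)$-periodic, the Cauchy problem for $T_{(k_2,k_3,k_4)}(k_1)$ has parameter-periodic coefficients and a fixed initial datum, so its solution --- in particular its value at $k_1 = 2\pi$ --- depends periodically on the parameters. It then suffices to substitute \eqref{eqn:4D_phi} into \eqref{eqn:4Dmatch}: shifting $k_4 \mapsto k_4 + 2\pi$ leaves every $\phi_a(0,\cdot)$ and the holonomy untouched, giving plain periodicity of $\alpha$ in $k_4$; shifting $k_3 \mapsto k_3 + 2\pi$ replaces each $\phi_a(0,\cdot)$ by $\sum_c \phi_c(0,\cdot)\,(\alpha\sub{2D}(k_4))_{ca}$, and since the inner product is sesquilinear, $\alpha\sub{2D}(k_4)$ is a diagonal unitary, and the holonomy is unchanged, one reads off the stated conjugation of $\alpha$ by $\alpha\sub{2D}(k_4)$; the $k_2$-direction is handled identically, with $\alpha\sub{3D}(k_3,k_4)$ in the role of $\alpha\sub{2D}(k_4)$.

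The one step that is not purely mechanical --- and hence the closest thing to an obstacle --- is checking that these three relations are mutually compatible, i.e.\ that performing the shifts $k_2 \mapsto k_2 + 2\pi$ and $k_3 \mapsto k_3 + 2\pi$ in either order produces the same quasi-periodicity. As in the parenthetical remark inside Proposition \ref{prop:mu-nu-periodic}, this reduces to the commutation of the diagonal matrices $\alpha\sub{2D}(k_4)$ and $\alpha\sub{3D}(k_3,k_4)$ together with the triviality of the stray cross-term phases $\eu^{2\pi\iu\,c_1^{\bullet}(P)} = 1$, which holds because the Chern numbers are integers and is already visible in \eqref{eqn:4D_phi}. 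No homotopy theory enters here: the lemma merely sets up the hypotheses under which the genuinely new 4D argument will run, and in which the second Chern number eventually appears as the obstruction to deforming $\alpha$ to a normal form.
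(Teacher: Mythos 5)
Your proposal is correct and matches what the paper expects: the lemma is introduced with the phrase ``Upon direct inspection, one checks that'' and no proof is supplied, so your direct verification --- unitarity via the intertwining property of the holonomy, $(2\pi\Z^3)$-periodicity of the holonomy in the transverse parameters because the generator $K$ inherits the periodicity of $P$, and term-by-term substitution of the quasi-periodicity relations \eqref{eqn:4D_phi} into the definition \eqref{eqn:4Dmatch} --- is precisely the intended argument. Your closing remark on the mutual compatibility of the three shift relations (reducing to commutativity of the diagonal matrices and $2\pi\Z$-periodicity of $\alpha\sub{3D}$ in $k_3$, which holds because the Chern numbers are integers) mirrors the paper's own comment immediately after the lemma; it is a worthwhile sanity check, though strictly speaking it is automatic, since both orders of shifting compute the same function $\alpha(k_2+2\pi, k_3+2\pi, k_4)$, so it is not an independent obligation of the proof.
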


Notice that $\alpha\sub{2D}$-periodicity in $k_3$ and $\alpha\sub{3D}$-periodicity in $k_2$ are compatible with each other, as $\alpha\sub{3D}$ itself is $\alpha\sub{2D}$-periodic in $k_4$ -- actually, for the explicit diagonal matrices in \eqref{eqn:alpha2D} and \eqref{eqn:alpha3D}, $\alpha\sub{2D}$-periodicity reduces to mere periodicity. In order to have a more concise terminology, we will say that a family of unitary matrices which satisfies the conditions specified in the previous Lemma is \emph{pseudo-periodic} in its coordinates $(k_2, k_3, k_4)$.

As in the previous Sections, we can reduce the study of periodic smooth Bloch functions to the homotopy properties of pseudo-periodic families of unitary matrices, by means of the following

\begin{theorem} \label{thm:beta4D}
\begin{enumerate}
 \item If the orthonormal basis $\set{\psi_a(k_1,k_2,k_3,k_4)}_{1 \le a \le m}$ has matching matrices $\alpha(k_2,k_3,k_4)$, and if $\beta(k_1,k_2,k_3,k_4)$ is a family of $m \times m$ unitary matrices with $\beta(0,k_2,k_3,k_4) \equiv \Id$ which is pseudo-periodic in $(k_2,k_3,k_4)$, then the orthonormal basis
\[ \phi_a(\bk) := \sum_{b=1}^{m} \psi_b(\bk) \, \beta_{ba}(\bk), \quad \bk = (k_1,k_2,k_3,k_4), \]
has matching matrices
\begin{equation} \label{eqn:beta4D}
\widetilde{\alpha}(k_2,k_3,k_4) := \beta(k_1,k_2,k_3,k_4) \, \alpha(k_2,k_3,k_4) \, \beta(k_1+2\pi,k_2,k_3,k_4)^{-1}.
\end{equation}
 \item Given two $U(m)$-valued pseudo-periodic maps $\alpha, \widetilde{\alpha}$, the following are equivalent.
 \begin{enumerate}
 \item There exist $\beta(\bk) \in U(m)$ as in the previous point such that \eqref{eqn:beta4D} holds.
 \item The maps $\alpha, \widetilde{\alpha}$ are \emph{pseudo-periodically homotopically equivalent}, that is, they can be continuously deformed one into the other through maps which are pseudo-periodic in $(k_2,k_3,k_4)$.
 \end{enumerate}
\end{enumerate}

In particular, the family of projections $P(\bk)$, $\bk \in \T^4$, admits a orthonormal basis of smooth and periodic Bloch functions if and only if the family of matching matrices in \eqref{eqn:4Dmatch} is pseudo-periodically homotopic to the constant map $\widetilde{\alpha} \equiv \Id_m$.
\end{theorem}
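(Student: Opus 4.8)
The argument follows the same three-step template as Theorems~\ref{thm:2D} and~\ref{thm:3D}, with the notion of \emph{pseudo-periodicity} introduced above --- twisting matrices $\Id_m$, $\alpha\sub{2D}$, $\alpha\sub{3D}$ in the directions $k_4$, $k_3$, $k_2$ respectively --- playing the role previously played by ordinary (resp.\ $\alpha\sub{2D}$-twisted) periodicity. The first point is, as in Theorems~\ref{thm:2D}.\ref{item:2D_beta} and~\ref{thm:3D}.\ref{item:3D_beta}, a direct computation: substituting $\phi_a(\bk) = \sum_b \psi_b(\bk)\,\beta_{ba}(\bk)$ into $\psi_b(k_1+2\pi,\cdot) = \sum_a \psi_a(k_1,\cdot)\,\alpha_{ab}(\cdot)$ yields~\eqref{eqn:beta4D}, while the hypotheses $\beta(0,\cdot)=\Id$ and ``$\beta$ pseudo-periodic in $(k_2,k_3,k_4)$'' are exactly what one needs to check, respectively, that the right-hand side of~\eqref{eqn:beta4D} is independent of $k_1$ and that $\widetilde\alpha$ is again pseudo-periodic with the same twisting matrices; I would leave this to the reader.

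For the second point I would repeat, almost verbatim, the proof of the equivalence (a)$\Leftrightarrow$(b) in Theorem~\ref{thm:2D}.\ref{item:2D_alphabeta}. Assuming~\eqref{eqn:beta4D}, the family
\[
\alpha_t(k_2,k_3,k_4) := \beta(-\pi t,\,k_2,k_3,k_4)\,\alpha(k_2,k_3,k_4)\,\beta(\pi t,\,k_2,k_3,k_4)^{-1},\qquad t\in[0,1],
\]
interpolates between $\alpha$ (at $t=0$, using $\beta(0,\cdot)=\Id$) and $\widetilde\alpha$ (at $t=1$, using the $k_1$-independence of~\eqref{eqn:beta4D} evaluated at $k_1=-\pi$), and each $\alpha_t$ is pseudo-periodic because the conjugations by $\beta(\mp\pi t,\cdot)$ and by the twisting matrices commute through one another. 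Conversely, from a pseudo-periodic homotopy $\alpha_t$ between the two maps I would set $\beta(k_1,\cdot) := \alpha_{k_1/2\pi}(\cdot)$ for $k_1\in[0,2\pi]$, extend this to $\R$ in the $k_1$-variable by the cocycle rule forced by~\eqref{eqn:beta4D}, renormalise by a $k_1$-independent (pseudo-periodic) factor so that $\beta(0,\cdot)=\Id$, and check --- exactly as in the 2D case --- that the resulting $\beta$ is continuous across the gluing points $k_1\in 2\pi\Z$ and pseudo-periodic in $(k_2,k_3,k_4)$.

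The ``in particular'' clause then follows by combining the second point with the structure of the family~\eqref{eqn:4D_psi}. By construction $\set{\psi_a(\bk)}_{1\le a\le m}$ is a \emph{smooth orthonormal} family spanning $\Ran P(\bk)$ --- smooth because $P$ and the parallel transport $T_{(k_2,k_3,k_4)}(k_1)$ are smooth, orthonormal because it is obtained from the orthonormal set $\set{\phi_a(0,\cdot)}$ by the \emph{unitaries} $T_{(k_2,k_3,k_4)}(k_1)$ --- which is pseudo-periodic in $(k_2,k_3,k_4)$ (thanks to~\eqref{eqn:4D_phi} together with the periodicity of the parallel-transport generator $K$) and whose only obstruction to full periodicity, in the $k_1$-direction, is recorded by the matching matrices $\alpha$. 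Any gauge transformation by an admissible $\beta$ leaves the pseudo-periodicity in $(k_2,k_3,k_4)$ intact, so the construction of an orthonormal basis of smooth and periodic Bloch functions for $P$ is reduced to transforming the $\psi_a$, by some $\beta$ as in the first point, into a family whose matching matrices are $\widetilde\alpha\equiv\Id_m$ --- i.e.\ which is periodic also in $k_1$; by the equivalence (a)$\Leftrightarrow$(b) this is possible if and only if $\alpha$ is pseudo-periodically homotopic to the constant map $\Id_m$, and one checks that once the triviality of the lower-dimensional restriction is accounted for (so that $\alpha\sub{2D}$ and $\alpha\sub{3D}$ are themselves trivial) such a transformation delivers a genuinely periodic orthonormal basis.

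The step I expect to be the main obstacle is not any single deformation --- each is modelled on the lower-dimensional cases --- but the \emph{bookkeeping} of the three mutually compatible pseudo-periodicity relations simultaneously throughout every construction: the homotopy $\alpha_t$, the cocycle extension and renormalisation of $\beta$, and the passage to the new basis. The facts that make this go through are precisely that $\alpha\sub{2D}$ and $\alpha\sub{3D}$ commute and that $\alpha\sub{3D}$ is itself $\alpha\sub{2D}$-periodic (both noted after Lemma~\ref{lemma:4Dalpha}); I would carry out one such compatibility check in full and indicate that the remaining ones are entirely analogous. The deeper task of actually identifying the pseudo-periodic homotopy class of $\alpha$ in terms of Chern numbers --- where the second Chern number will enter --- is a separate matter, not part of this statement.
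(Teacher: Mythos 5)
Your proposal follows the paper's own proof essentially verbatim: part~1 is a direct computation, part~2 repeats the homotopy/cocycle argument of Theorem~\ref{thm:2D}.\ref{item:2D_alphabeta} with ``periodic'' replaced by ``pseudo-periodic'' throughout, and the final clause is deduced from the two parts together. This is exactly how the paper argues (it simply cites the 2D/3D proofs and Ref.~\onlinecite{cornean2019parseval}).

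One small spot to tighten: in the converse direction of part~2 you set $\beta(k_1,\cdot):=\alpha_{k_1/2\pi}(\cdot)$ and then propose to renormalise by a $k_1$-independent factor so that $\beta(0,\cdot)=\Id$. That post-hoc renormalisation does not actually work cleanly --- a left multiplication $\beta\mapsto c\,\beta$ conjugates $\widetilde\alpha$ into $c\,\widetilde\alpha\,c^{-1}$ and so spoils~\eqref{eqn:beta4D}, while a right multiplication $\beta\mapsto\beta\,c$ is compatible with~\eqref{eqn:beta4D} only because $c=\alpha^{-1}$ commutes with $\alpha$, but it does not repair the gluing at $k_1=2\pi$, where the initial definition gives $\beta(2\pi^-)=\alpha_1=\widetilde\alpha$ while the cocycle extension gives $\beta(2\pi^+)=\widetilde\alpha^{-1}\beta(0)\,\alpha$, and these coincide only if $\widetilde\alpha^{\,2}=\alpha^{2}$. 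The clean fix is to build both constraints into the initial definition: take $\beta(k_1,\cdot):=\alpha_{k_1/2\pi}(\cdot)^{-1}\,\alpha(\cdot)$ on $k_1\in[0,2\pi]$, which gives $\beta(0,\cdot)=\Id$, $\beta(2\pi^-,\cdot)=\widetilde\alpha^{-1}\alpha=\beta(2\pi^+,\cdot)$, and (by the cocycle) $\beta(k_1)\,\alpha\,\beta(k_1+2\pi)^{-1}=\widetilde\alpha$ identically, while pseudo-periodicity is preserved since $\alpha\sub{2D}$, $\alpha\sub{3D}$ conjugate $\alpha_t^{-1}\alpha$ exactly as they conjugate $\alpha_t$. (In fairness, the paper's own 2D proof states the unnormalised $\beta(k_1)=\alpha_{k_1/2\pi}$ without comment, so you have inherited a genuine imprecision rather than introduced one.) Everything else in your write-up --- the compatibility bookkeeping, the reduction of the ``in particular'' clause to trivialising the matching matrices --- is sound and matches the paper's intent.
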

\begin{proof}
Once again the proofs presented in Ref.\ \onlinecite[Propositions~5.1 and~6.3]{cornean2019parseval} apply almost \emph{verbatim}. The first statement in the Theorem follows from a direct computation. The second statement can be argued exactly as in the proof of Theorem \ref{thm:2D}.\ref{item:2D_alphabeta}, minding that all matrices involved are now pseudo-periodic. The final statement immediately follows from the previous two. 
\end{proof}

\subsection{(Pseudo-)periodic homotopy theory}

The previous result drives us to study the pseudo-periodic homotopy classes of families of matching matrices $\alpha(k_2, k_3, k_4)$. While in lower-dimensional cases the matching matrices could be deformed to their $U(1)$-part, we will show here that their $SU(m)$-part plays a crucial role as well. Following the convention of \eqref{eqn:U(1)xSU(m)}, let us write for $\bk = (k_2, k_3, k_4)$
\begin{equation} \label{eqn:alpha=deltasigma}
\alpha(\bk) = \delta(\bk) \, \sigma(\bk), \quad \text{where} \quad \delta(\bk) \equiv \delta(\alpha(\bk)) = \begin{pmatrix} \det \alpha(\bk) & 0 \\ 0 & \Id_{m-1} \end{pmatrix} \quad \text{and} \quad \sigma(\bk) \equiv \sigma(\alpha(\bk)) \in SU(m).
\end{equation}
Notice that, for a pseudo-periodic $\alpha$, the above $\delta$ is $(2\pi\Z^3)$-periodic while $\sigma$ is itself pseudo-periodic, since $\alpha\sub{2D}$ in \eqref{eqn:alpha2D} and $\alpha\sub{3D}$ in \eqref{eqn:alpha3D} are diagonal and hence commute with the diagonal matrix $\delta$. Since the above decomposition holds for all pseudo-periodic $\alpha$'s, it is also clear that $\alpha$ and $\widetilde{\alpha}$ are pseudo-periodically homotopic to each other if only if the corresponding $\delta$ and $\widetilde{\delta}$ are homotopic (as periodic maps) and in addition the corresponding $\sigma$ and $\widetilde{\sigma}$ are pseudo-periodically homotopic (as $SU(m)$-valued maps). 

The homotopy theory of maps $\delta(\bk)$ of the type above is deduced from the following

\begin{proposition}
Two maps $f,\widetilde{f} \colon \T^3 \to U(1)$ are homotopic if and only if
\[ \odeg_j(f) = \odeg_j (\widetilde{f}) \quad \text{for all } j \in \set{2,3,4}. \]
\end{proposition}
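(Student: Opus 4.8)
The plan is to prove the two implications separately. The forward direction (``only if'') is homotopy invariance of the $1$-degree, which is already available: by Remark~\ref{rmk:winding}, $\odeg_j(f)$ equals the winding number of the restriction $f|_{\T^1_{\{j\}}} \colon \T^1 \to U(1)$, and the computation carried out in the proof of Theorem~\ref{thm:2D}.\ref{item:2D_alphabeta} shows that the $1$-degree is invariant under homotopies of periodic $U(m)$-valued maps on $\T^1$. Restricting a homotopy $\T^3 \times [0,1] \to U(1)$ between $f$ and $\widetilde{f}$ to each sub-torus $\T^1_{\{j\}}$ therefore yields $\odeg_j(f) = \odeg_j(\widetilde{f})$ for all $j \in \{2,3,4\}$.

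For the reverse direction, set $g := f^{-1}\,\widetilde{f} \colon \T^3 \to U(1)$; by additivity of the winding number (Remark~\ref{rmk:winding}), $\odeg_j(g) = \odeg_j(\widetilde{f}) - \odeg_j(f) = 0$ for all $j$. It then suffices to deform $g$ continuously to the constant map $1$ through maps $\T^3 \to U(1)$, for then $f_t := f \cdot g_t$ interpolates $f$ and $\widetilde{f}$. The key step is to lift $g$, viewed as a $(2\pi\Z^3)$-periodic map on $\R^3$, through the covering $x \mapsto \eu^{2\pi\iu x}$ to a continuous $\theta \colon \R^3 \to \R$ with $g = \eu^{2\pi\iu\,\theta}$; such a lift exists because $\R^3$ is simply connected. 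Periodicity of $g$ forces $\theta(\bk + 2\pi\, e_j) - \theta(\bk)$ to be an integer that depends continuously on $\bk$, hence is constant, and that constant equals $\odeg_j(g) = 0$. Thus $\theta$ is itself $(2\pi\Z^3)$-periodic, descends to $\T^3$, and $g_t := \eu^{2\pi\iu\,(1-t)\,\theta}$ is the desired null-homotopy of $g$ through periodic maps.

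The one point deserving care is the construction of the global continuous logarithm $\theta$. Invoking covering-space theory is the quickest route, but to keep the argument constructive and in line with the rest of the paper one can instead build $\theta$ exactly as in the proof of Proposition~\ref{prop:mu-nu-periodic}: by uniform continuity write $g(\bk) = \eu^{2\pi\iu\,\theta_2(\bk)}\,g(0,k_3,k_4)$ via local branches of the logarithm along $k_2$, with $\theta_2$ continuous and, since $\odeg_2(g)=0$, periodic in $k_2$; then repeat the construction along $k_3$ and then along $k_4$, at each stage using the vanishing of the corresponding winding number to make the accumulated argument periodic in the coordinates already processed. This is precisely the three-variable iteration of the two-variable scheme of Proposition~\ref{prop:mu-nu-periodic}, and conceptually the statement is nothing but the identification $[\T^3, U(1)] \cong H^1(\T^3;\Z) \cong \Z^3$ with the $\odeg_j$ as coordinates; the only real labor is the bookkeeping of periodicity through the successive logarithm steps, which is routine once the $d=2$ case of Proposition~\ref{prop:mu-nu-periodic} is in hand.
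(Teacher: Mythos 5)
Your proof is correct, and its primary route differs from (and is arguably cleaner than) the paper's. The paper's proof is a one-line deferral to the three-variable iteration of the local-logarithm / uniform-continuity construction at the end of the proof of Proposition~\ref{prop:mu-nu-periodic}; your main argument instead lifts the $(2\pi\Z^3)$-periodic map $g = f^{-1}\widetilde{f} \colon \R^3 \to U(1)$ through the covering $\R \to U(1)$ using simple connectedness of $\R^3$, identifies the period defects $\theta(\bk + 2\pi e_j) - \theta(\bk)$ with $\odeg_j(g)$ (a continuous $\Z$-valued function of $\bk$, hence constant, hence equal to the winding number), and concludes that $\theta$ descends to $\T^3$ when all $\odeg_j(g)$ vanish. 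This is the standard identification $[\T^3, U(1)] \cong H^1(\T^3;\Z) \cong \Z^3$ and is shorter and more conceptual, at the cost of invoking covering-space theory rather than an explicit construction of $\theta$; your closing remark correctly notes that one recovers the paper's constructive version by iterating the argument of Proposition~\ref{prop:mu-nu-periodic} in one more variable. The only small point you gloss over is the brief verification that the constant $\theta(\bk + 2\pi e_j) - \theta(\bk)$ really equals the integral defining $\odeg_j(g)$ (which follows from $g^{-1}\,\partial_j g = 2\pi\iu\,\partial_j\theta$), but this is immediate. Both routes are fine; the forward implication (restrict a homotopy to sub-tori and use homotopy invariance of the winding number) is the same as the paper's.
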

\begin{proof}
The argument follows closely the one presented at the end of the proof of Proposition \ref{prop:mu-nu-periodic}, where we showed that $f \colon \T^2 \to U(1)$ is characterized up to homotopy by the winding numbers of its restrictions to $\T^1_j \subset \T^2$, $j \in \set{2,3}$; in this case, there is just with one extra dimension, leading to one extra $1$-degree, to take into account. We leave the details to the reader.
\end{proof}

\begin{remark}[Normal form for the 4D matching matrices, $U(1)$-part] \label{rmk:normaldelta4D}
In view of Remark \ref{rmk:winding} and Theorem \ref{thm:3D}.\ref{item:3Ddeg=c1}, we have
\[ \odeg_j(\alpha) = c_1^{\set{1,j}}(P)\,. \]
In particular, the map $\delta \colon \T^3 \to U(m)$ appearing in \eqref{eqn:alpha=deltasigma} is homotopic to
\begin{equation} \label{eqn:delta4D}
\delta\sub{4D}(k_2,k_3,k_4) = \begin{pmatrix}
\eu^{\iu \, n_2 \, k_2} \, \eu^{\iu \, n_3 \, k_3} \, \eu^{\iu \, n_4 \, k_4} & 0 \\ 0 & \Id_{m-1}
\end{pmatrix}, \quad \text{where} \quad n_j := c_1^{\set{1,j}}(P) \in \Z.
\end{equation}
\end{remark}

\begin{corollary} \label{cor:boundary}
Any pseudo-periodic $U(m)$-valued map $\alpha(\bk)$, $\bk = (k_2,k_3,k_4) \in \R^3$, is pseudo-periodically homotopic to a map $\widetilde{\alpha}(\bk)$ such that
\[ \widetilde{\alpha} \Big|_{\partial \Omega} \equiv \Id_m, \quad \text{where} \quad \Omega := [0,2\pi]^3\,. \]
\end{corollary}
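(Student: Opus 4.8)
The plan is to deform $\alpha$ until it equals $\Id_m$ on $\partial\Omega$, building the homotopy up the cellular skeleton of the cube $\Omega$ and, at each stage, turning a homotopy defined on a subcomplex into a genuine \emph{pseudo-periodic} homotopy on all of $\Omega$ by a collar/``thickening'' argument of the kind used in the proof of Proposition~\ref{prop:mu-nu-periodic}. First I would reduce to the special-unitary case: writing $\alpha = \delta\,\sigma$ as in \eqref{eqn:alpha=deltasigma}, Remark~\ref{rmk:normaldelta4D} deforms the $U(1)$-part $\delta$, through $(2\pi\Z^3)$-periodic maps of the same diagonal shape, to $\delta\sub{4D}$; these deformations stay diagonal, hence commute with the diagonal transition matrices $\alpha\sub{2D}$ and $\alpha\sub{3D}$, so multiplying $\sigma$ by them preserves pseudo-periodicity. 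Dealing with the abelian winding factor $\delta\sub{4D}$ exactly as the single winding Bloch vector was handled in the 2D and 3D constructions, I may assume henceforth that $\delta\equiv\Id_m$, i.e.\ that $\alpha$ is $SU(m)$-valued (for $m=1$ this is vacuous, consistently with the vanishing of $c_2$ in that case). The gain is that, for $m\ge2$, $SU(m)$ is $2$-connected: $\pi_0(SU(m))=\pi_1(SU(m))=\pi_2(SU(m))=0$.

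Next I would note that $\partial\Omega$, with the pseudo-periodic gluing (opposite faces identified up to conjugation by $\alpha\sub{2D}$ or $\alpha\sub{3D}$), is the $2$-skeleton of $\T^3$, a $2$-dimensional complex carrying a global constant section $\Id_m$, and deform $\alpha|_{\partial\Omega}$ to $\Id_m$ cell by cell. All eight vertices carry the common value $\alpha(0,0,0)$, since $\alpha\sub{2D}$ and $\alpha\sub{3D}$ reduce to $\Id_m$ at lattice points, and connectedness of $SU(m)$ slides that value to $\Id_m$. On the edges, which are now loops based at $\Id_m$, the deformation to the constant loop exists because $\pi_1(SU(m))=0$; I would carry it out on one edge per coordinate direction and propagate to the parallel edges by conjugation with $\alpha\sub{2D}$, $\alpha\sub{3D}$, which is automatically consistent on the shared vertices. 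On each of the three faces $\set{k_j=0}$ --- now squares mapped to $SU(m)$ and constant $\Id_m$ on their boundary, i.e.\ maps $S^2\to SU(m)$ --- the deformation to $\Id_m$ rel boundary exists because $\pi_2(SU(m))=0$; propagating to the opposite faces $\set{k_j=2\pi}$ by the same conjugation rule yields a homotopy on the whole of $\partial\Omega$ compatible with the identifications, the matching on shared edges being automatic since conjugation is functorial.

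It then remains to push the homotopy inward. Using a collar neighbourhood of $\partial\Omega$ in $\Omega$ --- equivalently, the homotopy-extension property of the CW pair $(\Omega,\partial\Omega)$, realised concretely by the ``thickened box'' construction of the proof of Proposition~\ref{prop:mu-nu-periodic} --- I would extend the boundary homotopy to a homotopy of $\alpha$ on all of $\Omega$. The interior of $\Omega$ is subject to no gluing constraint, so any continuous extension is admissible, and because the boundary homotopy respects the pseudo-periodic identifications its extension to $\R^3$ by the pseudo-periodicity relations is well defined; this produces the sought pseudo-periodic homotopy from $\alpha$ to a map $\widetilde\alpha$ with $\widetilde\alpha\big|_{\partial\Omega}\equiv\Id_m$.

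The one genuine obstacle is the $\pi_1(U(m))=\Z$ class that the $U(1)$-part of $\alpha$ carries on the $1$-skeleton, namely the $1$-degrees $\odeg_j(\alpha)=c_1^{\set{1,j}}(P)$ of Remark~\ref{rmk:normaldelta4D}: an edge on which $\det\alpha$ winds cannot be contracted to $\Id_m$, which is exactly why the reduction to $SU(m)$ is indispensable. Once $\alpha$ is $SU(m)$-valued all obstructions to trivialising it over the $2$-dimensional complex $\partial\Omega$ lie in $H^{k+1}(\partial\Omega;\pi_k(SU(m)))$ with $k\le1$ and hence vanish; the single surviving obstruction --- the class in $\pi_3(U(m))=\Z$ of the map $\Omega/\partial\Omega\cong S^3\to U(m)$ determined by $\widetilde\alpha$, which will turn out to be the second Chern number --- is thereby confined to the interior $3$-cell, to be analysed in the subsequent steps. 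The only remaining labour is the bookkeeping needed to keep the edge- and face-homotopies mutually compatible across the pseudo-periodic identifications, which, as in Proposition~\ref{prop:mu-nu-periodic}, is arranged by constructing on a representative sub-cell and propagating by conjugation with the diagonal (hence commuting) matrices $\alpha\sub{2D}$, $\alpha\sub{3D}$.
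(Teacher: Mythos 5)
Your proof is correct but takes a genuinely different route from the paper's. After the shared first step --- separating off the $U(1)$-part via Remark~\ref{rmk:normaldelta4D} and focusing on the $SU(m)$-part $\sigma$ --- the paper handles each face $\Omega_j$ of $\partial\Omega$ by invoking Proposition~\ref{prop:mu-nu-periodic} (explicit column interpolation), observing that $\det\sigma\equiv1$ pins the resulting diagonal block to $\Id_m$, and then propagating to the opposite faces by conjugation; you instead run a cell-by-cell argument up the skeleta of $\partial\Omega$, at each stage appealing to the vanishing of $\pi_0$, $\pi_1$, $\pi_2$ of $SU(m)$ for $m\ge2$. Both arguments then use the same thickened-cell / homotopy-extension device to push the boundary deformation into the interior of $\Omega$. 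Your obstruction-theoretic version is cleaner and makes transparent why the only surviving obstruction is confined to the top $3$-cell --- it lives in $\pi_3(SU(m))\cong\Z$, all the lower obstruction groups being zero --- at the cost of the explicitness the paper is deliberately aiming for; the paper's more laborious route also yields, as a by-product, the concrete normal forms reused elsewhere.

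One thing you flag correctly, and that the paper's own proof glosses over: as literally stated, the Corollary cannot hold when some $1$-degree $\odeg_j(\alpha)$ is nonzero. Those are pseudo-periodic homotopy invariants, and $\widetilde\alpha\big|_{\partial\Omega}\equiv\Id_m$ would force them all to vanish; correspondingly, $\delta\sub{4D}$ from~\eqref{eqn:delta4D} equals $\Id_m$ only at the eight vertices of $\Omega$, not on all of $\partial\Omega$. What both proofs actually establish, and what is used downstream, is that $\alpha$ is pseudo-periodically homotopic to $\delta\sub{4D}\,\widetilde\sigma$ with the $SU(m)$-part $\widetilde\sigma$ trivial on $\partial\Omega$. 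Your closing remark that ``an edge on which $\det\alpha$ winds cannot be contracted to $\Id_m$'' makes this limitation explicit --- a welcome clarification over the paper's phrasing.
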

\begin{proof}
In view of the above Remark, up to continuous deformation it can be assumed that $\delta(\alpha(\bk))$ is in the form \eqref{eqn:delta4D}: in particular, it is equal to $\Id_m$ on the boundary of the ``pseudo-periodicity cell'' $\Omega = [0,2\pi]^3$. Thefore, it remains to show that the same holds for the map $\sigma(\bk) \equiv \sigma(\alpha(\bk)) \in SU(m)$.

For $j \in {2,3,4}$, let $\Omega_j \subset \partial \Omega$ be the face of the cube obtained by freezing $k_j = 0$. The restriction $\sigma^{(j)} := \sigma \Big|_{\Omega_j}$ is a $(\mu,\nu)$-periodic map, in the sense of Proposition \ref{prop:mu-nu-periodic}, for appropriate periodic maps $\mu, \nu \colon \T^1 \to U(m)$ which coincide with either $\alpha\sub{2D}$ or with appropriate restrictions of $\alpha\sub{3D}$. Since $\Omega_j \simeq [0,2\pi]^2$, the above-mentioned Proposition implies that these restrictions can be brought to diagonal form as in \eqref{eqn:alpha-tildealpha-mu-nu} through \emph{$U(m)$-valued} maps. Notice however that $\det \sigma \equiv 1$, and therefore the determinant of $\sigma$ (and of its restrictions) does not wind along any direction: this implies that the determinant continues not to wind along the whole deformation, and therefore the deformation itself can be achieved through \emph{$SU(m)$-valued} maps by possibly combining it with a deformation ``unwinding'' the determinant. This means that each $\sigma^{(j)}$ can be deformed continuously, trough $SU(m)$-valued maps $\sigma^{(j)}_s$ depending continously on $s \in [0,1]$, to the map on $\Omega_j$ which is constant equal to $\Id_m$ (since it must be both diagonal and in $SU(m)$). Making sure to choose continuous junctions across the edges of $\partial \Omega$, and by imposing the appropriate pseudo-periodicity, this deformation can be extended to the whole boundary of $\Omega$.

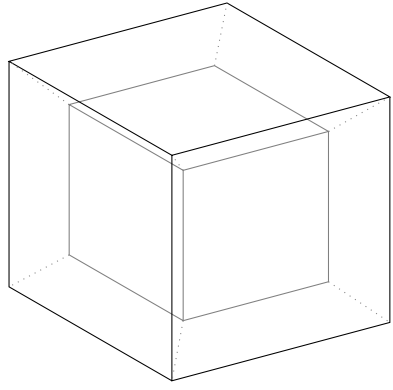
\begin{figure}[htb]
\centering
\begin{tikzpicture}[>=stealth] 
\node (o) at (0,1) {};
\path (o)+(15:2cm) node (a) {}
      (o)+(150:1.75cm) node (b) {}
      (o)+(0,-2) node (o1) {}
      (b)+(15:2cm) node (c) {}
      (o1)+(15:2cm) node (a1) {}
      (o1)+(150:1.75cm) node (b1) {};
\node (o2) at (-0.15,1.2) {};
\path (o2)+(15:3cm) node (a2) {}
      (o2)+(150:2.5cm) node (b2) {}
      (o2)+(0,-3) node (o3) {}
      (b2)+(15:3cm) node (d) {}
      (o3)+(15:3cm) node (a3) {}
      (o3)+(150:2.5cm) node (b3) {};
\draw [gray] (o.center) -- (a.center) -- (c.center) -- (b.center) -- (o.center) -- (o1.center) -- (a1.center) -- (a.center)
      (o1.center) -- (b1.center) -- (b.center);
\draw (o2.center) -- (a2.center) -- (d.center) -- (b2.center) -- (o2.center) -- (o3.center) -- (a3.center) -- (a2.center)
      (o3.center) -- (b3.center) -- (b2.center);
\draw [gray, dotted] (o.center) -- (o2.center)
      (o1.center) -- (o3.center)
      (a.center) -- (a2.center)
      (b.center) -- (b2.center)
      (a1.center) -- (a3.center)
      (b1.center) -- (b3.center)
      (c.center) -- (d.center);
\end{tikzpicture}
\caption{``Thickened'' cell $[-1,2\pi+1]^3$, containing $\Omega = [0,2\pi]^3$ (drawn in gray)}
\label{fig:thicc3D}
\end{figure}

We can now use a ``thickening'' trick similar to the one in the proof of Proposition \ref{prop:mu-nu-periodic}, but now for the 3D cell $\Omega$ (see Figure~\ref{fig:thicc3D}). Extend the definition of $\sigma$ to a larger cell $[-1,2\pi+1]^3$ by using $s \in [0,1]$ as a ``radial coordinate'' bulging out of the faces $\Omega_j$; then, let $\sigma_t$ be the restriction of this extension to $[-t,2\pi+t]^3$, once this enlarged cell is appropriately rescaled to $\Omega$. The end-point of this continuous deformation of $\sigma$ yields a map $\widetilde{\sigma}$ on $\Omega$ which is $SU(m)$-valued and attains the value $\Id_m$ on the whole $\partial \Omega$. The pseudo-periodic extension of this map from the cell $\Omega$ to the whole $\R^3$, once it is multiplied by $\delta\sub{4D}$ in \eqref{eqn:delta4D}, yields the desired deformation of $\alpha$.
\end{proof}

In view of Theorem \ref{thm:beta4D} and of the above Corollary \ref{cor:boundary}, the family of projections $P(\bk)$, $\bk \in \T^4$, can be spanned by smooth orthonomal Bloch functions
\begin{equation} \label{eqn:normalizedBFs}
\set{\phi_1(k_1,k_2,k_3,k_4), \ldots, \phi_m(k_1,k_2,k_3,k_4)}
\end{equation}
whose matching matrices $\alpha(\bk) = \delta(\bk) \, \sigma(\bk)$ have $\delta(\bk) = \delta\sub{4D}(\bk)$ as in \eqref{eqn:delta4D} while $\sigma(\bk) \in SU(m)$ depends pseudo-periodically on $\bk$ and is such that 
\begin{equation} \label{eqn:boundary_sigma}
\sigma \Big|_{\partial \Omega} \equiv \Id_m.
\end{equation}
We will say concisely that such $\sigma$'s define \emph{normalized} pseudo-periodic $SU(m)$-valued maps.

\begin{remark} \label{rmk:S3}
Topologically, a $SU(m)$-valued map on $\Omega$ which satisfies the normalization \eqref{eqn:boundary_sigma} is tantamount to a map defined on the $3$-sphere $S^3$. Indeed, the boundary of the cell can be brought to infinity and then compactified to a point, where the map attains the value $\Id_m$; the resulting map is then defined on the one-point compactification of $\R^3$, which is indeed $S^3$.
\end{remark}

The next step in the study of homotopy classes of such maps consists in finding a ``normal form'' of block-diagonal type. Let us first make the obvious remark that, if $m=1$, then necessarily $\sigma(\bk) \equiv 1$ and there is nothing to discuss. We will therefore always assume $m \ge 2$ hereinafter.

\begin{theorem} \label{thm:SU(m)->SU(2)}
Let $m \ge 2$ and let $\sigma(\bk)$, $\bk = (k_2,k_3,k_4)$, be a normalized pseudo-periodic $SU(m)$-valued map. Then one can construct an homotopy, which preserves pseudo-periodicity and the normalization \eqref{eqn:boundary_sigma}, between
\begin{equation} \label{eqn:sigma_eta} 
\sigma(\bk) \quad \text{and} \quad \widetilde{\sigma}(\bk) := \begin{pmatrix} \eta(\bk) & 0 \\ 0 & \Id_{m-2} \end{pmatrix}, \quad \eta(\bk) \in SU(2),
\end{equation}
where $\eta(\bk)$ defines a normalized pseudo-periodic $SU(2)$-valued map.
\end{theorem}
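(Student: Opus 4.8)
The plan is to argue by downward induction on $m$, peeling off one diagonal entry at a time: for $m \ge 3$ I will exhibit a homotopy, through normalized pseudo-periodic maps, from a given $SU(m)$-valued $\sigma$ to a block form $\mathrm{diag}(\sigma',1)$ with $\sigma'$ a normalized pseudo-periodic $SU(m-1)$-valued map, and then iterate down to the block size $2$; the base case $m=2$ is vacuous, one takes $\eta := \sigma$. Throughout I will use the identification of Remark \ref{rmk:S3}: since $\sigma|_{\partial\Omega}\equiv\Id_m$, a deformation of $\sigma$ is automatically compatible with pseudo-periodicity (which for a normalized map is just the recipe for extending it from $\Omega$ to $\R^3$) and with the normalization \eqref{eqn:boundary_sigma} precisely when it is carried out \emph{relative to} $\partial\Omega$. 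So the only thing to check at each step is that the homotopy fixes the boundary values.

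First I would look at the last column $c_m(\bk)\in S^{2m-1}\subset\C^m$ of the matrix $\sigma(\bk)$, which equals the standard basis vector $e_m$ on $\partial\Omega$. When $m\ge 3$ one has $2m-1\ge 5 > 3 = \dim\Omega$, so by Sard's lemma the smooth map $\bk\mapsto c_m(\bk)$ is not surjective onto $S^{2m-1}$; its image has measure zero, so one may pick a point $p$ in the open dense complement with $p\ne e_m$. Composing with the stereographic projection $S^{2m-1}\setminus\{p\}\cong\R^{2m-2}$ and contracting along straight segments to the image of $e_m$ yields a smooth deformation of $c_m$ to the constant map $e_m$; on $\partial\Omega$, where $c_m\equiv e_m$ already sits at the target point, this contraction is stationary. (The smoothing remarks preceding Theorem \ref{thm:2D} let us take this deformation smooth without loss of generality.)

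Next I would propagate this deformation of the last column to a deformation of the whole matrix. For fixed $\bk$, the path $s\mapsto c_m(\bk,s)$ in $S^{2m-1}$ lifts to a path $s\mapsto\rho(\bk,s)\in SU(m)$ with $\rho(\bk,0)=\Id_m$ and $\rho(\bk,s)\,c_m(\bk)=c_m(\bk,s)$ — by the homotopy lifting property of the bundle $SU(m)\to S^{2m-1}$, $g\mapsto g\,e_m$ (whose fibre is $SU(m-1)$), or, more in the constructive spirit, by a parallel-transport Cauchy problem of the type \eqref{eqn:parallel} driven by the curve of rank-one projections onto the moving column, up to a geometric phase absorbed into a further rotation. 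This lift can be taken to depend smoothly on $\bk$ and is stationary wherever $c_m$ is, hence on $\partial\Omega$. Setting $\sigma_s(\bk):=\rho(\bk,s)\,\sigma(\bk)$ gives a homotopy relative to $\partial\Omega$ from $\sigma$ to a matrix whose last column is $e_m$; unitarity then forces its last row to be $e_m^{\!\top}$ as well, so $\sigma_1=\mathrm{diag}(\sigma',1)$ with $\sigma'\in U(m-1)$. If the intermediate $\sigma_s$ happen to leave $SU(m)$ one writes $\det\sigma_s=\eu^{\iu\theta_s}$ with $\theta_s$ continuous, $\theta_0\equiv 0$ and $\theta_s\equiv 0$ on $\partial\Omega$ (as the deformation is relative to the boundary, where $\det\equiv 1$), and multiplies the \emph{first} column by $\eu^{-\iu\theta_s}$; this restores $\det\equiv 1$ without touching the last column or the boundary values, exactly the determinant-unwinding move already used in the proof of Corollary \ref{cor:boundary}. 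Since $\alpha\sub{2D}$ and $\alpha\sub{3D}$ are diagonal matrices supported on the first coordinate of $\C^m$, they restrict to the upper-left $(m-1)\times(m-1)$ block, so $\sigma'$ is again a normalized pseudo-periodic map, now $SU(m-1)$-valued, and the induction continues.

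The main obstacle I anticipate is not the dimension count — that is the clean conceptual input, and it is also what pins the iteration at $SU(2)$, since for $m=2$ one has $2m-1=3$ and Sard's lemma no longer forbids surjectivity of the last column, so the whole $\pi_3$-topology is necessarily concentrated in that final block. The delicate point is rather the bookkeeping in the column-to-matrix lifting: one must arrange it to be simultaneously (i) valued in $SU(m)$, (ii) relative to $\partial\Omega$ — and hence automatically pseudo-periodic — and (iii) explicit enough to honour the paper's constructive standpoint. This is where the parallel-transport lift, the geometric-phase correction, and the determinant-unwinding step have to be combined with care, and where one should verify that passing between the continuous contraction produced by Sard's lemma and an honest smooth homotopy (as licensed by the remarks before Theorem \ref{thm:2D}) does not spoil any of the constraints above.
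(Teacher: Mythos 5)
Your argument is correct and follows essentially the same route as the paper, namely the column interpolation method of Ref.\ \onlinecite{gontier2019numerical}: Sard's lemma forbids the column map into $S^{2\widetilde m-1}$ from being surjective while $2\widetilde m - 1 > 3$, the columns are contracted and the matrix carried along within $SU(m)$ relative to $\partial\Omega$, whence pseudo-periodicity follows automatically, and your lifting and determinant-unwinding steps merely unpack what the paper delegates to the citation. The only blemish is a typo: the stereographic image of $S^{2m-1}$ minus a point is $\R^{2m-1}$, not $\R^{2m-2}$.
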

\begin{proof}
Let us first restrict $\sigma$ to $\Omega$. In view of \eqref{eqn:boundary_sigma}, this restriction also admits a  \emph{$(2\pi\Z^3)$-periodic} extension, for which the column interpolation argument of Ref.\ \onlinecite{gontier2019numerical} applies. As was briefly explained in the proof of Theorem \ref{thm:2D}.\ref{item:2D_alphabeta}, the argument deforms continously the columns of the matrix to constant vectors. The construction is applied inductively, starting from $\widetilde{m} = m$ and going down in $\widetilde{m}$ for as long as the dimension of the sphere in which these columns lie, namely $S^{2\widetilde{m}+1} \subset \C^{\widetilde{m}}$, is bigger than the dimensionality of the vector of parameters $\bk$ on which the matrix depends, namely $3$ in our case. We have that the column interpolation argument fails only when $2 \widetilde{m} + 1 \le 3$, or $\widetilde{m} \le 2$. This yields that $m-2$ columns of the matrix $\sigma$ can be continously deformed to be the last standard basis vectors in $\C^m$, and therefore we end up with a block-diagonal form for $\sigma$ of the type claimed in the statement -- once again, the argument a priori produces a deformation within $U(m)$, but the determinant doesn't wind, and therefore it can be deformed to $1$ if needed. Since $\sigma$ is already equal to $\Id_m$ on the boundary of~$\Omega$, we can assume that such deformation does not modify the values on $\partial \Omega$. Once pseudo-periodicity is imposed to extend the definition from $\Omega$ to $\R^3$, the proof is concluded.
\end{proof}

The above result reduces the homotopy theory of normalized pseudo-periodic $SU(m)$-valued maps $\sigma(\bk)$, with $m \ge 2$, to that of $SU(2)$-valued maps $\eta(\bk)$ with the same properties. 

\begin{theorem} \label{thm:homotopy_3deg}
Two normalized pseudo-periodic $SU(2)$-valued maps $\eta, \widetilde{\eta}$ are pseudo-periodically homotopic if and only if
\begin{equation} \label{eqn:3deg}
\tdeg(\eta) = \tdeg\big(\widetilde{\eta}\big), \quad \text{where} \quad \tdeg(g) := \frac{1}{24 \, \pi^2} \int_{\T^3} \Tr_{\C^2} \left[ \left( g^{-1} \, \di g \right)^{\wedge 3} \right]\, \in \Z \,.
\end{equation}
\end{theorem}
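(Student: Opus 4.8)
The plan is to reduce the statement to the classical computation $\pi_3(SU(2)) \simeq \Z$. By Remark \ref{rmk:S3}, a normalized pseudo-periodic $SU(2)$-valued map $\eta$ is nothing but a map $S^3 \to SU(2)$ sending the distinguished point to $\Id_2$, and a pseudo-periodic homotopy preserving the normalization \eqref{eqn:boundary_sigma} is a based homotopy of such maps; since $SU(2)$ is simply connected, based and unbased homotopy classes coincide, so the set of pseudo-periodic homotopy classes of normalized $\eta$'s is exactly $\pi_3(SU(2)) \simeq \Z$. It then suffices to show that $[\eta] \mapsto \tdeg(\eta)$ is a \emph{well-defined} and \emph{injective} function on $\pi_3(SU(2))$.

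Well-definedness is the ``only if'' direction, and I would prove it exactly as the homotopy invariance of the $1$-degree was established inside the proof of Theorem \ref{thm:2D}.\ref{item:2D_alphabeta}. For a smooth pseudo-periodic homotopy $\eta_s$, writing $a_s := \eta_s^{-1}\,\di\eta_s$ and $b_s := \eta_s^{-1}\,\partial_s\eta_s$ and using the Maurer--Cartan relations $\di a_s = - a_s \wedge a_s$ and $\partial_s a_s = \di b_s + [b_s, a_s]$ together with the cyclicity of the trace, one computes that $\partial_s \Tr_{\C^2}\big[(\eta_s^{-1}\,\di\eta_s)^{\wedge 3}\big]$ equals an exact $3$-form (essentially $3\,\di\,\Tr_{\C^2}(b_s\, a_s\wedge a_s)$); integrating over the closed manifold $\T^3 \simeq S^3$ then gives $\partial_s\, \tdeg(\eta_s) = 0$. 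Equivalently, $\tdeg(\eta)$ is the integral over $\T^3$ of the pullback along $\eta$ of a fixed closed $3$-form on $SU(2)$, hence an invariant of the homotopy class.

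The substantive point is the converse. I would establish it by identifying $\tdeg$ with a topological degree: the $3$-form $\frac{1}{24\pi^2}\,\Tr_{\C^2}\big[(g^{-1}\di g)^{\wedge 3}\big]$ is bi-invariant on the $3$-dimensional group $SU(2)$, hence a constant multiple of the Haar volume form, and the constant is pinned down by evaluating the Cartan $3$-form $\Tr_{\C^2}(X\,[Y,Z])$ on a basis of $\mathfrak{su}(2)$ (equivalently, $\int_{SU(2)}\Tr_{\C^2}[(g^{-1}\di g)^{\wedge 3}] = \pm 24\pi^2$); this exhibits it as a de Rham representative of the integral generator of $H^3(SU(2);\Z) \simeq \Z$ under the identification $SU(2) \simeq S^3$. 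Consequently $\tdeg(\eta)$ is the Brouwer degree of $\eta \colon S^3 \to S^3$ --- in particular $\tdeg(\eta) \in \Z$, as claimed --- and Hopf's degree theorem, namely that two self-maps of a sphere are homotopic if and only if they have equal degree (see e.g.\ Ref.\ \onlinecite{mukherjee2015differential}), yields the statement. The computation of the normalization constant is routine and could be deferred to the Appendix, alongside the Chern--Simons computations. Alternatively, one can sidestep the exact constant: $\tdeg$ descends to a homomorphism $\pi_3(SU(2)) \simeq \Z \to \R$ --- additivity under pointwise product following from the Polyakov--Wiegmann identity $\Tr_{\C^2}[((gh)^{-1}\di(gh))^{\wedge 3}] = \Tr_{\C^2}[(g^{-1}\di g)^{\wedge 3}] + \Tr_{\C^2}[(h^{-1}\di h)^{\wedge 3}]$ up to an exact form, integrated over the closed $\T^3$, together with pointwise product inducing the group law on $\pi_3$ --- and this homomorphism is non-zero since the Cartan $3$-form is non-degenerate at the identity; a non-zero homomorphism $\Z \to \R$ taking integer values is automatically injective, which is precisely the converse implication.

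The main obstacle is exactly this converse direction, and within it the genuinely new ingredient compared with $d \le 3$: there the matching matrices could always be trivialized by the column-interpolation argument, which (as noted in the proof of Theorem \ref{thm:SU(m)->SU(2)}) breaks down precisely at $SU(2)$, so here one really meets the non-triviality of $\pi_3(SU(2))$, with the $3$-degree as the obstruction detecting it. The only actual computation is the normalization of the Cartan $3$-form --- or, in the softer route, merely its non-vanishing at one point --- everything else being a transcription of arguments already used in the lower-dimensional cases.
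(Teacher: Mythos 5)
Your proposal is correct and rests, as the paper's does, on Hopf's degree theorem for self-maps of $S^3 \simeq SU(2)$: the ``only if'' direction parallels Proposition~\ref{prop:3deg} (homotopy invariance of the $3$-degree via an exact-form computation), and the ``if'' direction comes down to null-degree maps $S^3 \to S^3$ being null-homotopic. The difference is one of emphasis. You make the identification of $\tdeg$ with the Brouwer degree explicit by observing that $\Tr_{\C^2}\bigl[(g^{-1}\di g)^{\wedge 3}\bigr]$ is a bi-invariant top form on $SU(2)$, hence proportional to Haar volume, with proportionality constant $24\pi^2$; or, in your softer variant, that $\tdeg$ is a nonzero homomorphism on $\pi_3(SU(2)) \simeq \Z$, hence injective without computing the constant. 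This is clean and entirely adequate, and you rightly flag (which the paper leaves implicit) that simple connectivity of $SU(2)$ makes based and free homotopy classes from $S^3$ coincide, so enforcing the normalization \eqref{eqn:boundary_sigma} along the homotopy costs nothing. The paper takes a slightly different route to the same theorem: it uses the additivity from Proposition~\ref{prop:3deg} to reduce to the null-degree case, cites Hopf's degree theorem, and then devotes most of the proof to \emph{sketching an explicit homotopy} (stereographic contraction when $-\Id_2$ is not in the image; a Sard-type argument pushing preimages onto an equator and a ``straightening'' retraction along meridians otherwise), consistent with the paper's announced constructive aims. Your argument is more economical and relies on standard facts; it buys brevity at the cost of the explicit construction, which is arguably the point of the paper's longer treatment.
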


\begin{remark}[$3$-degree and topological degree of $SU(2)$-valued maps] \label{rmk:3deg_topo}
The quantity $\tdeg(g)$ appearing in \eqref{eqn:3deg} is called the \emph{$3$-degree}, and characterizes the \emph{topological degree} of smooth maps between manifolds of dimension $3$, specifically in the case where the target manifold is $SU(2)$, which is topologically a $3$-sphere \cite{mukherjee2015differential, manton2004topological}. Many properties of this topological invariant are collected for the readers' convenience in Appendix \ref{app:3deg}. In particular, in view of Corollary \ref{cor:3deg(sigma)=3deg(eta)}, the $3$-degree of $\eta$ appearing in \eqref{eqn:3deg} can be computed directly from $\alpha$ by means of a similar integral, with the only difference that the trace is computed over the whole $\C^m$ rather than on $\C^2$: this integral will be also dubbed the $3$-degree of $\alpha$.

The $3$-degree of certain unitary-valued maps appears also in the field-theoretic investigation of the topological properties of 2D Floquet insulators, see Ref.s \onlinecite{carpentier2015construction, monaco2017gauge} and references therein.
\end{remark}

\begin{proof}[Proof of Theorem \ref{thm:homotopy_3deg}]
Proposition \ref{prop:3deg} shows that, even when $\eta$ is only pseudo-periodic, the differential form used to compute the $3$-degree has an integral which does not depend on the cell chosen for the dual lattice $\Gamma^*$ (hence the notation of integration over $\T^3$), and that moreover homotopic (normalized, pseudo-periodic) maps have the same $3$-degree. By the additivity of such degree, which is also shown in Proposition \ref{prop:3deg}, in order to show conversely that two maps with the same $3$-degree can be deformed one into the other it suffices to show that a normalized pseudo-periodic map $f(\bk) \in SU(2)$ with $\tdeg(f) = 0$ can be deformed to the constant map $\Id_2$. In view of Remark~\ref{rmk:S3}, we will rather consider $f$ as defined over the $3$-sphere $S^3$, parametrized by coordinates $\bx = (x_0, x_1, x_2, x_4)$ with $x_0^2 + \cdots + x_4^2=1$, with values in $SU(2)$, which is also topologically a $3$-sphere. The general statement that null-degree maps $S^n \to S^n$ are homotopic to a constant map is the content of the so-called \emph{Hopf's degree theorem}: see e.g.\ Ref.\ \onlinecite[Theorem 6.6.6]{mukherjee2015differential} for a proof, which is inductive in the dimension $n$ of the spheres.

We sketch here the construction of such an homotopy; in order not to overburden the argument, we provide here only some details, as fully explicitating the construction of the required homotopy in a concise way proves challenging. 

Pick a regular value $y \in SU(2)$ for $f \colon S^3 \to SU(2)$ (they form a full-measure set in $SU(2)$ by Sard's lemma); we can choose $y = - \Id_2$ without loss of generality. We consider then two cases.

\begin{description}
\item[Case 1] the point $-\Id_2 \in SU(2)$ lies outside of the image of $f$: this is the generic case for such null-degree maps. In this situation, we can perform a stereographic projection $\pi$ of the range of $f$ from the antipodal point $\Id_2 \in SU(2) \simeq S^3$, and contract the projection $\pi(f(\bx))$ to the origin in $\R^3$ by a rescaling $(1-t) \, \pi(f(\bx))$. The inverse stereographic projection yields the desired deformation with values in $S^3$.
\footnote{Alternatively, we can use the fact that the exponential map $\exp \colon B_{r}(0) \subset \mathfrak{su}(2) \to SU(2) \setminus \{-\Id_2\}$ is a diffeomorphism, where $r$ is the injectivity radius of the $3$-sphere (see e.g.\ Ref.\ \onlinecite[Chap.~5]{petersen1988riemannian}). This means that one can write $f(\bk) := \eu^{\iu \, \vec{n}(\bk) \cdot \vec{\tau}}$, where $\vec{\tau} = (\tau_1, \tau_2, \tau_3)$ are the three Pauli matrices and $\vec{n}(\bk) \in \R^3$ is a smooth map which is compatible with the pseudo-periodicity of $f$ (it can be defined via a matrix logarithm which is smooth on $SU(2) \setminus \{-\Id_2\}$). Therefore, $f_t(\bk) := \eu^{\iu \, (1-t)\, \vec{n}(\bk) \cdot \vec{\tau}}$ deforms $f(\bk)$ to $\Id_2$ continuously via pseudo-periodic maps.}

\item[Case 2] the preimage $f^{-1} \set{-\Id_2}$ is non-empty (non-generic case). In this case, $f$ can be deformed into a map falling under the previous Case 1. One can first show that $f$ can be continuously deformed to a function $g$ such that the preimage of interest lies in a ``great circle'' $S^1 \subset S^3$, and that said map $g$ can be more easily deformed to a constant, potentially by considering the restriction $g \big|_{S^1} \colon S^1 \to g(S^1) \subset S^3$, and then ``straightening out'' the loop $g(S^1) \subset S^3$ to be a circle through $-\Id_2$, so as to consider the 1-degree of the resulting $S^1 \to S^1$ map. A homotopy unwinding $g\big|_{S^1}$ might then be lifted to $g$ as a whole.
\end{description}

For the interested reader, we detail below how to constrain the preimages of a given regular value of $f$ to a circle $S^1$. Let us first enumerate the points in this preimage as $\set{\bx_1, \ldots, \bx_p}$. Let $M \subset S^3$ be the subset of the domain of $f$ constructed as follows: Pair the points in $f^{-1}\set{-\Id_2}$ in all possible ways; for any such pair $\set{\bx_i,\bx_j}$, consider the plane in $\R^4$ which contains $\bx_i$, $\bx_j$ and the origin (if the points $\bx_i$ and $\bx_j$ happen to be antipodal on $S^3$ and hence collinear with the origin, pick any plane which contains all three points); $M$ is then the union of the intersections of all such planes with $S^3$. As a union of $1$-dimensional circles, $M$ is of measure zero in $S^3$; therefore we can pick a generic point $\bx_0 \in S^3 \setminus M$, such that also $-\bx_0 \in S^3$ lies outside of $M$. Consider then the equatorial $S^2 \subset S^3$ subject to the choice of ``poles'' $\set{\bx_0, -\bx_0}$. By construction, with this choice of ``poles'', no two points in $f^{-1}\{-\Id_2\}$ lie on the same ``meridian'', that is, they cannot lie on the same great circle which passes through $\bx_0$ as well. Let us then project $\set{\bx_1, \ldots, \bx_p}$ onto the equator $S^2$ along the ``meridians'': these projections $\set{\bx_1', \ldots, \bx_p'}$ will again be distinct. We now consider a tubular neighbourhood in $S^3$ of the equator $S^2$, large enough to contain all preimages $\set{\bx_1, \ldots, \bx_p}$. A pictorial image of this configuration in cylindrical coordinates is provided in Figure~\ref{fig:mercator}. We define then a family of smooth maps $\phi_t \colon S^3 \to S^3$ such that $\phi_{t=0}$ is the identity, and which displaces the points $\set{\bx_1, \ldots, \bx_p}$ to the points $\set{\bx_1', \ldots, \bx_p'}$ following meridians. More specifically it displaces a given loop passing through $\set{\bx_1, \ldots, \bx_p}$ within the tubular neighbourhood to the equator $S^2$, and leaves the complement of this tubular neighbourhood invariant. The family thus culminates at $t=1$ with a map in which the aforementioned loop now lies on the equator $S^2 \subset S^3$. Each map $\phi_t : S^3 \to S^3$ is bijective, enabling us to define the composition $f_t := f \circ \phi_t^{-1}$ to obtain a smooth deformation of the original map $f = f_{t=0}$ with a map $f_{t=1}$ which still has $-\Id_2$ as a regular value, but such that the preimage $f_{t=1}^{-1}\set{-\Id_2}$ is constrained on the equator $S^2 \subset S^3$.

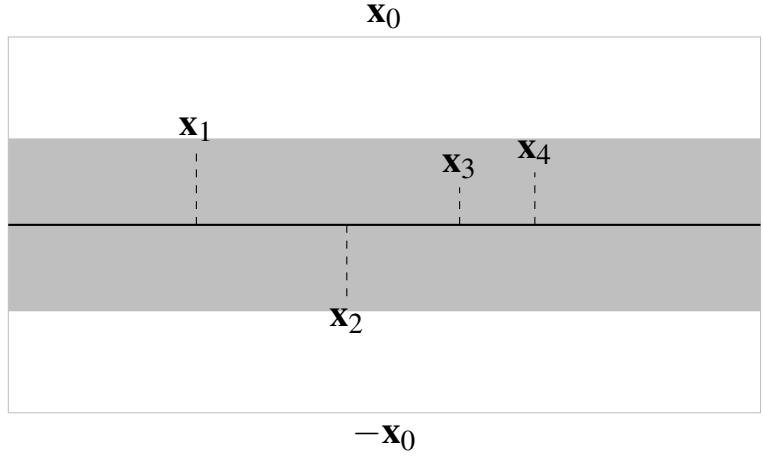
\begin{figure}[htb]
\centering
\begin{tikzpicture}[>=stealth]
\draw [lightgray] (-5,-2.5) rectangle (5,2.5);
\draw [line width=23mm, lightgray] (-5, 0) -- (5,0);
\draw [thick] (-5, 0) -- (5,0);
\draw [dashed] 
      (-2.5,0) -- (-2.5,1) node [anchor=south] {\Large $\bx_1$}
      (-.5,0) -- (-.5,-1) node [anchor=north] {\Large $\bx_2$}
      (1,0) -- (1,0.5) node [anchor=south] {\Large $\bx_3$}
      (2,0) -- (2,0.7) node [anchor=south] {\Large $\bx_4$};
\draw (0,2.5) node [anchor=south] {\Large $\bx_0$}
      (0,-2.5) node [anchor=north] {\Large $-\bx_0$};
\end{tikzpicture}
\caption{Representation of $S^3$, resp.\ $S^2$, in cylindrical coordinates; equatorial $S^2$, resp.\ $S^1$ (thick straight line); points $\set{\bx_1, \ldots, \bx_p}$ in the preimage of a regular value have different projections (dashed lines) on the equator; a tubular neighbourhood of the equator and of such projections (shaded region)}
\label{fig:mercator}
\end{figure}

We now repeat the same type of argument, but to the restriction $f_{t=1} \big|_{S^2} \colon S^2 \to S^3$. Again, the generic choice of an equator $S^1 \subset S^2$ yields distinct projections to $S^1$ along ``meridians'' of the points $\set{\bx_1', \ldots, \bx_p'}$; by retracting along meridians, these preimages can be then pushed further to $S^1$. The reader can once again refer to Figure \ref{fig:mercator}. By smearing out this deformation along the transverse coordinates to $S^2 \subset S^3$, we can finally arrive at the desired map $g \colon S^3 \to S^3$, obtained from $f$ by a continuous deformation (which in particular doesn't change the value of the $3$-degree), and such that $g^{-1}\set{-\Id_2}$ is contained in a circle $S^1 \subset S^3$.  
\end{proof}

\subsection{Topological degree and 2nd Chern class}

The final link between the pseudo-periodic homotopy theory of the family of matching matrices and the topological invariants associated to the projections $P(\bk)$, $\bk \in \T^4$, is provided by the next

\begin{theorem} \label{thm:3deg=c2}
If $\alpha(k_2,k_3,k_4)$ denote the matching matrices of the Bloch functions \eqref{eqn:normalizedBFs}, it holds that
\[ \tdeg(\alpha) = - c_2^{\set{1,2,3,4}}(P) \in \Z\,. \]
\end{theorem}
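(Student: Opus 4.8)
The plan is to mimic, one dimension up, the computation in the proof of Theorem \ref{thm:2D}.\ref{item:2Ddeg=c1}, replacing the transgression $c_1(P) = \di\,\Tr_{\C^m}(A)$ by the corresponding \emph{Chern--Simons transgression} for the second Chern form. Let $\Psi = \set{\psi_1, \ldots, \psi_m}$ be the smooth orthonormal frame of \eqref{eqn:normalizedBFs}, with Berry connection $A := A^{(\Psi)}$ as in \eqref{eqn:BerryConnection}: it is globally smooth on the $4$-cell $[0,2\pi]^4$, although only quasi-periodic on its boundary. From Appendix \ref{app:CS} one has a $3$-form $\mathrm{cs}(A)$ --- built out of $A$, $\di A$, $A \wedge A \wedge A$ together with an abelian correction of the form $\Tr_{\C^m}(A) \wedge \Tr_{\C^m}(\di A)$, the latter accounting for the passage from the second Chern character to the second Chern class \eqref{eqn:c12} --- such that $c_2(P) = \di\,\mathrm{cs}(A)$ wherever a smooth local frame is defined. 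Stokes' theorem then gives
\[ c_2^{\set{1,2,3,4}}(P) = \int_{[0,2\pi]^4} \di\,\mathrm{cs}(A) = \int_{\partial [0,2\pi]^4} \mathrm{cs}(A). \]

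Next I would organise the eight faces of $\partial [0,2\pi]^4$ into the four opposite pairs $\set{k_j = 0}$ and $\set{k_j = 2\pi}$, $j \in \set{1,2,3,4}$. On each pair the frame $\Psi$ on $\set{k_j = 2\pi}$ is the gauge transform, by a transition matrix $g_j \in U(m)$, of the frame on $\set{k_j = 0}$: here $g_1 = \alpha$ is precisely the matching matrix in the statement, while $g_2, g_3, g_4$ are the \emph{diagonal}, hence \emph{abelian}, quasi-periodicity matrices dictated by $\alpha\sub{2D}$, $\alpha\sub{3D}$ and the identity (Lemma \ref{lemma:4Dalpha}). Using the behaviour of the Chern--Simons form under a gauge change $A \mapsto g^{-1}\, A\, g + \tfrac{1}{2\pi\iu}\, g^{-1}\, \di g$, namely $\mathrm{cs}(A^g) = \mathrm{cs}(A) + \di(\,\cdot\,) + c_0\, \Tr_{\C^m}\!\big((g^{-1}\, \di g)^{\wedge 3}\big)$ for a universal constant $c_0$, two simplifications occur. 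First, for the \emph{abelian} $g_2, g_3, g_4$ the cubic Wess--Zumino term vanishes identically, since then $g^{-1}\, \di g$ is a diagonal matrix of $1$-forms and any $1$-form wedged with itself is zero. Second, the leftover exact pieces $\di(\,\cdot\,)$, summed over all faces, collapse to integrals over the codimension-$2$ edges of $[0,2\pi]^4$; thanks to the normalization $\alpha\big|_{\partial\Omega} = \delta\sub{4D}$ from Corollary \ref{cor:boundary} and, again, the diagonal structure of the quasi-periodicity, these edge contributions telescope to zero --- this is the $4$D counterpart of the observation, in the $2$D proof, that $\Tr_{\C^m}(A)$ is periodic in $k_2$. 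What survives is only the $k_1$-pair, so that
\[ c_2^{\set{1,2,3,4}}(P) = c_0 \int_{\T^3} \Tr_{\C^m}\!\big((\alpha^{-1}\, \di \alpha)^{\wedge 3}\big). \]

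Comparing with the definition of $\tdeg$ in \eqref{eqn:3deg}, and invoking Corollary \ref{cor:3deg(sigma)=3deg(eta)} and Remark \ref{rmk:3deg_topo} to read the $\C^m$-traced integral above as the $3$-degree of $\alpha$ (equal in turn to $\tdeg(\eta)$), it remains only to verify that the universal constant $c_0$ emerging from the transgression-and-Wess--Zumino bookkeeping equals $-\tfrac{1}{24\pi^2}$, which then gives $c_2^{\set{1,2,3,4}}(P) = -\tdeg(\alpha)$. That the right-hand side is well defined over the quasi-periodicity cell, and is a homotopy invariant, is the content of Proposition \ref{prop:3deg} in Appendix \ref{app:3deg}.

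The main obstacle I expect is entirely ``bookkeeping'' and is twofold. The first part is pinning down the constant $c_0$ and, above all, its sign: this intertwines the $2\pi\iu$ normalizations in \eqref{eqn:Berry_curv}--\eqref{eqn:BerryConnection}, the cubic term of the Chern--Simons $3$-form, and the correction distinguishing $c_2(P)$ from the second Chern character --- precisely the subtleties Appendix \ref{app:CS} is designed to settle. The second part is rigorously discarding the lower-codimension (edge) boundary contributions: in $2$D this was a one-line periodicity remark, but here $\Psi$ is only quasi-periodic on $\partial [0,2\pi]^4$, so one genuinely has to exploit both the diagonal form of the quasi-periodicity matrices and the normalization of Corollary \ref{cor:boundary}. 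As an independent check on the sign, I would also reduce --- using homotopy invariance of both sides together with Theorems \ref{thm:beta4D}, \ref{thm:SU(m)->SU(2)} and \ref{thm:homotopy_3deg} --- to the model case $m=2$ with vanishing weak invariants and $\alpha = \mathrm{diag}(\eta_N, \Id_{m-1})$ for a standard degree-$N$ generator $\eta_N \colon S^3 \to SU(2)$, where $c_2^{\set{1,2,3,4}}$ of the associated Bloch bundle over $\T^4$ is the classically known $-N$.
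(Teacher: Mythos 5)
Your proposal is correct and takes essentially the same route as the paper: transgress $c_2(P)=\di\,CS$, apply Stokes on $[0,2\pi]^4$, use the gauge‐transformation law of the Chern--Simons form on opposite faces, kill the Wess--Zumino cubic term for the diagonal quasi‐periodicity matrices, and discard the remaining exact (edge) contributions using the normalization of Corollary \ref{cor:boundary} and the diagonal structure. The only bits you defer --- the explicit constant $-\tfrac{1}{24\pi^2}$ via the precise $CS$ and its gauge variation \eqref{eqn:CS_gauge}, and the edge‐term cancellation in \eqref{eqn:boundaryCS} --- are exactly what the paper's Appendix \ref{app:CS} and the in‐proof computations settle; also note that Corollary \ref{cor:boundary} gives $\alpha\big|_{\partial\Omega}\equiv\Id_m$ rather than $\delta\sub{4D}$ (the latter is already $\Id_m$ on $\partial\Omega$), but that does not affect the argument.
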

\begin{proof}
The proof is similar in spirit to the one of Theorem \ref{thm:2D}.\ref{item:2D_alphadelta}, and is based on the following observation which is shown in Proposition \ref{prop:CS}: Like the first Chern form has a local primitive given by the trace of the Berry connection $1$-form \eqref{eqn:BerryConnection}, the second Chern form $c_2(P)$ in \eqref{eqn:c12} can be locally expressed as
\[ c_2(P) = \di CS, \quad \text{where} \quad CS := \frac{1}{2} \left[ \Tr(A) \wedge \Tr(F) - \Tr \left( F \wedge A - \frac{2\pi\iu}{3} \, A \wedge A \wedge A \right) \right]\,. \]
The $3$-form $CS$ is called \emph{Chern--Simons form}, and since it is expressed in terms of the Berry connection $A$ it requires an orthonormal basis of Bloch functions to be defined. The dependence of $CS$ on the chosen gauge is as follows: if $\Psi = \set{\psi_a(\bk)}_{1 \le a \le m}$ is obtained from $\Phi = \set{\phi_a(\bk)}_{1 \le a \le m}$ by means of a $U(m)$-valued change of basis $\gamma(\bk)$, then
\[ CS^{(\Psi)} = CS^{(\Phi)} - \frac{1}{24\pi^2} \, \Tr_{\C^m} \left[ \left( \gamma^{-1} \, \di \gamma \right)^{\wedge 3} \right] - \di B_{\gamma}, \quad \text{where} \quad B_{\gamma} := \frac{1}{4\pi\iu} \left[ \Tr_{\C^m}\left(A \wedge \di \gamma \, \gamma^{-1}\right) + \Tr_{\C^m}(A) \wedge \Tr_{\C^m}\left(\di \gamma \, \gamma^{-1} \right) \right]\,. \]

Let us then use this information, and compute the second Chern number $c_2^{\set{1,2,3,4}}(P)$ by means of the Bloch basis \eqref{eqn:normalizedBFs}, which is defined on the whole space but is not periodic in general. Nevertheless we have by Stokes' theorem
\begin{align*}
c_2^{\set{1,2,3,4}}(P) & = \int_{\T^4} c_2(P) = \int_{[0,2\pi]^4} \di CS = \int_{\partial [0,2\pi]^4} CS \\
& = \int_{\set{k_1=2\pi}} CS - \int_{\set{k_1=0}} CS + \int_{\set{k_2=2\pi}} CS - \int_{\set{k_2=0}} CS + \int_{\set{k_3=2\pi}} CS - \int_{\set{k_3=0}} CS \, .
\end{align*}
Since the Bloch functions \eqref{eqn:normalizedBFs} are periodic in $k_4$, the extra boundary term that we would have in the above equality, namely the difference of the integrals of $CS$ on $\set{k_4=2\pi}$ and $\set{k_4=0}$, vanishes. This periodicity in $k_4$ will be used repeatedly also in the following. 

On the other hand, the Bloch functions fail to be periodic in the other directions: the lack of periodicity in $k_1$ is measured by the matching matrices $\alpha(k_2,k_3,k_4)$, the lack of periodicity in $k_2$ is measured by $\alpha\sub{3D}(k_3,k_4)$, and the lack of periodicity in $k_3$ is measured by $\alpha\sub{2D}(k_4)$ (compare \eqref{eqn:4D_phi} and \eqref{eqn:4Dmatch}). In view of the above gauge-dependence of the Chern--Simons form, this implies that
\begin{equation} \label{eqn:boundaryCS}
\begin{aligned}
\int_{\set{k_1=2\pi}} CS - \int_{\set{k_1=0}} CS & = - \frac{1}{24\pi^2} \, \int_{\set{k_1=0}} \Tr_{\C^m} \left[ \left( \alpha^{-1} \, \di \alpha \right)^{\wedge 3} \right] - \int_{\set{k_1=0}} \di B_{\alpha}\,, \\
\int_{\set{k_2=2\pi}} CS - \int_{\set{k_2=0}} CS & = - \frac{1}{24\pi^2} \, \int_{\set{k_2=0}} \Tr_{\C^m} \left[ \left( \alpha\sub{3D}^{-1} \, \di \alpha\sub{3D} \right)^{\wedge 3} \right] - \int_{\set{k_2=0}} \di B_{\alpha\sub{3D}}\,, \\
\int_{\set{k_3=2\pi}} CS - \int_{\set{k_3=0}} CS & = - \frac{1}{24\pi^2} \, \int_{\set{k_3=0}} \Tr_{\C^m} \left[ \left( \alpha\sub{2D}^{-1} \, \di \alpha\sub{2D} \right)^{\wedge 3} \right] - \int_{\set{k_3=0}} \di B_{\alpha\sub{2D}}\,.
\end{aligned}
\end{equation}

In the last two lines, the differential forms $\Tr_{\C^m} \left[ \left( \alpha\sub{3D}^{-1} \, \di \alpha\sub{3D} \right)^{\wedge 3} \right]$ and $\Tr_{\C^m} \left[ \left( \alpha\sub{2D}^{-1} \, \di \alpha\sub{2D} \right)^{\wedge 3} \right]$ vanish identically: indeed, the matrices in the integral are diagonal and have only one non-zero entry, which contributes to the trace; but since this entry depends only on one or two of the three coordinates over which the integration takes places, the three-fold wedge product necessarily vanishes for dimensional reasons. We argue now that the integrals of $\di B_{\alpha\sub{3D}}$ and $\di B_{\alpha\sub{2D}}$ vanish as well. Indeed, let us apply Stokes' theorem once again, and get for example
\[
\int_{\set{k_3=0}} \di B_{\alpha\sub{2D}} = \int_{\partial \set{k_3=0}} B_{\alpha\sub{2D}} = \int_{\set{k_3=0, \, k_1=2\pi}} B_{\alpha\sub{2D}} - \int_{\set{k_3=0, \, k_1=0}} B_{\alpha\sub{2D}} + \int_{\set{k_3=0, \, k_2=2\pi}} B_{\alpha\sub{2D}} - \int_{\set{k_3=0, \, k_2=0}} B_{\alpha\sub{2D}}
\]
where again we neglect the matching contributions coming from $\set{k_4=2\pi}$ and $\set{k_4=0}$. Let us focus on the boundary contributions on $\set{k_1=2\pi}$ and $\set{k_1=0}$; notice that, in the definition of $B_{\alpha\sub{2D}}$, the matrix $\alpha\sub{2D} = \alpha\sub{2D}(k_4)$ is independent of $k_1$ and is therefore left unchanged, while the Berry connection $A$ changes according to \eqref{eqn:A_gauge} with a gauge transformation dictated by (the restriction to $\set{k_3=0}$ of) the matching matrices $\alpha$. Consequently we deduce that
\begin{multline*}
\int_{\set{k_3=0, \, k_1=2\pi}} B_{\alpha\sub{2D}} - \int_{\set{k_3=0, \, k_1=0}} B_{\alpha\sub{2D}} \\
= - \frac{1}{8\pi^2} \int_{\set{k_3=0=k_1}} \left[ \Tr_{\C^m} \left( \alpha^{-1} \, \di \alpha \wedge \di \alpha\sub{2D} \, \alpha\sub{2D}^{-1} \right) + \Tr_{\C^m} \left( \alpha^{-1} \, \di \alpha \right) \wedge \Tr_{\C^m} \left( \di \alpha\sub{2D} \, \alpha\sub{2D}^{-1} \right) \right] = 0
\end{multline*}
since by Corollary \ref{cor:boundary} we have $\alpha \big|_{\partial \Omega} = \alpha \big|_{\partial \set{k_1=0}} \equiv \Id_{m}$, and therefore $\di \alpha \equiv 0$ on the region of integration in the above equality. Moreover
\begin{multline*}
\int_{\set{k_3=0, \, k_2=2\pi}} B_{\alpha\sub{2D}} - \int_{\set{k_3=0, \, k_2=0}} B_{\alpha\sub{2D}} \\
= - \frac{1}{8\pi^2} \int_{\set{k_3=0=k_2}} \left[ \Tr_{\C^m} \left( \alpha\sub{3D}^{-1} \, \di \alpha\sub{3D} \wedge \di \alpha\sub{2D} \, \alpha\sub{2D}^{-1} \right) + \Tr_{\C^m} \left( \alpha\sub{3D}^{-1} \, \di \alpha\sub{3D} \right) \wedge \Tr_{\C^m} \left( \di \alpha\sub{2D} \, \alpha\sub{2D}^{-1} \right) \right] = 0
\end{multline*}
since, once again, both $\alpha\sub{2D} = \alpha\sub{2D}(k_4)$ and $\alpha\sub{3D} = \alpha\sub{3D}(k_3,k_4)$ are independent of $k_1$ and have only one possibly non-constant entry, leading to the vanishing of the wedge products appearing in the integrand above. Similar arguments apply to $B_{\alpha\sub{3D}}$, and lead to the conclusion that the last two lines in \eqref{eqn:boundaryCS} vanish. 

We are left with the equality
\[ c_2^{\set{1,2,3,4}}(P) = - \frac{1}{24\pi^2} \, \int_{\set{k_1=0}} \Tr_{\C^m} \left[ \left( \alpha^{-1} \, \di \alpha \right)^{\wedge 3} \right] - \int_{\set{k_1=0}} \di B_{\alpha} = - \frac{1}{24\pi^2} \, \int_{\set{k_1=0}} \Tr_{\C^m} \left[ \left( \alpha^{-1} \, \di \alpha \right)^{\wedge 3} \right] - \int_{\partial \set{k_1=0}} B_{\alpha} \]
again by Stokes' theorem. However, we have already argued that $\di \alpha = 0$ on the boundary of $\Omega = \set{k_1=0}$, and thefore that $B_\alpha$ vanishes identically there. The conclusion now follows from the definition of the $3$-degree of $\alpha$ in Remark \ref{rmk:3deg_topo}.
\end{proof}

\begin{remark}[Normal form for the 4D matching matrices, $SU(m)$-part]
The above results give a possible normal form for the $SU(m)$-part of the matching matrices for the Bloch functions \eqref{eqn:normalizedBFs}. For this, notice first that any $2 \times 2$ complex matrix $\eta$ can be written uniquely as
\[ \eta = \sum_{j=0}^{4} \eta_j \, \tau_j \]
where
\[ \tau_0 = \Id_2, \quad \tau_1 = \begin{pmatrix} 0 & 1 \\ 1 & 0 \end{pmatrix}, \quad \tau_2 = \begin{pmatrix} 0 & -\iu \\ \iu & 0 \end{pmatrix}, \quad \tau_1 = \begin{pmatrix} 1 & 0 \\ 0 & -1 \end{pmatrix} \]
are the Pauli matrices and
\[ \eta_j := \frac{1}{2} \, \Tr_{\C^2} \left( \tau_j \, \eta \right) \]
is the Hilbert--Schmidt scalar product of $\eta$ with $\tau_j$. The condition that $\eta \in SU(2)$ is equivalent to requiring that $c_0 := \eta_0$ and $c_j := \iu \, \eta_j$, $j \in \set{1, 2, 3}$, are real and satisfy $c_0^2 + c_1^2 + c_2^2 + c_3^2 = 1$: this realizes the isomorphism of $SU(2)$ with the $3$-sphere $S^3 \subset \R^4$. Recall also from Remark \ref{rmk:S3} that a normalized map $\sigma \colon \Omega \to SU(m)$, such that $\sigma \big|_{\partial \Omega} \equiv \Id_m$, defines a map $S^3 \simeq \R^3 \cup \set{\infty} \to SU(m)$ when $\partial \Omega$ is quotiented out to a point ``at infinity''; let us then denote by $(x_0, x_1, x_2, x_3)$, $x_i \equiv x_i(\bk)$, a set of coordinates in which the quotient space $\Omega / \partial \Omega$ can be identified by the equation $x_0^2 + x_1^2 + x_2^2 + x_3^2 = 1$.

With this notation, any normalized pseudo-periodic $SU(m)$-valued map $\sigma(\bk)$, $\bk=(k_2,k_3,k_4)$, can be brought up to homotopy to the pseudo-periodic extension of the following map on $\Omega$:
\begin{equation} \label{eqn:sigma4D}
\sigma\sub{4D}(\bk) = \begin{pmatrix} \eta\sub{4D}(\bk) & 0 \\ 0 & \Id_{m-2} \end{pmatrix} \quad \text{with} \quad \eta\sub{4D}\big(r \, \cos(\varphi), r \, \sin(\varphi), x_2, x_3\big) := r \, \cos (n \, \varphi) \, \tau_0 + r \, \sin(n \, \varphi) \, \tau_1 + x_2 \, \tau_2 + x_3 \, \tau_3 \in SU(2)
\end{equation}
where \cite{manton2004topological} $n := c_2^{\set{1,2,3,4}}(P) \in \Z$, $(r,\varphi) \in \R_+ \times [0,2\pi]$ and $r^2 + x_2^2 + x_3^2 = 1$ (compare the proof of Theorem \ref{thm:homotopy_3deg}).
\end{remark}

\subsection{Back to Bloch functions}

Collecting all previous results, we are finally able to conclude the proof of our main Theorem in the 4D situation.

\begin{proof}[Proof of Theorem \ref{thm:main}, $d=4$]
Let us assume that all the first Chern numbers and the second Chern number vanish. Then, by virtue of Remarks \ref{rmk:normalalpha2D}, \ref{rmk:normalalpha3D} and \ref{rmk:normaldelta4D}, all the matrices $\alpha\sub{2D}$, $\alpha\sub{3D}$ and $\delta\sub{4D}$ are identically equal to $\Id_m$, which means that the matching matrices $\alpha$ are periodic and lie in $SU(m)$. Moreover, Theorems \ref{thm:SU(m)->SU(2)}, \ref{thm:homotopy_3deg} and \ref{thm:3deg=c2} combine to yield that also this $SU(m)$-valued periodic family of matrices can be continuously deformed to the identity. Theorem \ref{thm:beta4D} finally shows how to construct smooth and $(2\pi\Z^4)$-periodic orthonormal Bloch functions spanning the family of projections $P(\bk)$, $\bk \in \T^4$.

Without the assumption on the vanishing of the Chern numbers, the above-mentioned results allow to conclude that the matching matrices can be brought, up to continuous deformation, to a normal form $\alpha\sub{4D} = \delta\sub{4D} \, \sigma\sub{4D}$, where $\delta\sub{4D}$ is in the form \eqref{eqn:delta4D} (being in particular diagonal with only one non-trivial entry) and $\sigma\sub{4D}$ is in the form \eqref{eqn:sigma4D} (which is also block-diagonal, with only a $2\times 2$ non-trivial block). Theorem \ref{thm:beta4D} then allows to conclude that $P(\bk)$, $\bk \in \T^4$, can be spanned by smooth orthonormal Bloch functions $\set{\widetilde{\psi}_1(\bk), \widetilde{\psi}_2(\bk), \widetilde{\psi}_3(\bk), \ldots, \widetilde{\psi}_m(\bk)}$ where the last $m-2$ are already $(2\pi\Z^4)$-periodic, while $\set{\widetilde{\psi}_1(\bk), \widetilde{\psi}_2(\bk)}$ acquire topological ``phases'' when shifting $\bk$ by vectors in $\Gamma^*$. Notice however that the projection 
\[ P_2(\bk) := \left| \widetilde{\psi}_1(\bk) \right\rangle \left\langle \widetilde{\psi}_1(\bk) \right| + \left| \widetilde{\psi}_2(\bk) \right\rangle \left\langle \widetilde{\psi}_2(\bk) \right| \]
is a \emph{periodic} subprojection of $P(\bk)$, namely $P(\bk) \, P_2(\bk) = P_2(\bk)$, and that moreover their Chern classes agree:
\[ \left[ c_1(P) \right] = \left[ c_1(P_2) \right] \,, \quad \left[ c_2(P) \right] = \left[ c_2(P_2) \right]\,. \]
Choose then another rank-$2$ projection $Q_2(\bk)$, acting possibly on some ancillary Hilbert space $\mathcal{H}'$, such that
\begin{equation} \label{eqn:Q_2}
\left[ c_1(Q_2) \right] = -\left[ c_1(P_2) \right] \,, \quad \left[ c_2(Q_2) \right] = -\left[ c_2(P_2) \right]\,. 
\end{equation}
--- see Remark \ref{rmk:Q_2} below --- and consider the family of rank-$4$ projections $P_2(\bk) \oplus Q_2(\bk)$ acting on the doubled space $\mathcal{H} \oplus \mathcal{H}'$. By virtue of what we have just shown above, this family of projections is Chern-trivial, and can be therefore spanned by four smooth and periodic Bloch vectors $\set{\Psi_1(\bk), \ldots, \Psi_4(\bk)} \subset \mathcal{H} \oplus \mathcal{H}'$. Let then $\pi_1 \colon \mathcal{H} \oplus \mathcal{H}' \to \mathcal{H}$ be the projection on the first leg of the direct sum, and define
\[ \psi_a(\bk) := \pi_1 \Psi_a(\bk), \quad a \in \set{1, \ldots, 4}\,.\]
The collection $\set{\psi_1(\bk), \ldots, \psi_4(\bk), \widetilde{\psi}_2(\bk), \ldots, \widetilde{\psi}_m(\bk)}$ provides the desired Parseval frame for $P(\bk)$ containing $m+2$ Bloch vectors.
\end{proof}

\begin{remark}[On the definition of $Q_2$] \label{rmk:Q_2}
It would be tempting to put once again $Q_2(\bk):= \overline{P_2(-\bk)}$, as in the lower-dimensional cases, but this $Q_2$ has the \emph{opposite} first Chern class and the \emph{same} second Chern class with respect to the ones of $P_2$. An alternative construction of $Q_2(\bk)$ could be concocted as the eigenprojections of a lattice Dirac Hamiltonian $h(\bk) := \vec{d}(\bk) \cdot \vec{\Gamma}$, where $\vec{\Gamma} = (\Gamma^0, \ldots, \Gamma^4)$ is the vector of $4\times 4$ Dirac matrices. Ref.\ \onlinecite{qi2008topological} presents examples of such Hamiltonians which satisfy time-reversal symmetry (and hence \cite{panati2007triviality, monaco2015symmetry} have a vanishing first Chern class) but exhibit non-trivial values of the second Chern number, say equal to $-1$; any non-zero integer can be obtained by appropriately ``wrapping'' one of the directions of the 4D Brillouin torus. By adding time-reversal-symmetry breaking terms to such Hamiltonians, non-trivial values of the first Chern numbers can be obtained as well.

Alternatively, one can use the results of Ref.\ \onlinecite[Sec.~2.2]{cornean2016construction} to construct $Q_2$ as follows. It is possible to construct a rank-$2$-projection-valued map $\widetilde{P_2}(\bk)$ which enjoys the same smoothness and periodicity properties as $P_2(\bk)$, is unitarily equivalent to $P_2(\bk)$, and such that the ranges of $\widetilde{P_2}(\bk)$ lie in a \emph{$\bk$-independent} finite-dimensional space $\mathcal{H}' \simeq \C^N$; we can assume $N \ge 4$. The two families $\widetilde{P_2}(\bk)$ and $P_2(\bk)$ share the same Chern classes, as they are unitarily equivalent. On the other hand, since $\widetilde{P_2}(\bk) + [\Id_N - \widetilde{P_2}(\bk)] = \Id_N$ (where $\Id_N$ is the identity in the Hilbert space $\mathcal{H}'$), the family of projections $Q(\bk) := \Id_N - \widetilde{P_2}(\bk)$ has the opposite Chern classes, but has rank $N-2$. In view of the previous construction, we can ``squeeze'' the topology of $Q(\bk)$ in a sub-projection $Q_2(\bk)$ of rank $2$ with the same first and second Chern classes as $Q(\bk)$. The latter provides the required family of projections as in \eqref{eqn:Q_2}.
\end{remark}

\begin{acknowledgments}
D.~M.\ gratefully acknowledges the financial support from the National Group of Mathematical Physics (GNFM--INdAM) within the project Progetto Giovani GNFM 2020, as well as from ``Sapienza'' University of Rome within the projects Progetti di Ricerca Medi 2020 and 2021.
\end{acknowledgments}

\section*{Data Availability Statement}

Data sharing is not applicable to this article as no new data were created or analyzed in this study.

\appendix

\section{Properties of the $3$-degree} \label{app:3deg}

In this Appendix we collect a few key properties of the $3$-degree for $SU(m)$-valued maps, defined in \eqref{eqn:3deg}. More properties of this topological degree (including the fact that it is integer-valued) can be found in Ref.s\ \onlinecite{mukherjee2015differential, manton2004topological}.

Here and in the next Appendix we make use of the following \emph{graded cyclicity of the trace} for matrix-valued differential forms: if $\omega$ and $\eta$ are differential forms with coefficients in $m \times m$ matrices of degree $p$ and $q$ respectively, then
\begin{equation} \label{eqn:graded_cyclic}
\Tr_{\C^m} \left( \omega \wedge \eta \right) = \sum_{I \in \N^p, \: J \in \N^q} \Tr_{\C^m} \left( \omega_I \, \eta_J \right) \di \bk_I \wedge \di \bk_J = \sum_{I \in \N^p, \: J \in \N^q} \Tr_{\C^m} \left( \eta_J \, \omega_I \right) \cdot (-1)^{p\,q} \di \bk_J \wedge \di \bk_I = (-1)^{p\,q} \, \Tr_{\C^m} \left( \eta \wedge \omega \right)
\end{equation}
where, if $I = \set{i_1, \ldots, i_p} \in \N^p$, we have denoted $\di \bk_I := \di k_{i_1} \wedge \cdots \wedge \di k_{i_p}$. In particular, if $p$ is odd, it follows from the above identity that
\begin{equation} \label{eqn:power_odd}
\Tr_{\C^m}( \omega \wedge \omega) = (-1)^{p^2} \Tr_{\C^m}(\omega \wedge \omega) \quad \Longrightarrow \quad \Tr_{\C^m}( \omega \wedge \omega) = 0, \quad \text{and more in general} \quad \Tr_{\C^m}( \omega^{\wedge 2r} ) = 0 \text{ for all } r \in \N\,.
\end{equation}

\begin{proposition} \label{prop:3deg}
Let $\sigma_0(\bk), \sigma_1(\bk)$ be $m \times m$ matrices depending smoothly on $\bk = (k_2,k_3,k_4) \in \R^3$. Then
\begin{equation} \label{eqn:additive3deg}
\Tr_{\C^m} \left\{ \left[ \left( \sigma_0 \, \sigma_1 \right)^{-1} \, \di \left( \sigma_0 \, \sigma_1 \right) \right]^{\wedge 3} \right\} = \Tr_{\C^m} \left[ \left( \sigma_0^{-1} \, \di \sigma_0 \right)^{\wedge 3} \right] + \Tr_{\C^m} \left[ \left( \sigma_1^{-1} \, \di \sigma_1 \right)^{\wedge 3} \right] - 3 \, \di \left[ \Tr_{\C^m} \left( \sigma_0^{-1} \, \di \sigma_0 \wedge \di \sigma_1 \, \sigma_1^{-1}\right) \right]\,.
\end{equation}

In particular, if $\sigma_0$ and $\sigma_1$ are normalized pseudo-periodic $SU(m)$-valued maps, then:
\begin{enumerate}
 \item for all $\bG \in \Gamma^* \simeq 2 \pi \Z^3$
 \begin{equation} \label{eqn:3deg_periodic}
 \int_{\Omega + \bG} \Tr_{\C^m} \left[ \left( \sigma_0^{-1} \, \di \sigma_0 \right)^{\wedge 3} \right] = \int_\Omega \Tr_{\C^m} \left[ \left( \sigma_0^{-1} \, \di \sigma_0 \right)^{\wedge 3} \right] \equiv \int_{\T^3} \Tr_{\C^m} \left[ \left( \sigma_0^{-1} \, \di \sigma_0 \right)^{\wedge 3} \right]
 \end{equation}
 where $\Omega = [0,2\pi]^3$ is a fundamental cell for $\Gamma^*$;
 \item the $3$-degree of normalized pseudo-periodic $SU(m)$-valued maps is additive:
 \begin{equation} \label{eqn:3deg_additive}
 \tdeg(\sigma_0 \, \sigma_1) = \tdeg(\sigma_0) + \tdeg(\sigma_1), \quad \text{for } \tdeg(\sigma) := \frac{1}{24\,\pi^2} \int_{\T^3} \Tr_{\C^m} \left[ \left( \sigma^{-1} \, \di \sigma \right)^{\wedge 3} \right]\,;
 \end{equation}
  \item \label{item:3deg_homotopy} if $\sigma_0$ can be continuously deformed into $\sigma_1$ via normalized pseudo-periodic $SU(m)$-valued maps, then
 \begin{equation} \label{eqn:3deg_homotopy}
 \tdeg(\sigma_0) = \tdeg(\sigma_1)\,,
 \end{equation}
 that is, the $3$-degree is an homotopy invariant within this class of maps and corresponding deformations.
\end{enumerate}
\end{proposition}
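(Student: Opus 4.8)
The plan is to first prove the purely local algebraic identity \eqref{eqn:additive3deg} by manipulating Maurer--Cartan forms, and then to deduce the three numbered consequences by integrating \eqref{eqn:additive3deg} (or a one-parameter variant of it) over the fundamental cell $\Omega=[0,2\pi]^3$ and invoking Stokes' theorem, each time exploiting the normalization $\sigma|_{\partial\Omega}\equiv\Id_m$ to discard the boundary term. For \eqref{eqn:additive3deg} itself, write $\omega_i := \sigma_i^{-1}\,\di\sigma_i$ ($i=0,1$) and note $(\sigma_0\sigma_1)^{-1}\di(\sigma_0\sigma_1)=\eta+\omega_1$ with $\eta:=\sigma_1^{-1}\,\omega_0\,\sigma_1$. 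Expanding the third wedge power of this $1$-form: besides $\eta^{\wedge 3}$ (whose trace is $\Tr_{\C^m}(\omega_0^{\wedge 3})$, the conjugating $0$-forms cancelling cyclically) and $\omega_1^{\wedge 3}$, the three terms with two $\eta$'s and the three with one $\eta$ collapse, by the graded cyclicity \eqref{eqn:graded_cyclic} (moving a $1$-form past a $2$-form is sign-free), to $3\,\Tr_{\C^m}(\eta^{\wedge 2}\wedge\omega_1)$ and $3\,\Tr_{\C^m}(\eta\wedge\omega_1^{\wedge 2})$ respectively. It remains to recognise that the sum of these equals $-3\,\di\,\Tr_{\C^m}(\eta\wedge\omega_1)$ and that $\Tr_{\C^m}(\sigma_0^{-1}\di\sigma_0\wedge\di\sigma_1\,\sigma_1^{-1})=\Tr_{\C^m}(\eta\wedge\omega_1)$ (again cyclicity, via $\di\sigma_1\,\sigma_1^{-1}=\sigma_1\,\omega_1\,\sigma_1^{-1}$); the first of these uses $\di\omega_1=-\omega_1^{\wedge 2}$ together with the twisted Maurer--Cartan identity $\di\eta=-\eta^{\wedge 2}-\eta\wedge\omega_1-\omega_1\wedge\eta$ (obtained by imposing Maurer--Cartan on $\eta+\omega_1=(\sigma_0\sigma_1)^{-1}\di(\sigma_0\sigma_1)$ and subtracting $\di\omega_1$) inside the expansion $\di\,\Tr_{\C^m}(\eta\wedge\omega_1)=\Tr_{\C^m}(\di\eta\wedge\omega_1)-\Tr_{\C^m}(\eta\wedge\di\omega_1)$. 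No part of this uses the $SU(m)$ constraint or pseudo-periodicity.

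\textbf{Cell-independence and additivity.} For \eqref{eqn:3deg_periodic}, pseudo-periodicity gives $\sigma_0(\bk+\bG)=G(\bk)^{-1}\sigma_0(\bk)G(\bk)$, with $G$ a product of powers of the commuting diagonal matrices $\alpha\sub{2D}$, $\alpha\sub{3D}$; applying \eqref{eqn:additive3deg} to this factorization, and using that $G$ is diagonal with only a single non-constant phase entry so that $\Tr_{\C^m}[(G^{-1}\di G)^{\wedge 3}]\equiv 0$ and $(\di G\,G^{-1})^{\wedge 2}\equiv 0$ for dimensional reasons, one sees that $\Tr_{\C^m}[(\sigma_0^{-1}\di\sigma_0)^{\wedge 3}]$ differs from its $\bG$-translate by an exact form $\di\beta$ whose primitive $\beta$ vanishes on $\partial\Omega$ --- this uses $\di\sigma_0\equiv 0$ there (normalization) together with \eqref{eqn:power_odd} --- so that $\int_\Omega\di\beta=\int_{\partial\Omega}\beta=0$. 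With cell-independence in hand, additivity \eqref{eqn:3deg_additive} is immediate: $\sigma_0\sigma_1$ is again a normalized pseudo-periodic $SU(m)$-valued map, so integrating \eqref{eqn:additive3deg} over $\Omega$, applying Stokes to the last (exact) term, and using $\sigma_0|_{\partial\Omega}=\sigma_1|_{\partial\Omega}=\Id_m$ discards it; dividing by $24\pi^2$ finishes.

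\textbf{Homotopy invariance.} Given a deformation $\sigma_s$, $s\in[0,1]$, within the class --- smooth jointly in $(\bk,s)$, which is legitimate by the smoothing remark of the main text --- set $\omega_s:=\sigma_s^{-1}\di\sigma_s$ and $X_s:=\sigma_s^{-1}\partial_s\sigma_s$, so that $\partial_s\omega_s=\di X_s+[\omega_s,X_s]$. Since $\partial_s\Tr_{\C^m}(\omega_s^{\wedge 3})=3\,\Tr_{\C^m}(\partial_s\omega_s\wedge\omega_s^{\wedge 2})$ (the three cyclic placements of $\partial_s\omega_s$ agreeing by \eqref{eqn:graded_cyclic}), the commutator contribution $3\,\Tr_{\C^m}([\omega_s,X_s]\wedge\omega_s^{\wedge 2})$ vanishes by cyclicity, and using $\di(\omega_s^{\wedge 2})=0$ (Maurer--Cartan) the remaining piece is $3\,\di\,\Tr_{\C^m}(X_s\,\omega_s^{\wedge 2})$. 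Integrating over $\Omega$ and applying Stokes, the boundary term vanishes because $\omega_s\equiv 0$ on $\partial\Omega$ for every $s$; hence $\partial_s\tdeg(\sigma_s)=0$.

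\textbf{Expected main obstacle.} No step is deep, but the whole argument is an exercise in disciplined Koszul-sign bookkeeping for traces of wedge products of odd-degree matrix-valued forms: the coincidence of the three ``mixed'' terms in the cube and the vanishing of the commutator term in the variational step both hinge on the sign $(-1)^{1\cdot 2}=+1$ in \eqref{eqn:graded_cyclic}, and one must be scrupulous with the twisted Maurer--Cartan identity for $\eta$. The other delicate point is that pseudo-periodicity conjugates only by \emph{diagonal} unitaries, so the conjugating factors mutually commute and --- together with $\sigma|_{\partial\Omega}\equiv\Id_m$ --- force every boundary contribution from Stokes' theorem to vanish; one should also keep checking that $\sigma_0\sigma_1$ and all intermediate deformations genuinely remain normalized, pseudo-periodic, and $SU(m)$-valued, so that $\tdeg$ is defined for them.
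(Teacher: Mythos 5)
Your proof is correct, and the overall strategy coincides with the paper's: establish the algebraic identity \eqref{eqn:additive3deg} by manipulating matrix-valued forms using Leibniz and graded cyclicity, then integrate over $\Omega$ and use Stokes together with the normalization $\sigma|_{\partial\Omega}\equiv\Id_m$ to kill boundary terms. The mechanics differ in bookkeeping rather than in substance, and in a couple of places your organization is tighter. For \eqref{eqn:additive3deg} the paper simply asserts ``a long but straightforward computation'' and records intermediate Leibniz reshufflings, whereas your decomposition $(\sigma_0\sigma_1)^{-1}\di(\sigma_0\sigma_1)=\eta+\omega_1$ with $\eta=\sigma_1^{-1}\omega_0\sigma_1$, the collapse of the mixed cubes via graded cyclicity, and the twisted Maurer--Cartan identity $\di\eta=-\eta^{\wedge2}-\eta\wedge\omega_1-\omega_1\wedge\eta$ make the origin of the exact term manifest --- I checked the signs and they come out right. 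For \eqref{eqn:3deg_periodic} the paper first derives an explicit conjugation formula \eqref{eqn:3deg_conjugation} from two applications of \eqref{eqn:additive3deg} and then restricts it to the boundary; you summarize that derivation rather than carrying it out, asserting that the discrepancy $\iota^*\Tr[(\sigma_0^{-1}\di\sigma_0)^{\wedge3}]-\Tr[(\sigma_0^{-1}\di\sigma_0)^{\wedge3}]$ is $\di\beta$ with $\beta|_{\partial\Omega}=0$. The essence is right and your cited ingredients ($\di\sigma_0|_{\partial\Omega}=0$, the diagonal one-entry form of $G$, \eqref{eqn:power_odd}) are exactly what one needs once $\beta$ is written out, but a reader would still have to expand the two applications of \eqref{eqn:additive3deg} to see that $\beta$ has no surviving piece on $\partial\Omega$ --- so it is worth writing $\beta$ explicitly, as the paper does. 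For homotopy invariance, your introduction of $X_s=\sigma_s^{-1}\partial_s\sigma_s$ and the formula $\partial_s\omega_s=\di X_s+[\omega_s,X_s]$, together with $\di(\omega_s^{\wedge2})=0$, is a cleaner version of the paper's direct manipulation (the paper ends up with the right-invariant objects $\partial_s\sigma_s\,\sigma_s^{-1}$ and $\di\sigma_s\,\sigma_s^{-1}$, but it amounts to the same computation). One small remark on hygiene: in the Stokes step it is the \emph{pullback} of $\omega_s$ to $\partial\Omega$ that vanishes (only tangential derivatives are constrained by $\sigma_s|_{\partial\Omega}\equiv\Id_m$), which suffices for the boundary integral to vanish; phrasing it as ``$\omega_s\equiv0$ on $\partial\Omega$'' is slightly loose but harmless.
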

\begin{proof}
A long but straightforward computation, using the Leibniz rule for the exterior differential and the above graded cyclicity of the trace, shows that
\begin{align*}
\Tr_{\C^m} \left\{ \left[ \left( \sigma_0 \, \sigma_1 \right)^{-1} \, \di \left( \sigma_0 \, \sigma_1 \right) \right]^{\wedge 3} \right\} & = \Tr_{\C^m} \left[ \left( \sigma_0^{-1} \, \di \sigma_0 \right)^{\wedge 3} \right] + \Tr_{\C^m} \left[ \left( \sigma_1^{-1} \, \di \sigma_1 \right)^{\wedge 3} \right] \\
& \quad - 3 \, \Tr_{\C^m} \left( \di \sigma_0^{-1} \wedge \di \sigma_0 \wedge \di \sigma_1 \, \sigma_1^{-1} \right) - 3 \, \Tr_{\C^m} \left( \sigma_0^{-1} \di \sigma_0 \wedge \di \sigma_1 \wedge \di \sigma_1^{-1} \right).
\end{align*}
The identity \eqref{eqn:additive3deg} now follows upon observing that
\[ \di \sigma_0^{-1} \wedge \di \sigma_0 \wedge \di \sigma_1 \, \sigma_1^{-1} = \di \left( \sigma_0^{-1} \, \di \sigma_0 \wedge \di \sigma_1 \right) \, \sigma_1^{-1} = \di \left( \sigma_0^{-1} \, \di \sigma_0 \wedge \di \sigma_1 \, \sigma_1^{-1} \right) - \sigma_0^{-1} \, \di \sigma_0 \wedge \di \sigma_1 \wedge \di \sigma_1^{-1}. \]

From \eqref{eqn:additive3deg} it can now be deduced that
\begin{equation} \label{eqn:long}
\begin{aligned}
\Tr_{\C^m} \left\{ \left[ \left( \sigma_0^{-1} \, \sigma_1 \, \sigma_0 \right)^{-1} \, \di \left( \sigma_0^{-1} \, \sigma_1 \, \sigma_0 \right) \right]^{\wedge 3} \right\} & = \Tr_{\C^m} \left[ \left( \sigma_0 \, \di \sigma_0^{-1} \right)^{\wedge 3} \right] + \Tr_{\C^m} \left[ \left( \sigma_1^{-1} \, \di \sigma_1 \right)^{\wedge 3} \right] + \Tr_{\C^m} \left[ \left( \sigma_0^{-1} \, \di \sigma_0 \right)^{\wedge 3} \right] \\
& \quad - 3 \, \di \Tr_{\C^m} \left( \sigma_0 \, \di \sigma_0^{-1} \wedge \di \sigma_1 \, \sigma_1^{-1} \right) - 3 \, \di \Tr_{\C^m} \left( \sigma_1^{-1} \di \sigma_1 \wedge \di \sigma_0 \, \sigma_0^{-1} \right) \\
& \quad - 3 \, \di \Tr_{\C^m} \left( \sigma_1^{-1} \, \sigma_0 \, \di \sigma_0^{-1} \, \sigma_1  \wedge \di \sigma_0 \, \sigma_0^{-1} \right)\,.
\end{aligned}
\end{equation}
Observe now that trivially
\[ \sigma \, \sigma^{-1} \equiv \Id \quad \Longrightarrow \quad \di \sigma \, \sigma^{-1} + \sigma \, \di \sigma^{-1} = 0, \]
and therefore the following identities hold:
\begin{align*}
\Tr_{\C^m} \left[ \left( \sigma_0 \, \di \sigma_0^{-1} \right)^{\wedge 3} \right] & = \Tr_{\C^m} \left[ \left( - \di \sigma_0 \, \sigma_0^{-1} \right)^{\wedge 3} \right] = - \Tr_{\C^m} \left[ \left( \sigma_0^{-1} \, \di \sigma_0 \right)^{\wedge 3} \right], \\
\Tr_{\C^m} \left( \sigma_0 \, \di \sigma_0^{-1} \wedge \di \sigma_1 \, \sigma_1^{-1} \right) & = \Tr_{\C^m} \left[ \left(- \di \sigma_0 \, \sigma_0^{-1} \right) \wedge \left( \di \sigma_1 \, \sigma_1^{-1} \right) \right] = \Tr_{\C^m} \left( \di \sigma_1 \, \sigma_1^{-1} \wedge \di \sigma_0 \, \sigma_0^{-1} \right)\,.
\end{align*}
Therefore, \eqref{eqn:long} can be simplified to
\begin{equation} \label{eqn:3deg_conjugation}
\Tr_{\C^m} \left\{ \left[ \left( \sigma_0^{-1} \, \sigma_1 \, \sigma_0 \right)^{-1} \, \di \left( \sigma_0^{-1} \, \sigma_1 \, \sigma_0 \right) \right]^{\wedge 3} \right\} = \Tr_{\C^m} \left[ \left( \sigma_1^{-1} \, \di \sigma_1 \right)^{\wedge 3} \right] - 3 \, \di \Tr_{\C^m} \left[ \left( \di \sigma_1 \, \sigma_1^{-1} + \sigma_1^{-1} \, \di \sigma_1 - \sigma_1^{-1} \, \di \sigma_0 \, \sigma_0^{-1} \, \sigma_1 \right) \wedge \di \sigma_0 \, \sigma_0^{-1} \right]\,.
\end{equation}

Let now $\sigma(\bk) \in SU(m)$ be pseudo-periodic. Let $\iota \colon \R^3 \to \R^3$ be the map that shifts any of the three coordinates by $2\pi$, say $\iota(k_2,k_3,k_4) := (k_2+2\pi,k_3,k_4)$ for example. Then pseudo-periodicity of $\sigma$ means that
\[ \iota^* \, \sigma = \alpha_*^{-1} \, \sigma \, \alpha_*, \]
where $\iota^* \, \sigma := \sigma \circ \iota$ and $\alpha_* \in \set{\Id_m, \alpha\sub{2D}, \alpha\sub{3D}}$ is selected depending on which coordinate is shifted (compare Lemma \ref{lemma:4Dalpha}). According to \eqref{eqn:3deg_conjugation}, we have corrispondingly that
\[ \iota^* \Tr_{\C^m} \left[ \left( \sigma^{-1} \, \di \sigma \right)^{\wedge 3} \right] = \Tr_{\C^m} \left[ \left( \sigma^{-1} \, \di \sigma \right)^{\wedge 3} \right] - 3 \, \di \Tr_{\C^m} \left[ \left( \di \sigma \, \sigma^{-1} + \sigma^{-1} \, \di \sigma - \sigma^{-1} \, \di \alpha_* \, \alpha_*^{-1} \, \sigma \right) \wedge \di \alpha_* \, \alpha_*^{-1} \right]\,. \]
Integrate now both sides on $\Omega = [0,2\pi]^3$ and apply Stokes' theorem, to deduce that
\begin{align*}
\int_{\iota(\Omega)} \Tr_{\C^m} \left[ \left( \sigma^{-1} \, \di \sigma \right)^{\wedge 3} \right] & = \int_{\Omega} \iota^* \Tr_{\C^m} \left[ \left( \sigma^{-1} \, \di \sigma \right)^{\wedge 3} \right] \\
& = \int_{\Omega} \Tr_{\C^m} \left[ \left( \sigma \, \di \sigma^{-1} \right)^{\wedge 3} \right] - 3 \, \int_{\partial \Omega} \Tr_{\C^m} \left[ \left( \di \sigma \, \sigma^{-1} + \sigma^{-1} \, \di \sigma - \sigma^{-1} \, \di \alpha_* \, \alpha_*^{-1} \, \sigma \right) \wedge \di \alpha_* \, \alpha_*^{-1} \right]\,.
\end{align*}

Focus now on the boundary term on the right-hand side of the above identity. If $\sigma$ further satisfies the normalization \eqref{eqn:boundary_sigma}, then $\sigma \equiv \Id_m$ on $\partial \Omega$ and correspondingly $\di \sigma \equiv 0$ there. Moreover, observe that $\alpha_*$ is of the form
\[ \alpha_* = \begin{pmatrix} \eu^{\iu \, \vec{n} \cdot \vec{k}} & 0 \\ 0 & \Id_{m-1} \end{pmatrix} \quad \Longrightarrow \quad \di \alpha_* \, \alpha_*^{-1} = \iu \, \vec{n} \cdot \di \vec{k} \, \begin{pmatrix} 1 & 0 \\ 0 & 0_{m-1} \end{pmatrix}, \]
where $\vec{n} = (n, m) \in \Z^2$ is a vector with integer entries (equal to $0$ or to an appropriate first Chern number of the underlying family of projections) and $\vec{k}$ is a vector of two $k$-coordinates for some face on $\partial \Omega$. It can then be computed that 
\[ \di \alpha_* \, \alpha_*^{-1} \, \sigma \wedge \di \alpha_* \, \alpha_*^{-1} = - (\vec{n} \cdot \di \vec{k})^{\wedge 2} \begin{pmatrix} \sigma_{11} & 0 \\ 0 & 0_{m-1} \end{pmatrix} = 0_m \]
in view of the skew-symmetry of the wedge product on $1$-forms, which implies $(\vec{n} \cdot \di \vec{k})^{\wedge 2} = 0$. In conclusion
\[ \int_{\iota(\Omega)} \Tr_{\C^m} \left[ \left( \sigma^{-1} \, \di \sigma \right)^{\wedge 3} \right] = \int_{\Omega} \Tr_{\C^m} \left[ \left( \sigma^{-1} \, \di \sigma \right)^{\wedge 3} \right] \]
from which \eqref{eqn:3deg_periodic} immediately follows. A similar argument, combining \eqref{eqn:additive3deg} and Stokes' theorem, also yields \eqref{eqn:3deg_additive}.

It remains to prove \eqref{eqn:3deg_homotopy}. Let therefore $\sigma_s$, $s \in [0,1]$, be a normalized pseudo-periodic homotopy between $\sigma_0$ and $\sigma_1$; without loss of generality, we assume that it depends smoothly on $s$ as well. We compute the derivative with respect to $s$ of the form $(\sigma_s^{-1} \, \di \sigma_s)^{\wedge 3}$: the derivative can ``hit'' any of the three factors in the wedge product, but up to graded cyclicity the three corresponding summands will yield the same result. Therefore
\begin{align*}
\partial_s \, \Tr_{\C^m} \left[ \left( \sigma_s^{-1} \, \di \sigma_s \right)^{\wedge 3} \right] & = 3 \, \Tr_{\C^m} \left[ \partial_s \left( \sigma_s^{-1} \, \di \sigma_s \right) \wedge \left( \sigma_s^{-1} \, \di \sigma_s \right)^{\wedge 2} \right] = 3 \, \Tr_{\C^m} \left[ \left(\partial_s \, \sigma_s^{-1} \, \di \sigma_s + \sigma_s^{-1} \, \di \partial_s \sigma_s \right) \wedge \left( \sigma_s^{-1} \, \di \sigma_s \right)^{\wedge 2} \right] \\
& = 3 \, \Tr_{\C^m} \left[ \left(-\sigma_s^{-1} \, \partial_s \, \sigma_s \, \sigma_s^{-1} \, \di \sigma_s + \sigma_s^{-1} \, \di \partial_s \sigma_s \right) \wedge \left( \sigma_s^{-1} \, \di \sigma_s \right)^{\wedge 2} \right] \\
& = 3 \, \Tr_{\C^m} \left[ \left(- \partial_s \, \sigma_s \, \sigma_s^{-1} \, \di \sigma_s \, \sigma_s^{-1} + \di \partial_s \sigma_s \, \sigma_s^{-1} \right) \wedge \left( \di \sigma_s \, \sigma_s^{-1} \right)^{\wedge 2} \right] \\
& = 3 \, \Tr_{\C^m} \left[ \di \left(\partial_s \, \sigma_s \, \sigma_s^{-1} \right) \wedge \left( \di \sigma_s \, \sigma_s^{-1} \right)^{\wedge 2} \right] \\
& = 3 \, \di \, \Tr_{\C^m} \left[ \partial_s \, \sigma_s \, \sigma_s^{-1} \, \left( \di \sigma_s \, \sigma_s^{-1} \right)^{\wedge 2} \right] - 3 \, \Tr_{\C^m} \left[ \partial_s \, \sigma_s \, \sigma_s^{-1} \, \di \left( \di \sigma_s \, \sigma_s^{-1} \right)^{\wedge 2} \right]
\end{align*}
where we used the Leibniz rule for the exterior differential in the last equality. Let us now observe that
\begin{align*}
\di \left( \di \sigma_s \, \sigma_s^{-1} \right)^{\wedge 2} & = \di \left( \di \sigma_s \, \sigma_s^{-1} \right) \wedge \di \sigma_s \, \sigma_s^{-1} - \di \sigma_s \, \sigma_s^{-1} \wedge \di \left( \di \sigma_s \, \sigma_s^{-1} \right) = - \di \sigma_s \wedge \di \sigma_s^{-1} \wedge \di \sigma_s \, \sigma_s^{-1} + \di \sigma_s \, \sigma_s^{-1} \wedge \di \sigma_s \wedge \di \sigma_s^{-1} \\
& = \left(\di \sigma_s \, \sigma_s^{-1} \right)^{\wedge 3} - \left(\di \sigma_s \, \sigma_s^{-1} \right)^{\wedge 3} = 0
\end{align*}
and therefore by Stokes' theorem
\[ \partial_s \, \int_{\Omega} \Tr_{\C^m} \left[ \left( \sigma_s^{-1} \, \di \sigma_s \right)^{\wedge 3} \right] = 3 \, \int_{\partial \Omega} \Tr_{\C^m} \left[ \partial_s \, \sigma_s \, \sigma_s^{-1} \, \left( \di \sigma_s \, \sigma_s^{-1} \right)^{\wedge 2} \right]. \]
Since we assumed that $\sigma_s \equiv \Id_m$ on $\partial \Omega$ for all $s \in [0,1]$, we have that $\di \sigma_s = 0$ on the boundary of $\Omega$, and the conclusion follows.
\end{proof}

\begin{corollary} \label{cor:3deg(sigma)=3deg(eta)}
Let $\sigma(\bk) \in SU(m)$ and $\eta(\bk) \in SU(2)$ be as in the statement of Theorem \ref{thm:SU(m)->SU(2)}, and let also $\alpha(\bk) = \delta\sub{4D}(\bk) \, \sigma(\bk) \in U(m)$ with $\delta\sub{4D}$ as in \eqref{eqn:delta4D}. Then
\[ \frac{1}{24\,\pi^2} \int_{\T^3} \Tr_{\C^m} \left[ \left( \alpha^{-1} \, \di \alpha \right)^{\wedge 3} \right] = \frac{1}{24\,\pi^2} \int_{\T^3} \Tr_{\C^m} \left[ \left( \sigma^{-1} \, \di \sigma \right)^{\wedge 3} \right] = \frac{1}{24\,\pi^2} \int_{\T^3} \Tr_{\C^2} \left[ \left( \eta^{-1} \, \di \eta \right)^{\wedge 3} \right] \: \in \Z \, . \]
\end{corollary}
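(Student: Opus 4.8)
The plan is to prove the two equalities separately. The first, $\int_{\T^3}\Tr_{\C^m}[(\alpha^{-1}\,\di\alpha)^{\wedge 3}] = \int_{\T^3}\Tr_{\C^m}[(\sigma^{-1}\,\di\sigma)^{\wedge 3}]$, will follow from the additivity identity \eqref{eqn:additive3deg}; the second, $\int_{\T^3}\Tr_{\C^m}[(\sigma^{-1}\,\di\sigma)^{\wedge 3}] = \int_{\T^3}\Tr_{\C^2}[(\eta^{-1}\,\di\eta)^{\wedge 3}]$, will follow from the homotopy invariance of the $3$-degree (Proposition \ref{prop:3deg}.\ref{item:3deg_homotopy}) together with Theorem \ref{thm:SU(m)->SU(2)}. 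Integrality of the common value is then inherited from the $SU(2)$ case recorded in Remark \ref{rmk:3deg_topo} and at the start of Appendix \ref{app:3deg}.

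For the first equality, apply \eqref{eqn:additive3deg} with $\sigma_0 = \delta\sub{4D}$ and $\sigma_1 = \sigma$, so that $\alpha = \delta\sub{4D}\,\sigma$ and
\[ \Tr_{\C^m}\big[(\alpha^{-1}\,\di\alpha)^{\wedge 3}\big] = \Tr_{\C^m}\big[(\delta\sub{4D}^{-1}\,\di\delta\sub{4D})^{\wedge 3}\big] + \Tr_{\C^m}\big[(\sigma^{-1}\,\di\sigma)^{\wedge 3}\big] - 3\,\di\Big[\Tr_{\C^m}\big(\delta\sub{4D}^{-1}\,\di\delta\sub{4D}\wedge\di\sigma\,\sigma^{-1}\big)\Big]. \]
The first term on the right vanishes pointwise: from \eqref{eqn:delta4D} one reads off $\delta\sub{4D}^{-1}\,\di\delta\sub{4D} = \iu\,(n_2\,\di k_2 + n_3\,\di k_3 + n_4\,\di k_4)\,\mathrm{diag}(1,0,\ldots,0)$, a scalar $1$-form times a constant matrix, so that its wedge square (and hence its wedge cube) is zero because a scalar $1$-form squares to zero. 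The last term is exact; integrating over $\Omega = [0,2\pi]^3$ and applying Stokes' theorem reduces it to an integral over $\partial\Omega$, where $\sigma\equiv\Id_m$ by the normalization \eqref{eqn:boundary_sigma}, so that $\di\sigma$ pulled back to each face of $\partial\Omega$ vanishes and the integrand is identically zero there. This gives $\int_\Omega\Tr_{\C^m}[(\alpha^{-1}\,\di\alpha)^{\wedge 3}] = \int_\Omega\Tr_{\C^m}[(\sigma^{-1}\,\di\sigma)^{\wedge 3}]$; the right-hand side is independent of the fundamental cell by \eqref{eqn:3deg_periodic}, which in particular legitimizes the $\int_{\T^3}$ notation for $\alpha$ used in Remark \ref{rmk:3deg_topo}.

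For the second equality, Theorem \ref{thm:SU(m)->SU(2)} supplies a homotopy, through normalized pseudo-periodic $SU(m)$-valued maps, between $\sigma$ and $\widetilde\sigma := \mathrm{diag}(\eta,\Id_{m-2})$. By Proposition \ref{prop:3deg}.\ref{item:3deg_homotopy} the $3$-degree is invariant along such a homotopy, hence $\tdeg(\sigma) = \tdeg(\widetilde\sigma)$. Finally, the block-diagonal structure gives $\widetilde\sigma^{-1}\,\di\widetilde\sigma = \mathrm{diag}(\eta^{-1}\,\di\eta,\,0_{m-2})$, so that $(\widetilde\sigma^{-1}\,\di\widetilde\sigma)^{\wedge 3} = \mathrm{diag}\big((\eta^{-1}\,\di\eta)^{\wedge 3},\,0_{m-2}\big)$ and $\Tr_{\C^m}[(\widetilde\sigma^{-1}\,\di\widetilde\sigma)^{\wedge 3}] = \Tr_{\C^2}[(\eta^{-1}\,\di\eta)^{\wedge 3}]$; integrating over $\T^3$ closes the chain of equalities. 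Integrality of the common value is then exactly the integrality of the $3$-degree of an $SU(2)$-valued map, equivalently of a map $S^3\to S^3$ (recall Remark \ref{rmk:S3}).

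I expect no real difficulty here; the only point requiring care rather than ingenuity is the bookkeeping around the lack of genuine periodicity of $\alpha$ and $\sigma$. One must check (i) that the exact-form remainder in \eqref{eqn:additive3deg} really integrates to zero, which relies on the normalization $\sigma|_{\partial\Omega}\equiv\Id_m$ and not on periodicity, and (ii) that the homotopy provided by Theorem \ref{thm:SU(m)->SU(2)} stays inside the class of normalized pseudo-periodic $SU(m)$-valued maps, which is precisely the class for which Proposition \ref{prop:3deg}.\ref{item:3deg_homotopy} guarantees homotopy invariance of the $3$-degree. Everything else is a direct consequence of results already established.
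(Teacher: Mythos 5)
Your proposal is correct and follows essentially the same route as the paper: apply \eqref{eqn:additive3deg} with $\sigma_0=\delta\sub{4D}$, $\sigma_1=\sigma$, note that $\Tr_{\C^m}\big[(\delta\sub{4D}^{-1}\,\di\delta\sub{4D})^{\wedge 3}\big]$ vanishes pointwise, then invoke Theorem \ref{thm:SU(m)->SU(2)} and the homotopy invariance of the $3$-degree from Proposition \ref{prop:3deg}.\ref{item:3deg_homotopy} to reduce to the $SU(2)$ block. Your version is actually a touch more careful than the paper's in spelling out, via Stokes' theorem and the normalization $\sigma|_{\partial\Omega}\equiv\Id_m$, why the exact remainder in \eqref{eqn:additive3deg} integrates to zero, and in explaining where the integrality of the common value comes from.
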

\begin{proof}
The relation \eqref{eqn:additive3deg} implies that
\[ \int_{\T^3} \Tr_{\C^m} \left[ \left( \alpha^{-1} \, \di \alpha \right)^{\wedge 3} \right] = \int_{\T^3} \Tr_{\C^m} \left[ \left( \delta\sub{4D}^{-1} \, \di \delta\sub{4D} \right)^{\wedge 3} \right] + \int_{\T^3} \Tr_{\C^m} \left[ \left( \sigma^{-1} \, \di \sigma \right)^{\wedge 3} \right]\,. \]
It can be easily verified that, for $\delta\sub{4D}$ as in \eqref{eqn:delta4D}, one has
\[ \Tr_{\C^m} \left[ \left( \delta\sub{4D}^{-1} \, \di \delta\sub{4D} \right)^{\wedge 3} \right] = - \iu \, \left( n_2 \, \di k_2 + n_3 \, \di k_3 + n_4 \, \di k_4 \right)^{\wedge 3} = 0 \]
in view of the skew-symmetry of the wedge product. This implies the first equality in the statement.

As for the second claim, Proposition \ref{prop:3deg}.\ref{item:3deg_homotopy} implies that $\tdeg(\sigma) = \tdeg(\widetilde{\sigma})$ where $\sigma, \widetilde{\sigma} \in SU(m)$ are in \eqref{eqn:sigma_eta}. It is clear that the constant diagonal block $\Id_{m-2}$ in the expression for $\widetilde{\sigma}(\bk)$ does not contribute to the integral that defines the $3$-degree, and therefore the trace is reduced to the $2 \times 2$ block which contains $\eta(\bk)$.
\end{proof}

\section{Chern forms and their primitives} \label{app:CS}

This Appendix is devoted to list some of the properties of the Chern forms, which are derived from the definition of the Berry curvature \eqref{eqn:Berry_curv} and of the Berry connection \eqref{eqn:BerryConnection} \cite{milnor1974characteristic, manton2004topological, nash2011topology, monaco2017chern}. We let $P(\bk)$, $\bk \in \T^d$, be a family of projections labeled by a $d$-dimensional torus, with coordinates $(k_1, \ldots, k_d)$. In the following we will denote $\partial_{\mu} \equiv \partial_{k_{\mu}}$ for $\mu \in \set{1,\ldots,d}$. We also use Einstein's convention that repeated indices should be summed over: Latin indices $a,b,c,\ldots$ range from $1$ to $m$ (the rank of the projection); Greek indices $\mu, \nu,\ldots,$ range from $1$ to $d$ (the dimension of the torus). In order not to overburden the notation, we also abbreviate $\Tr \equiv \Tr_{\C^m}$.

\begin{lemma}
Let $F_{\mu \nu}(\bk)$ be as in \eqref{eqn:Berry_curv}, and let $\set{\phi_a(\bk)}_{1 \le a \le m}$ be an orthonormal basis of Bloch functions for $P(\bk)$, $\bk \in \T^d$. Then
\begin{equation} \label{eqn:Fmunu}
F_{\mu \nu}(\bk)_{ab} = \frac{1}{2\pi\iu} \left[ \left\langle \partial_\mu \phi_a(\bk), \, \partial_\nu \phi_b(\bk) \right\rangle + \left\langle \phi_a(\bk), \, \partial_\mu \phi_c(\bk) \right\rangle \left\langle \phi_c(\bk), \, \partial_\nu \phi_b(\bk) \right\rangle - \big( \mu \leftrightarrow \nu \big) \right]\,.
\end{equation}
\end{lemma}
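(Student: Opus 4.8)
The plan is to expand everything in the given orthonormal frame $\{\phi_a(\bk)\}_{1\le a\le m}$ and to use only two elementary facts: the idempotency $P^2=P$ (equivalently $P\phi_a=\phi_a$) together with its derivative, and the orthonormality $\langle\phi_a,\phi_c\rangle=\delta_{ac}$ together with its derivative.

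First I would observe that the outer projections in the definition of $F$ are harmless on the frame: since $P\phi_a=\phi_a$ and $P=P^*$, one has $F_{\mu\nu}(\bk)_{ab}=\langle\phi_a,F_{\mu\nu}(\bk)\phi_b\rangle=\frac{1}{2\pi\iu}\langle\phi_a,[\partial_\mu P,\partial_\nu P]\phi_b\rangle$, so it suffices to compute $\langle\phi_a,(\partial_\mu P)(\partial_\nu P)\phi_b\rangle$ and then antisymmetrize in $\mu\leftrightarrow\nu$. Differentiating $P\phi_b=\phi_b$ gives $(\partial_\nu P)\phi_b=\partial_\nu\phi_b-P\,\partial_\nu\phi_b$, and likewise $(\partial_\mu P)\phi_a=\partial_\mu\phi_a-P\,\partial_\mu\phi_a$.

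Next, using that $\partial_\mu P$ is self-adjoint, I would write $\langle\phi_a,(\partial_\mu P)(\partial_\nu P)\phi_b\rangle=\langle(\partial_\mu P)\phi_a,(\partial_\nu P)\phi_b\rangle$, substitute the two expressions above, and expand into four terms. Two of them combine to zero, because $\langle P\partial_\mu\phi_a,P\partial_\nu\phi_b\rangle=\langle\partial_\mu\phi_a,P\partial_\nu\phi_b\rangle$ by $P^2=P$; what survives is $\langle\partial_\mu\phi_a,\partial_\nu\phi_b\rangle-\langle P\partial_\mu\phi_a,\partial_\nu\phi_b\rangle$. Inserting $P=\sum_c\ket{\phi_c}\bra{\phi_c}$ in the second term and using $\langle\partial_\mu\phi_a,\phi_c\rangle=-\langle\phi_a,\partial_\mu\phi_c\rangle$ (from differentiating $\langle\phi_a,\phi_c\rangle=\delta_{ac}$) converts $-\langle P\partial_\mu\phi_a,\partial_\nu\phi_b\rangle$ into $\langle\phi_a,\partial_\mu\phi_c\rangle\langle\phi_c,\partial_\nu\phi_b\rangle$ (summed over $c$). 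Thus $\langle\phi_a,(\partial_\mu P)(\partial_\nu P)\phi_b\rangle=\langle\partial_\mu\phi_a,\partial_\nu\phi_b\rangle+\langle\phi_a,\partial_\mu\phi_c\rangle\langle\phi_c,\partial_\nu\phi_b\rangle$; subtracting the $\mu\leftrightarrow\nu$ counterpart and dividing by $2\pi\iu$ yields exactly \eqref{eqn:Fmunu}.

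There is no genuine obstacle in this lemma; it is a short bookkeeping computation. The only point that requires a little care is tracking precisely which two of the four terms in the expansion cancel (this uses $P^2=P$, not orthonormality) and keeping the sign in the orthonormality relation straight, so that the surviving double-overlap term appears in the displayed form $\langle\phi_a,\partial_\mu\phi_c\rangle\langle\phi_c,\partial_\nu\phi_b\rangle$ rather than its conjugate or with the wrong sign.
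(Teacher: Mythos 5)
Your proof is correct, and it organizes the bookkeeping a little differently from the paper's. The paper expands $\partial_\mu P$ and $\partial_\nu P$ directly into rank-one sums via Leibniz on $P = \sum_c \ket{\phi_c}\bra{\phi_c}$, producing four terms when sandwiched between $\bra{\phi_a}$ and $\ket{\phi_b}$, and then two of those (the ones where $\partial_\mu$ and $\partial_\nu$ fall on opposite sides of the same bra--ket) are seen to cancel because $\partial_\nu\langle\phi_c,\phi_b\rangle = 0$; orthonormality is thus invoked for the cancellation. You instead work at the operator level first: you write $(\partial_\nu P)\phi_b = (1-P)\,\partial_\nu\phi_b$ from differentiating $P\phi_b = \phi_b$, so the inner product $\langle(1-P)\partial_\mu\phi_a,(1-P)\partial_\nu\phi_b\rangle$ has its cancellation come from $P^*P = P$ (i.e.\ idempotency plus self-adjointness), and orthonormality is only invoked afterwards, when you resolve the single surviving $P$ as $\sum_c\ket{\phi_c}\bra{\phi_c}$ and flip the sign with $\langle\partial_\mu\phi_a,\phi_c\rangle = -\langle\phi_a,\partial_\mu\phi_c\rangle$. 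The two arrangements buy roughly the same amount of brevity; yours is slightly more conceptual in that it separates the role of the projector's algebraic identities from the role of the chosen frame, while the paper's is a single uniform component expansion. Both are correct.
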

\begin{proof}
Dropping the dependence on $\bk$, we have that
\[ P = \left| \phi_c \right\rangle \left\langle \phi_c \right|, \quad \left\langle \phi_c, \, \phi_d \right\rangle = \delta_{cd}, \]
and therefore by definition
\begin{align*}
\big(F_{\mu \nu}\big)_{ab} & = \frac{1}{2\pi\iu} \left\langle \phi_a, \, P \left[ \partial_\mu P , \partial_\nu P \right] P \, \phi_b \right\rangle = \frac{1}{2\pi\iu} \left\langle  \phi_a, \, \partial_\mu P \, \partial_\nu P \, \phi_b \right\rangle - \big( \mu \leftrightarrow \nu \big) \\
& = \frac{1}{2\pi\iu} \left[ \left\langle \phi_a, \, \partial_\mu \phi_c \right\rangle \left\langle \phi_c, \,  \partial_\nu \phi_d \right\rangle \left\langle \phi_d, \, \phi_b \right\rangle + \left\langle \phi_a, \, \phi_c \right\rangle \left\langle \partial_\mu \phi_c, \,  \partial_\nu \phi_d \right\rangle \left\langle \phi_d, \, \phi_b \right\rangle \right. \\
& \qquad\qquad + \left. \left\langle \phi_a, \, \partial_\mu \phi_c \right\rangle \left\langle \phi_c, \, \phi_d \right\rangle \left\langle \partial_\nu \phi_d, \, \phi_b \right\rangle + \left\langle \phi_a, \, \phi_c \right\rangle \left\langle \partial_\mu \phi_c, \,  \phi_d \right\rangle \left\langle \partial_\nu \phi_d, \, \phi_b \right\rangle  - \big( \mu \leftrightarrow \nu \big) \right] \\
& = \frac{1}{2\pi\iu} \left[ \left\langle \phi_a, \, \partial_\mu \phi_c \right\rangle \, \big( \left\langle \phi_c, \, \partial_\nu \phi_b \right\rangle + \left\langle \partial_\nu \phi_c, \, \phi_b \right\rangle \big) + \left\langle \partial_\mu \phi_a, \,  \partial_\nu \phi_b \right\rangle + \left\langle \partial_\mu \phi_a, \, \phi_c \right\rangle \left\langle \partial_\nu \phi_c, \, \phi_b \right\rangle - \big( \mu \leftrightarrow \nu \big) \right]\,.
\end{align*}
Notice now that the term in round parentheses computes
\[ \left\langle \phi_c, \, \partial_\nu \phi_b \right\rangle + \left\langle \partial_\nu \phi_c, \, \phi_b \right\rangle = \partial_\nu \left\langle \phi_c , \, \phi_b \right\rangle = 0 \]
and that analogously 
\[ \left\langle \partial_\mu \phi_a, \, \phi_c \right\rangle \left\langle \partial_\nu \phi_c, \, \phi_b \right\rangle = \left\langle \phi_a, \, \partial_\mu \phi_c \right\rangle \left\langle \phi_c, \, \partial_\nu \phi_b \right\rangle \, . \]
This concludes the proof.
\end{proof}

\begin{corollary}
With $F$ as in \eqref{eqn:Berry_curv} and $A$ as in \eqref{eqn:BerryConnection}, we have
\begin{equation} \label{eqn:F=dA+AA}
F = \di A + 2 \pi \iu \, A \wedge A
\end{equation}
and the \emph{Bianchi identity}
\begin{equation} \label{eqn:Bianchi}
\di F = 2 \pi \iu \left(F \wedge A - A \wedge F \right)\,.
\end{equation}
\end{corollary}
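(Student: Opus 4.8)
The plan is to prove both identities by a direct computation in components, exploiting the explicit formula \eqref{eqn:Fmunu} for the curvature coefficients that was just established, together with the definition \eqref{eqn:BerryConnection} of the Berry connection. First I would fix an orthonormal frame $\set{\phi_a(\bk)}_{1 \le a \le m}$ for $P(\bk)$, write all forms in the coordinate co-frame $\di k_\mu$, and use the Einstein summation convention and the resolution $P = \ket{\phi_c}\bra{\phi_c}$, exactly as in the preceding Lemma. The identities \eqref{eqn:F=dA+AA} and \eqref{eqn:Bianchi} are then checked by matching the coefficients of $\di k_\mu \wedge \di k_\nu$ on both sides.

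For \eqref{eqn:F=dA+AA}, I would compute $(\di A)_{\mu\nu} = \partial_\mu A_\nu - \partial_\nu A_\mu$ starting from $2\pi\iu\,(A_\nu)_{ab} = \langle \phi_a, \partial_\nu \phi_b \rangle$. Differentiating produces a term $\langle \partial_\mu \phi_a, \partial_\nu \phi_b \rangle$ together with a second-derivative term $\langle \phi_a, \partial_\mu \partial_\nu \phi_b \rangle$; the latter is symmetric under $\mu \leftrightarrow \nu$ and hence drops out, leaving $2\pi\iu\,(\di A)_{\mu\nu,ab} = \langle \partial_\mu\phi_a, \partial_\nu\phi_b\rangle - (\mu \leftrightarrow \nu)$. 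Next, the coefficient of $A \wedge A$ is the commutator $[A_\mu, A_\nu]_{ab}$, and $(2\pi\iu)^2 (A_\mu A_\nu)_{ab} = \langle \phi_a, \partial_\mu\phi_c\rangle\langle\phi_c,\partial_\nu\phi_b\rangle$. Adding $2\pi\iu$ times this (antisymmetrized in $\mu,\nu$) to $(\di A)_{\mu\nu}$ reproduces term-by-term the right-hand side of \eqref{eqn:Fmunu}, which is the component form of \eqref{eqn:F=dA+AA}.

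For the Bianchi identity \eqref{eqn:Bianchi}, I would simply apply $\di$ to \eqref{eqn:F=dA+AA}. Since $\di \di A = 0$ and, by the graded Leibniz rule applied to the odd-degree form $A$, $\di(A \wedge A) = \di A \wedge A - A \wedge \di A$, one obtains $\di F = 2\pi\iu\,(\di A \wedge A - A \wedge \di A)$. Substituting $\di A = F - 2\pi\iu\, A \wedge A$ back in, the two cubic terms $A \wedge A \wedge A$ cancel, and what remains is precisely $\di F = 2\pi\iu\,(F \wedge A - A \wedge F)$.

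I do not anticipate a genuine obstacle: both statements are formal consequences of the definitions. The only points requiring care are bookkeeping — the symmetric second-derivative terms in $\partial_\mu A_\nu$ must be seen to cancel under antisymmetrization; matrix products must be kept in the correct order, since $A$ is matrix-valued and non-commuting; and the sign in the graded Leibniz rule for $\di(A \wedge A)$ must be tracked correctly. These are the same manipulations (graded cyclicity of the trace, the Leibniz rule, $\di^2 = 0$) that recur throughout Appendix \ref{app:3deg}.
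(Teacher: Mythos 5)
Your proof is correct and follows essentially the same route as the paper: you derive \eqref{eqn:F=dA+AA} by computing $(\di A)_{\mu\nu}$ and $(A\wedge A)_{\mu\nu}$ in components and matching against the formula \eqref{eqn:Fmunu} for $F_{\mu\nu}$ (with the symmetric second-derivative terms cancelling under antisymmetrization), and you obtain the Bianchi identity by applying $\di$ to \eqref{eqn:F=dA+AA}, using $\di^2 A = 0$ and the graded Leibniz rule, and substituting $\di A = F - 2\pi\iu\,A\wedge A$ back in so that the cubic terms cancel. This is exactly what the paper does.
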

\begin{proof}
Let us compute
\begin{align*}
\big(\di A\big)_{ab} & = \big(\partial_\mu A_\nu\big)_{ab} \, \di k_\mu \wedge \di k_\nu = \frac{1}{2} \sum_{\mu < \nu} \left( \partial_\mu A_\nu - \partial_\nu A_\mu  \right)_{ab} \, \di k_\mu \wedge \di k_\nu \\
& = \frac{1}{2} \sum_{\mu < \nu} \frac{1}{2\pi\iu} \left[ \left\langle \partial_\mu \phi_a, \, \partial_\nu \phi_b \right\rangle + \left\langle \phi_a, \, \partial_\mu \partial_\nu \phi_b \right\rangle - \left\langle \partial_\nu \phi_a, \, \partial_\mu \phi_b \right\rangle - \left\langle \phi_a, \,  \partial_\nu \partial_\mu \phi_b \right\rangle \right] \, \di k_\mu \wedge \di k_\nu \\
& = \frac{1}{2} \sum_{\mu < \nu} \frac{1}{2\pi\iu} \left[ \left\langle \partial_\mu \phi_a, \, \partial_\nu \phi_b \right\rangle - \big( \mu \leftrightarrow \nu \big) \right] \, \di k_\mu \wedge \di k_\nu\, , \\
\big(A \wedge A\big)_{ab} & = \big(A_\mu A_\nu\big)_{ab} \, \di k_\mu \wedge \di k_\nu = \frac{1}{2} \sum_{\mu < \nu} \big(\left[ A_\mu,\, A_\nu\right]\big)_{ab} \, \di k_\mu \wedge \di k_\nu \\
& = \frac{1}{2} \sum_{\mu < \nu} \left(\frac{1}{2\pi\iu}\right)^2 \, \left[ \left\langle \phi_a, \, \partial_\mu \phi_c \right\rangle \left\langle \phi_c, \, \partial_\nu \phi_b \right\rangle - \big( \mu \leftrightarrow \nu \big) \right] \, \di k_\mu \wedge \di k_\nu.
\end{align*}
By comparing the above identities with \eqref{eqn:Berry_curv} and \eqref{eqn:Fmunu}, the equality \eqref{eqn:F=dA+AA} follows at once. The latter also implies the Bianchi identity:
\begin{align*}
\di F & = \di \left( \di A + 2 \pi \iu \, A \wedge A \right) = 2 \pi \iu \, \left( \di A \wedge A - A \wedge \di A \right) = 2 \pi \iu \, \left[ \left( F - 2 \pi \iu \, A \wedge A \right) \wedge A - A \wedge \left( F - 2 \pi \iu \, A \wedge A \right) \right] \\
& = 2 \pi \iu \left(F \wedge A - A \wedge F \right)\,. \qedhere
\end{align*}
\end{proof}

\begin{proposition} \label{prop:CS}
Define the \emph{Chern--Simons $3$-form}
\begin{equation} \label{eqn:CS}
CS := \frac{1}{2} \left[ \Tr(A) \wedge \Tr(F) - \Tr \left( F \wedge A - \frac{2\pi\iu}{3} \, A \wedge A \wedge A \right) \right].
\end{equation}
Then
\begin{equation} \label{eqn:c2=dCS}
\di CS = c_2(P).
\end{equation}
Moreover, if the Bloch basis $\Psi$ is obtained from the basis $\Phi$ by a change of gauge $\gamma \in U(m)$, then
\begin{equation} \label{eqn:CS_gauge} 
CS^{(\Psi)} = CS^{(\Phi)} - \frac{1}{24\pi^2} \, \Tr \left[ \left( \gamma^{-1} \, \di \gamma \right)^{\wedge 3} \right] - \di B_\gamma, \quad \text{where} \quad B_\gamma := \frac{1}{4\pi\iu} \left[ \Tr\left(A \wedge \di \gamma \, \gamma^{-1}\right) + \Tr(A) \wedge \Tr\left(\di \gamma \, \gamma^{-1} \right) \right]\,.
\end{equation}
\end{proposition}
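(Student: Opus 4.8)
The plan is to verify the two displayed identities \eqref{eqn:c2=dCS} and \eqref{eqn:CS_gauge} by direct computation, relying throughout on the structure equation \eqref{eqn:F=dA+AA}, the Bianchi identity \eqref{eqn:Bianchi}, and the graded cyclicity of the trace \eqref{eqn:graded_cyclic}, in particular its consequence \eqref{eqn:power_odd} that $\Tr$ of any even power of an odd-degree matrix-valued form vanishes.

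For \eqref{eqn:c2=dCS} I would differentiate the three summands of $CS$ in \eqref{eqn:CS} one at a time. Two preliminary remarks streamline this: since $\Tr(A\wedge A)=0$ by \eqref{eqn:power_odd}, equation \eqref{eqn:F=dA+AA} gives $\Tr(F)=\di\Tr(A)$; and $\di\Tr(F)=\Tr(\di F)=0$ by the Bianchi identity \eqref{eqn:Bianchi} together with \eqref{eqn:graded_cyclic}. Then: (i) $\di\big[\Tr(A)\wedge\Tr(F)\big]=\Tr(F)\wedge\Tr(F)$; (ii) writing $\di\Tr(F\wedge A)=\Tr(\di F\wedge A)+\Tr(F\wedge\di A)$, substituting \eqref{eqn:Bianchi} in the first term and $\di A=F-2\pi\iu\,A\wedge A$ in the second, and simplifying with \eqref{eqn:graded_cyclic}, one obtains $\di\Tr(F\wedge A)=\Tr(F\wedge F)+2\pi\iu\,\Tr(A\wedge A\wedge F)$; (iii) $\di\Tr(A\wedge A\wedge A)=3\,\Tr(\di A\wedge A\wedge A)=3\,\Tr(A\wedge A\wedge F)$, where the last equality uses $\di A=F-2\pi\iu\,A\wedge A$ and $\Tr(A^{\wedge 4})=0$. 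Assembling (i)--(iii) with the coefficient $\tfrac{2\pi\iu}{3}$ carried by the cubic term, the two copies of $\Tr(A\wedge A\wedge F)$ cancel and one is left with $\di CS=\tfrac12\big[\Tr(F)\wedge\Tr(F)-\Tr(F\wedge F)\big]$, which is $c_2(P)$ by \eqref{eqn:c12}.

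For the gauge-transformation law \eqref{eqn:CS_gauge} I would substitute $A^{(\Psi)}=\gamma^{-1}A^{(\Phi)}\gamma+\tfrac{1}{2\pi\iu}\gamma^{-1}\di\gamma$ and $F^{(\Psi)}=\gamma^{-1}F^{(\Phi)}\gamma$ into \eqref{eqn:CS} and expand. Writing $\Theta:=\gamma^{-1}\di\gamma$ (so $\di\Theta=-\Theta\wedge\Theta$ and $\Tr(\Theta\wedge\Theta)=0$), and noting $\Tr(F^{(\Psi)})=\Tr(F^{(\Phi)})$ and $\Tr(A^{(\Psi)})=\Tr(A^{(\Phi)})+\tfrac{1}{2\pi\iu}\Tr(\Theta)$ by cyclicity, the expansion organizes into three families of terms. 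The terms built only from $A^{(\Phi)}$ and $F^{(\Phi)}$ recombine --- after using the conjugation-invariance of the trace to undo the $\gamma^{-1}(\cdot)\gamma$'s --- exactly into $CS^{(\Phi)}$. The single term built only from $\Theta$, namely the pure cubic $\tfrac12\cdot\tfrac{2\pi\iu}{3}\cdot\tfrac{1}{(2\pi\iu)^3}\Tr(\Theta^{\wedge 3})$, evaluates to $-\tfrac{1}{24\pi^2}\Tr\big[(\gamma^{-1}\di\gamma)^{\wedge 3}\big]$. The remaining mixed terms --- those carrying at least one factor of $A^{(\Phi)}$ and at least one of $\Theta$ --- collapse, after a Leibniz rearrangement and one more use of the Bianchi identity, into the total differential $-\di B_\gamma$ with $B_\gamma$ as in \eqref{eqn:CS_gauge}.

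The steps (i)--(iii) are routine; the real obstacle is the bookkeeping in the gauge computation, i.e.\ tracking the signs produced by graded cyclicity and the powers of $2\pi\iu$ while isolating the mixed terms as an exact form. A helpful guardrail is the consistency check that $\di$ of the claimed right-hand side of \eqref{eqn:CS_gauge} equals $c_2(P)$: indeed $\di CS^{(\Psi)}=\di CS^{(\Phi)}=c_2(P)$ by the first part (as $c_2$ depends only on $P$), $\Tr\big[(\gamma^{-1}\di\gamma)^{\wedge 3}\big]$ is closed by the Maurer--Cartan equation, and $\di(\di B_\gamma)=0$; this fixes the normalization of $B_\gamma$ and flags any arithmetic slip.
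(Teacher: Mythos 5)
Your proposal is correct and follows essentially the same route as the paper: compute $\di$ of each summand of $CS$ using the structure equation \eqref{eqn:F=dA+AA}, the Bianchi identity \eqref{eqn:Bianchi}, and graded cyclicity, then substitute the gauge transformation rules for $A$ and $F$ and sort the resulting terms into the $CS^{(\Phi)}$, the pure $\gamma^{-1}\di\gamma$, and the mixed pieces. The only tiny divergence is cosmetic — you treat $\Tr(F\wedge A)$ and $\Tr(A^{\wedge 3})$ separately rather than jointly, and you attribute the collapse of the mixed terms to ``one more use of the Bianchi identity'' when in fact only \eqref{eqn:F=dA+AA} and the Leibniz rule (together with $\di\gamma^{-1}=-\gamma^{-1}\di\gamma\,\gamma^{-1}$) are needed there, but neither affects the argument.
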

\begin{proof}
Let us first compute
\begin{align*}
\di \left[ \Tr(A) \wedge \Tr(F) \right] & = \Tr(\di A) \wedge \Tr(F) - \Tr(A) \wedge \Tr(\di F) = \Tr \left(\di A - 2\pi \iu \, A \wedge A \right) \wedge \Tr(F) - 2\pi\iu \, \Tr(A) \wedge \Tr(F \wedge A - A \wedge F) \\
& = \Tr(F) \wedge \Tr(F)\,.
\end{align*}
In the above chain of equalities, we have used that $\Tr(A \wedge A) = 0$ in view of \eqref{eqn:power_odd}, the Bianchi identity \eqref{eqn:Bianchi} for $\di F$, the identity \eqref{eqn:F=dA+AA}, and the graded cyclicity of the trace \eqref{eqn:graded_cyclic} to conclude that $\Tr(F \wedge A) = \Tr(A \wedge F)$. Next we can compute
\[ \di \, \Tr \left( F \wedge A - \frac{2\pi\iu}{3} \, A \wedge A \wedge A \right) = \Tr \left( \di F \wedge A + F \wedge \di A \right) - \frac{2\pi\iu}{3} \, \Tr \left( \di A \wedge A \wedge A - A \wedge \di A \wedge A + A \wedge A \wedge \di A\right)\,. \]
Notice that, in the last trace, all summands give the same contribution again by the graded cyclicity. Therefore we can proceed by applying \eqref{eqn:F=dA+AA} and \eqref{eqn:Bianchi} to get
\begin{align*}
\di \, \Tr \left( F \wedge A - \frac{2\pi\iu}{3} \, A \wedge A \wedge A \right) & = \Tr \left[ 2 \pi \iu \, F \wedge A \wedge A - 2 \pi \iu \, A \wedge F \wedge A + F \wedge \left( F - 2 \pi \iu \, A \wedge A \right) - 2\pi\iu \, A \wedge A \wedge \left(F - 2\pi\iu \, A \wedge A\right) \right] \\
& = \Tr \left[ 2 \pi \iu \, F \wedge A \wedge A + 2 \pi \iu \, A \wedge A \wedge F + F \wedge F - 2 \pi \iu \, F \wedge A \wedge A - 2\pi\iu \, A \wedge A \wedge F + (2\pi\iu)^2 \, A^{\wedge 4} \right] \\
& = \Tr(F \wedge F)\,,
\end{align*}
where we used the graded cyclicity of the trace to argue that $\Tr(A \wedge F \wedge A) = - \Tr(A \wedge A \wedge F)$ and \eqref{eqn:power_odd} to conclude that $\Tr(A^{\wedge 4})=0$. A comparison of the two identities obtained above with \eqref{eqn:c12} allows to conclude that \eqref{eqn:c2=dCS} holds.

Next we investigate the gauge dependence of the Chern--Simons form $CS$. Recall from \eqref{eqn:A_gauge} and the following equation how $A$ and $F$ change with a change of gauge $\gamma$ in the Bloch functions. With those, we are able to compute
\begin{align*}
CS^{(\Psi)} & = \frac{1}{2} \left\{ \Tr\left(\gamma^{-1} \, A \, \gamma + \frac{1}{2\pi\iu} \, \gamma^{-1} \, \di \gamma\right) \wedge \Tr\left(\gamma^{-1} \, F \, \gamma\right) - \Tr \left[ \gamma^{-1} \, F \, \gamma \wedge \left( \gamma^{-1} \, A \, \gamma + \frac{1}{2\pi\iu} \, \gamma^{-1} \, \di \gamma \right) \right] \right. \\
& \quad + \left. \frac{2\pi\iu}{3} \, \Tr \left[\left( \gamma^{-1} \, A \, \gamma + \frac{1}{2\pi\iu} \, \gamma^{-1} \, \di \gamma \right) \wedge \left( \gamma^{-1} \, A \, \gamma + \frac{1}{2\pi\iu} \, \gamma^{-1} \, \di \gamma \right) \wedge \left( \gamma^{-1} \, A \, \gamma + \frac{1}{2\pi\iu} \, \gamma^{-1} \, \di \gamma \right) \right] \right\} \\
& = CS^{(\Phi)} + \frac{1}{2} \cdot \frac{2\pi\iu}{3} \cdot \left(\frac{1}{2\pi\iu}\right)^3 \, \Tr \left[ \left(\gamma^{-1} \, \di \gamma \right)^{\wedge 3} \right] + \frac{1}{2} \cdot \frac{1}{2\pi\iu} \Tr\left( \gamma^{-1} \, \di \gamma\right) \wedge \Tr(F) \\
& \quad - \frac{1}{2} \left[ \frac{1}{2\pi\iu} \, \Tr \left( F \wedge \di \gamma \, \gamma^{-1} \right) - \Tr \left( A \wedge A \wedge \di \gamma \, \gamma^{-1} \right) - \frac{1}{2\pi\iu} \, \Tr \left( A \wedge \di \gamma \, \gamma^{-1} \wedge \di \gamma \, \gamma^{-1} \right) \right]
\end{align*}
where once again the graded cyclicity of the trace has been used repeatedly. In order to conclude the proof, let us notice that
\begin{align*}
\Tr\left( \gamma^{-1} \, \di \gamma\right) \wedge \Tr(F) & = - \Tr(\di A) \wedge \Tr\left( \gamma^{-1} \, \di \gamma\right) = - \di \left[ \Tr(A) \wedge \Tr\left( \gamma^{-1} \, \di \gamma\right) \right] - \Tr(A) \wedge \Tr\left( \di \gamma^{-1} \wedge \di \gamma\right) \\
& = - \di \left[ \Tr(A) \wedge \Tr\left( \gamma^{-1} \, \di \gamma\right) \right] + \Tr(A) \wedge \Tr \left[\left( \gamma^{-1} \wedge \di \gamma\right)^{\wedge 2} \right] = - \di \left[ \Tr(A) \wedge \Tr\left( \gamma^{-1} \, \di \gamma\right) \right]
\end{align*}
where the last equality is due to \eqref{eqn:power_odd}. Moreover, in view of \eqref{eqn:F=dA+AA},
\begin{align*}
\Tr \left[ \left( F - 2\pi\iu A \wedge A \right) \wedge \di \gamma \, \gamma^{-1} \right] - \Tr \left( A \wedge \di \gamma \, \gamma^{-1} \wedge \di \gamma \, \gamma^{-1} \right) & = \Tr \left( \di A \wedge \di \gamma \, \gamma^{-1} \right) + \Tr \left( A \wedge \di \gamma \wedge \di \gamma^{-1} \right) \\
& = \di \Tr \left( A \wedge \di \gamma \, \gamma^{-1} \right)\,.
\end{align*}
Combining all these identities together, \eqref{eqn:CS_gauge} follows.
\end{proof}

\bibliography{bibliographie}

\end{document}